\documentclass{amsart}
\usepackage[foot]{amsaddr}
\usepackage{amssymb}
\usepackage[dvips]{epsfig}
\usepackage{booktabs}
\usepackage{multirow}
\usepackage{bbm}
\usepackage{mathrsfs}
\usepackage{xltabular}
\usepackage{tabularx}
\usepackage{makecell} 
\usepackage{ragged2e} 
\usepackage{graphicx}
\usepackage{color}
\usepackage{amsmath}
\allowdisplaybreaks
\usepackage{amssymb}
\usepackage{amsfonts}
\usepackage{geometry}
\usepackage{xstring}
\usepackage{amsthm}
\usepackage[normalem]{ulem}    
\usepackage{bm}
\usepackage{cite}
\usepackage{tikz}
\usetikzlibrary{arrows.meta}

\makeatletter
\@namedef{subjclassname@2020}{\textup{2020} Mathematics Subject 
    Classification}
\makeatother

\providecommand{\vol}{\operatorname{vol}}

\providecommand{\dist}{\operatorname{dist}}

\newcommand{\Z}{\mathbb{Z}}
\newcommand{\C}{\mathbb{C}}

\newcommand{\R}{\mathbb{R}}
\newcommand{\T}{\mathbb{T}}
\newcommand{\E}{\mathbb{E}}

\newcommand{\spec}{{\rm Spec}}

\renewcommand{\tt}{\tilde{t}}

\renewcommand{\P}{\mathbb{P}}

\newcommand{\mcA}{\mathcal{A}}

\newcommand{\mcD}{\mathcal{D}}
\newcommand{\mcC}{\mathcal{C}}
\newcommand{\mcE}{\mathcal{E}}
\newcommand{\mcF}{\mathcal{F}}

\newcommand{\mcS}{\mathcal{S}}

\newcommand{\mcJ}{\mathcal{J}}
\newcommand{\mcK}{\mathcal{K}}

\newcommand{\mcL}{\mathcal{L}}
\newcommand{\mcM}{\mathcal{M}}

\theoremstyle{plain}
\newtheorem{theorem}{Theorem}[section]
\newtheorem{lemma}[theorem]{Lemma}
\newtheorem{proposition}[theorem]{Proposition}

\newtheorem{construction}[theorem]{Construction}
\theoremstyle{definition}
\newtheorem{definition}[theorem]{Definition}
\newtheorem{Claim}{Claim}[section]

\theoremstyle{remark}
\newtheorem{remark}[theorem]{Remark}

\theoremstyle{plain}

\numberwithin{equation}{section}

\usepackage[colorlinks=true,pdfstartview=FitV,linkcolor=magenta,citecolor=cyan]{hyperref}

\usepackage{caption}

\begin{document}
\title[Anderson-Bernoulli General Dimension]{Localization near the edge for the lattice Anderson-Bernoulli model on general dimension}

\author{Linjun Li}
\address{Department of Mathematics, University of Pennsylvania, PA, USA}
\email{linjun@sas.upenn.edu}

\author{Shihe Liu}
\address{School of Mathematical Sciences, Peking University, Beijing, China}
\email{2301110021@stu.pku.edu.cn}

\author{Lingfu Zhang}
\address{Division of Physics, Mathematics and Astronomy, California Institute of Technology, Pasadena, CA, USA}
\email{lingfuz@caltech.edu}

\dedicatory{Dedicated to the memory of Professor Jean Bourgain}

\maketitle

\begin{abstract}
The Anderson tight-binding model is a fundamental model of quantum transport and localization in disordered media. Completing a problem left open by Bourgain and Kenig, this paper proves Anderson localization near the bottom of the spectrum for the lattice Anderson model with Bernoulli potential, in any dimension $d\ge 2$. The proof uses the multiscale framework of Fröhlich–Spencer and Bourgain–Kenig, and the main new ingredient is a probabilistic discrete unique continuation principle (PDUC) for the discrete Schr\"odinger equation. This PDUC is established via a bootstrap argument and a key probabilistic lemma proved by adaptively revealing the
random potential.
\end{abstract}

\section{Introduction}

\subsection{Main result and background}
The \emph{Anderson tight-binding model} is a fundamental quantum mechanical framework used to describe wave propagation and electronic conduction in disordered media, such as a metal with impurities \cite{And58}.
We consider the Bernoulli potential setting, i.e., the \emph{Anderson-Bernoulli model (ABM)}, which is the random Schr\"odinger operator on $\ell^2(\Z^d)$,
\begin{equation}\label{eq:ABM-schrodinger-operator}
    H=-\Delta+\lambda V, \qquad \lambda>0,
\end{equation}
where $\Delta$ is the discrete Laplacian,
\begin{equation}\label{eq:free-Laplacian}
    (\Delta u)(x)
    =\sum_{y:\,|y-x|=1}\bigl(u(y)-u(x)\bigr),
    \qquad x\in\Z^d \footnote{Here and throughout this paper, $|\cdot|$ denotes the Euclidean distance.},
\end{equation}
and $V$ is the multiplication operator $(Vu)(x)=V_xu(x)$.
The random variables $\{V_x\}_{x\in\Z^d}$ are independent and
identically distributed, with
\begin{equation}\label{eq:Bernoulli-potential}
    \mathbb{P}(V_x=0)=\mathbb{P}(V_x=1)=\frac12,
    \qquad \forall\, x\in\Z^d.
\end{equation}
The parameter $\lambda$ is typically referred to as the \emph{disorder strength}.

A central phenomenon of the Anderson model is the so-called \emph{Anderson localization},
where waves are trapped by disorder.
To state it mathematically, recall that almost surely, the spectrum of $H$ is the deterministic set
\cite[Theorem~3.9]{Kir08}
\begin{equation}\label{eq:as-spectrum}
    \spec(H)=[0,4d]\cup[\lambda,\lambda+4d].
\end{equation}
Anderson localization can be described through both spectral
properties and the long-time behavior of wave packets.
For an energy interval $I\subset \spec(H)$, \emph{spectral localization (SL)} means that $H$ has pure point spectrum in $I$, with exponentially decaying eigenfunctions.
\emph{Dynamical localization (DL)} states the absence of spatial
spreading through uniform-in-time bounds on spatial moments of
wave packets restricted to $I$.
Such dynamical bounds imply pure point spectrum by the RAGE theorem; see \cite[Theorem~7.14]{Kir08}. Exponential decay of the eigenfunctions will follow from the multiscale estimates below.

Our main result establishes spectral localization and a form of \emph{strong dynamical localization (SDL)} near the bottom of the spectrum, with moment bounds in expectation as in~\cite{germinet2012comprehensive,RZ}.
To state it, let $P_H(I)=\mathbf{1}_I(H)$ denote the spectral projection onto $I$, let $\delta_0$ be the standard basis vector at the origin, and let $\langle X\rangle^q$ denote multiplication by $(1+|x|^2)^{q/2}$ for any $q>0$.

\begin{theorem}\label{thm:localization-near-the-edge-band}
For every $d\geq4$, there exists $p_0=p_0(d)>0$ such that, for every
$\lambda>0$, there exists $E_*=E_*(d,\lambda)\in(0,4d)$ for which
the following holds. For $H$ in \eqref{eq:ABM-schrodinger-operator}, every $b>0$ and $0<s<p_0/(b+1/2)$,
    \[
        \E\!\left[
            \sup_{t\in\mathbb{R}}
            \bigl\|\langle X\rangle^{bd}
            e^{-itH}P_H([0,E_*])\delta_0
            \bigr\|_{\ell^2(\Z^d)}^s
        \right]<\infty.
    \]
    Moreover, almost surely, $H$ has pure point spectrum in $[0,E_*]$, and every eigenfunction with eigenvalue in $[0,E_*]$ decays exponentially in space.

\end{theorem}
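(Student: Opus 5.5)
The plan is to follow the Bourgain--Kenig multiscale scheme, adapted to the lattice, and reduce Theorem~\ref{thm:localization-near-the-edge-band} to a single-scale resolvent estimate that propagates across scales. Fix $E_*$ small depending on $d,\lambda$, and work with energies $E\in[0,E_*]$. The induction hypothesis at scale $L_k$ (with $L_{k+1}\approx L_k^{C}$) states that, for every box $Q$ of side $L_k$, with probability at least $1-L_k^{-\kappa}$ (uniformly in $E$ in a small window), the Dirichlet restriction $H_Q$ satisfies $\|(H_Q-E)^{-1}\|\le e^{L_k^{1-\epsilon}}$ and $|(H_Q-E)^{-1}(x,y)|\le e^{-m_k|x-y|}$ for $|x-y|\ge L_k/10$, with $m_k$ decreasing to some $m_\infty>0$. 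At the initial scale this follows from a Lifshitz-tail argument. For $E\le E_*$ to be an approximate eigenvalue of $H_Q$, the box must contain a ball of radius $\gtrsim E_*^{-1/2}$ on which $V\equiv 0$, and this has probability $\le L_0^d 2^{-cE_*^{-d/2}}$. On the complementary event, a Combes--Thomas estimate gives exponential decay.

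The inductive step has three parts. First, on a box $Q$ of side $L_{k+1}$, cover $Q$ by boxes of side $L_k$. By independence and a counting argument, all but a sparse set of ``bad'' subboxes are good, where sparse means at most $L_{k+1}^{\delta}$ of them, well separated. Second, the key step is a Wegner-type estimate: we must show that $\|(H_{Q'}-E)^{-1}\|\le e^{L^{1-\epsilon}}$ with high probability for the mesoscopic boxes $Q'$ that contain bad regions. With Bernoulli potential there is no continuity to average over, so, as in Bourgain--Kenig, we replace the Wegner estimate by a Sperner/Lyapunov-type anticoncentration argument. Conditional on all other sites, eigenvalues of $H_{Q'}$ near $E$ move monotonically in the potentials on a set $\Theta$ of ``free sites''. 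The first-order variation equals $\lambda|\psi(x)|^2$, where $\psi$ is the eigenfunction. Anticoncentration then requires that any approximate eigenfunction $\psi$ with $\|(H_{Q'}-E)\psi\|\le e^{-L^{1-\epsilon}}$ has $|\psi(x)|\ge e^{-L^{1-\epsilon'}}$ on a positive fraction of $\Theta$. This is exactly a probabilistic discrete unique continuation statement: the PDUC. Third, given the resolvent bound, the resolvent identity combined with the good boxes yields exponential decay at scale $L_{k+1}$, with mass $m_{k+1}\ge m_k-L_k^{-c}$.

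I expect the PDUC to be the main obstacle. On $\Z^d$ it is false deterministically, since there are compactly supported discrete harmonic-like solutions. One therefore has to show that, with probability $1-e^{-L^{c}}$ over the random potential, no such localized or very small solutions exist, even conditionally on the ``frozen'' sites. My first attempt would be a bootstrap. Start from a quantitative unique continuation on tilted cones/planes, where the equation $\Delta u=(\lambda V-E)u$ lets one solve for $u$ on a diagonal hyperplane from the two preceding ones. Then reveal the potential adaptively, layer by layer. At each new site, if $u$ is small there, the choice of $V_x\in\{0,1\}$ must keep it small on the next layer. A key probabilistic lemma would show that both choices cannot simultaneously keep $u$ small unless $u$ was already small in a neighborhood. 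Iterating this over scales $\ell\mapsto\ell^{C}$ bootstraps a weak decay bound into the needed $e^{-L^{1-\epsilon}}$ lower bound on a positive-density subset. I expect the restriction $d\ge4$ to enter in the counting: the number of free sites must exceed the entropy of the bad configurations.

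Finally, the multiscale estimates give the conclusion. Exponential decay of generalized eigenfunctions follows by the standard Schnol argument, and this yields pure point spectrum. For SDL, we sum the probabilities of bad events at scale $|x|$ in the usual way (Germinet--Klein). This gives
\[
\E\Bigl[\sup_t|\langle \delta_x,e^{-itH}P_H([0,E_*])\delta_0\rangle|^s\Bigr]\le C|x|^{-p_0 s'}.
\]
The probability bound $L^{-p_0}$ then restricts $s<p_0/(b+1/2)$, which is what makes the $\langle X\rangle^{bd}$-weighted sum converge.
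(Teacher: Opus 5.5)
Your overall architecture matches the paper's: a PDUC feeding a rank-one/Sperner Wegner estimate inside a Bourgain--Kenig multiscale induction, followed by the standard passage to spectral and dynamical localization. The gap is in the PDUC itself, which you rightly call the crux. Your sketch does not establish it, and it misses its central difficulty.

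Revealing the potential layer by layer, and asking at each site whether one of the two values of $V_x$ forces $u$ to be large downstream, only makes sense when $u$ is a fixed function of the already revealed bits. That means boundary data fixed before sampling. This yields an exceptional event that depends on the datum, which is exactly the situation of \cite[Theorem~2.6]{LSZ26} and their Problem~2.7. In the Wegner estimate, however, the eigenfunctions are chosen after the whole potential is known, so you need one event valid for all solutions at once.

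A solution on a cube of side $R$ is parametrized by $\asymp R^{d-1}$ complex boundary values. A Euclidean net at the needed resolution $e^{-O(R)}$ therefore has log-cardinality $\asymp R^{d}$. A fixed-datum argument can gain at most $e^{-cn}$, where $n\lesssim R^{d-1}$ is the number of independent trials, so the union bound cannot close. Your expectation that $d\ge4$ enters through ``free sites versus entropy'' reflects this missing mechanism; the paper's argument works for every $d\ge2$. Also, the deterministic obstruction is solutions with sparse, lower-dimensional support \cite{Jit07,Li26support}. Compactly supported solutions are excluded by the cone property, which is why one needs a \emph{count} of large values.

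The paper closes the gap in three coupled steps.
\begin{enumerate}
\item \textbf{Few large boundary values.} It solves the equation from adjusted bowl boundaries, which are level sets of a height function avoiding the frozen set (Lemma~\ref{lem:min-bowl}). There are $\asymp\ell_j$ disjoint ones, so one of them carries at most $10m_{j+1}/\ell_j$ sites above the next threshold $H_{j+1}$. The remaining boundary values can be set to zero at an error below $\tau_j$, reducing the input to $p\lesssim m_{j+1}/\ell_j$ coordinates.
\item \textbf{Uniformity via records.} Uniformity over these inputs comes from Lemma~\ref{lem:propagation}, which encodes each input by its record of strip cuts of the input ball. Lemma~\ref{lem:complexstrip}, together with the common inner ball $B_{\tau/C_0}$, bounds the number of cuts by $m\lesssim p/w_j$. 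The union bound then runs over only $(N+1)^m$ records, and the resulting entropy loss $\lesssim m_{j+1}\log R/w_j$ is charged against $m_{j+1}$ itself.
\item \textbf{Quantified amplification.} Lemma~\ref{lem:many-lattice-lines} gives $(m_j/2)^{(d-1)/d}$ lines in some direction of $\mathcal V$ meeting the large set. The exact identity $Y^{(1)}-Y^{(0)}=\pm\lambda S$ on the shifted line makes each successful line produce $\asymp\ell_j$ new large values. The recursion $m_{j+1}\gtrsim (Rw_j^2/\log R)\,m_j^{(d-1)/d}$ is iterated over $O(\log\log R)$ annuli of width $\asymp Rw_j$ \emph{inside a single scale $R$}, not over scales $\ell\mapsto\ell^{C}$. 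The exponent thus climbs to $d$ while the amplitude loss stays $e^{-O(R)}$.
\end{enumerate}
Finally, the Sperner step does not need a positive fraction of free sites. A count $\gtrsim L_3^{d}/(\log L_3)^d\gg L_0^{d/2}$ in a single subcube, at threshold $e^{-O(L_3)}$, suffices.
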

Several remarks are in line.
First, prior to this work, Anderson localization for the ABM has been established for $d=1$ in \cite{CKM87}, $d=2$ in \cite{DS20}, and $d=3$ in \cite{LZ22}. 
Our proof applies in every dimension $d\ge2$.

Second, the parameter $p_0$ in the dynamical bound is governed by the probability exponent in the multiscale analysis. Rangamani and Zhu~\cite[Definition~1 and Theorem~1]{RZ} normalize this exponent through a failure probability $L^{-p_0d}$; thus the bound $R^{-\kappa_*}$ in Theorem~\ref{thm:msa-short-large} corresponds to the normalized exponent $\kappa_*/d$. We only assert the existence of a positive $p_0(d)$ here.

Third, our result also extends to $V$ being i.i.d.~ $\mathrm{Bernoulli}(p)$ for every $p\in(0,1)$.
Together with the Bernoulli decomposition result in \cite{AGKW09,germinet2012comprehensive}, this yields localization near the bottom of the spectrum for i.i.d.~potentials with any nondegenerate, compactly supported single-site distribution. Moreover, with the discrete unique continuation established in this paper, the condition $\alpha>N$ for the localization problem of the hierarchical ABM in \cite[Theorem 1.3]{LSZ26} can also be discarded.    

We next review the previous works on the Anderson model, in particular progress towards localization for the ABM.

\smallskip
\noindent\textbf{Phase diagram, early works, and regularity assumptions.}
We first review the Anderson model with a general i.i.d.~potential.
The expected phase diagram predicts localization throughout the spectrum in dimensions $d=1,2$.
For $d\ge3$, weak disorder is expected to produce delocalization in the spectral bulk and localization near the spectral edges, with transition energies called \emph{mobility edges}; whereas strong enough disorder (with regular enough potential laws) is expected to still localize the entire spectrum.
Related questions of localization, delocalization, and the phase transition have also motivated extensive work on related models, such as the Anderson model on other graphs (trees, sparse random graphs, and strips), random band matrices, and classical trajectory models such as Manhattan pinball and Lorentz mirrors~\cite{Kle98,AL25,Bou13Strip,BGV15,HMY24,LL26,Sch09,CS22,Gol22,CPSS24,Dro25,shcherbina2017characteristic,Shc20,shcherbina2019universality,bourgade2020random,YY25,ER25,XYYY24,DKYYY25,DFYYY25,Li21,Li26Mirror,EGH25}.
For Anderson model on the lattice, however, rigorous results have mainly established the localized part of this picture.

In 1D, the localization picture is essentially complete.
Following the early continuum and lattice results of \cite{goldsheid1977pure,kunz1980spectre}, Carmona, Klein, and Martinelli~\cite{CKM87} established spectral localization throughout the spectrum for nondegenerate i.i.d.~potential $V$ with a finite positive moment, including Bernoulli laws.
A key feature of the 1D model exploited is its reduction to products of transfer matrices.
For some more recent alternative proofs, see~\cite{bucaj2019localization,JZ19}.
We remark that localization has also been shown for some related 1D model, in e.g., \cite{damanik2002localization} for a continuum space operator with Bernoulli distribution, modeling alloys, and \cite{LSZ26b} for long-range models with rational hopping symbols.

Results are more limited in $d\ge 2$.
A central approach is \emph{multiscale analysis (MSA)}, introduced by Fr{\"o}hlich and Spencer \cite{FS83}.
Then with Martinelli and Scoppola, they established that localization holds at large disorder and near spectral edges for i.i.d.~random potentials with bounded probability densities~\cite{frohlich1985constructive}.
Carmona, Klein, and Martinelli~\cite{CKM87} extended these results to laws with H{\"o}lder continuous distribution functions.
Another approach is the fractional moment method of Aizenman and Molchanov~\cite{AM93}, which also requires regularity assumptions on the single-site distribution.

Removing these regularity assumptions is motivated both by physics and by the expected universality of Anderson localization.
In models of substitutional alloys, the potential naturally takes finitely many values corresponding to the constituent atomic species; a binary alloy leads to a Bernoulli potential.
More broadly, physicists regard Anderson localization as a universal phenomenon: the existence of a localized regime is expected to be independent of the precise form of the single-site distribution.
The ABM therefore provides a natural test of this universality.

\smallskip
\noindent\textbf{Framework for ABM in $d\ge 2$.}
Toward localization for the ABM, Bourgain~\cite{Bou04} developed a multiscale framework that substantially extended the methods of the 1980s.
Its key innovations included retaining free sites throughout the iteration and using Sperner's theorem to control eigenvalue concentration.
In their breakthrough work, Bourgain and Kenig~\cite{BK05} developed this approach further and proved localization near the bottom of the spectrum for the continuum ABM on $\R^d$, $d\ge2$.
Their argument exploits the continuum setting through a quantitative unique continuation principle, which supplies lower bounds on eigenfunction mass and hence on eigenvalue variation at the free sites; see also~\cite{bourgain2005anderson} for further continuum results.
Germinet and Klein~\cite{germinet2012comprehensive} subsequently extended this framework to establish strong forms of localization near the bottom of the spectrum for continuum Anderson models with arbitrary nondegenerate, compactly supported single-site distributions.

\smallskip
\noindent\textbf{Discrete unique continuation.}
It was first revealed by \cite{BK05} that the localization problem has deep connection with the unique continuation principle: the localization near the edge of spectrum for the continuum ABM on $\R^d$ is established with the help of quantitative unique continuation principle on $\R^d$ (\cite[Lemma 3.10]{BK05}).
To apply the Bourgain-Kenig framework to the standard lattice ABM, a key ingredient needed is a suitable \emph{discrete unique continuation principle (DUC)}.
However, the direct continuum analogue fails on the lattice, as illustrated by Jitomirskaya's counterexample \cite[Theorem~2]{Jit07}; see also \cite{Li26support} for constructions of solutions with lower-dimensional support.
Ding and Smart \cite{DS20} made a breakthrough by showing that a weaker estimate, controlling the number of sites where an eigenfunction is quantitatively nonzero, suffices when combined with a generalized Sperner theorem.
They also showed how a probabilistic version of this estimate can be used in the multiscale analysis, and established it on $\Z^2$, inspired by the polynomial methods of Buhovsky, Logunov, Malinnikova, and Sodin \cite{BLMS} for discrete harmonic functions.
This led to localization near the spectral edge for the two-dimensional ABM.
Li \cite{Li22} subsequently proved localization at large disorder for the symmetric Bernoulli model on $\Z^2$ outside small neighborhoods of finitely many exceptional energies.

However, extending these ideas beyond two dimensions remained difficult.
The polynomial structure used in \cite{BLMS} is specific to the planar lattice, while the random propagation argument of Ding and Smart exploits the geometry of thin tilted rectangles in two dimensions.
The later geometric proof of Bou-Rabee, Cooperman, and Ganguly \cite{BAW25}, which extends the harmonic unique continuation result to periodic planar graphs, likewise uses planarity and does not directly provide a higher-dimensional substitute.
The next progress was due to Li and Zhang \cite{LZ22}, who proved spectral-edge localization on $\Z^3$ using a deterministic DUC for arbitrary bounded potentials.
Their proof exploits pyramidal and tetrahedral geometry to reduce key estimates to two-dimensional triangular lattices, where approximate polynomial structure and a Remez inequality become available.
Hurtado \cite{Hur26a,Hur26b} further extended localization near the bottom of the spectrum in 2D and 3D to independent, but not necessarily identically distributed, uniformly bounded potentials. Apart from the works above, we mention the work of Imbrie \cite{Imb21}, which proved the localization for Anderson models on $\Z^d$ with a discrete distribution taking $N$ values (for $N$ sufficiently large) at
large disorder.

Recent work has made further progress on DUC in higher dimensions.
Krymskii \cite{Kry24} obtained a lower bound, growing logarithmically with the ambient dimension, on the dimension of the support of a nontrivial solution of the discrete Schr\"odinger equation.
Liu, Shi, and Zhang \cite{LSZ26} pursued a probabilistic approach, proving lower bounds of full dimensional order on the number of sites where a solution is quantitatively nonzero for each prescribed initial datum.
The exceptional event in their estimate depends on that datum, and \cite[Problem~2.7]{LSZ26} asks for an event valid simultaneously for all initial data.
On the deterministic side, Li \cite{Li26support} proved a dimension-reduction principle for support cardinality on finite lattice boxes, yielding a nearly sharp bound in dimension four.
Li \cite{Li26} then established sharp support-cardinality bounds for functions satisfying oriented-simplex relations, using a Pascal uncertainty principle and constructing examples that attain the growth exponent.
Extending Li's approach quantitatively, Liu, Shi, and Zhang \cite{LSZ26quc} obtained corresponding large-value estimates on simplices that remain valid under small errors in the defining relations.

\subsection{Proof strategy}\label{ssec:iop}
Our proof follows the multiscale framework developed in
\cite{FS83,frohlich1985constructive,Bou04,BK05,DS20,LZ22}.
As already alluded to, the main new input needed is a version of DUC.
More precisely, for $Q_R=[-R,R]^d\cap\Z^d$, the necessary DUC coming from the generalized Sperner argument of \cite{DS20} is an estimate of the form
\begin{equation}\label{eq:DUCneed}
 \left|\left\{x\in Q_R:
 |u(x)|\ge e^{-CR^\alpha}|u(0)|\right\}\right|
 \ge cR^\beta,
\end{equation}
for any $u:\Z^d\to \C$ satisfying $Hu=Eu$ for some $E$, with arbitrary bounded and deterministic potential $V$.
Here the exponents $\alpha$ and $\beta$ need to satisfy
\[
 \alpha(\alpha-1)<\frac\beta d-\frac12,
 \quad \alpha\ge1.
\]
See also \cite{LSZ26quc}.
Especially, when $\beta=d$, one would need $\alpha<\frac{1+\sqrt 3}{2}$, which is exactly the threshold pointed out by the remark in \cite[Section 5]{BK05}.

The sparse solutions of \cite{Jit07,Li26support} impose $\beta\le\lceil d/2\rceil$, thus \eqref{eq:DUCneed} cannot hold in even dimensions, and in odd dimensions it leaves only a narrow range of $\alpha$ near $1$.
Indeed, this has only been proved in 3D with $\alpha=1$ and $\beta>3/2$ \cite[Theorem~1.3]{LZ22}, leading the the 3D ABM localization.
The later deterministic DUC results discussed above do not yet provide the estimate as required by \eqref{eq:DUCneed}.
More recently, in the above mentioned work \cite{LSZ26quc}, \eqref{eq:DUCneed} is proved for  $\alpha=2$ and $\beta=\lceil d/2\rceil$. This still fails the above condition, so these estimates do not supply the quantitative bounds needed for localization in higher dimensions.

In this paper, we instead prove a \emph{probabilistic discrete unique continuation principle (PDUC)} with $\alpha=1$ and a count of order $R^{d-o(1)}$.
More precisely, fix an energy $E_0$ in a compact interval $I$ and a growth constant $K\ge0$, and put $\gamma=(d-1)/d$.
Modulus some free sites arguments, Theorem~\ref{thm:duc-linear} below essentially says that there exists an event of probability at least
\[
 1-\exp(-c( R/\log R)^\gamma)
\]
on which every nonzero local solution $u:Q_R\to\C$ of $Hu=Eu$ on $Q_{R-1}$ satisfying
\[
 E\in I,\qquad |E-E_0|\le e^{-aR},\qquad
 \|u\|_{\infty,Q_R}\le e^{KR}|u(0)|
\]
has
\[
 \left|\left\{x\in Q_R:
 |u(x)|\ge e^{-CR}|u(0)|\right\}\right|
 \ge \frac{cR^d}{(\log R)^d}=R^{d-o(1)}.
\]
The constants $a,C,c>0$ depend on the fixed parameters.
This provides every fixed $\beta<d$ in \eqref{eq:DUCneed} for large $R$.
The essential point is that the event holds for all these solutions at once; as a comparison, see \cite{LSZ26} for PDUC with a certain fixed boundary condition.

\noindent\textbf{Bootstrap argument.}
The main strategy here is a bootstrap on the number of large values.
The cone property (see e.g., \cite[Section 2]{LZ22} or Proposition \ref{prop:cone-property} below) supplies an initial set of $R^{1-o(1)}$ large values.
Normalize $u(0)=1$ and suppose that, in an inner cube $Q_r$, we have a set
\[
 S=\{x\in Q_r:|u(x)|\ge e^{-CR}\}.
\]
Take $r$ to be a small fixed multiple of $R$, leaving room in $Q_R\setminus Q_r$ for the construction below.
The goal is to produce many values above a smaller threshold, so that
\[
 S'=\{x\in Q_R:|u(x)|\ge e^{-C'R}\},\qquad C'>C,
\]
has size of order $R|S|^\gamma$, for $\gamma=(d-1)/d$.
The two factors have a simple origin: geometry supplies order $|S|^\gamma$ source lines, and random propagation supplies order $R$ large target values per successful line.

For the first factor, Lemma~\ref{lem:many-lattice-lines}, based on the discrete Loomis-Whitney inequality, gives a direction in
\[
 \mathcal V=\{e_1+e_2,e_1-e_2,e_1+e_3,\ldots,e_1+e_d\}
\]
such that at least $(|S|/2)^\gamma$ parallel lattice lines meet $S$.
By symmetry, consider $\nu=e_1+e_2$ and write these lines as
\[
 L(\kappa,\zeta)=\{(\kappa+a,a,\zeta):a\in\Z\},
 \qquad \zeta\in\Z^{d-2}.
\]
Enumerate all such lines meeting $Q_r$, first by increasing $\kappa$ and then lexicographically by $\zeta$.
Their intersections with $Q_r$ are the source blocks $\mathcal J_f$; a block is large when it meets $S$.
For the block on $L(\kappa_f,\zeta_f)$, put its target set $\mathcal K_f$ on the shifted line $L(\kappa_f+1,\zeta_f)$ beyond the forward exit from $Q_r$.
We can choose order $R$ targets per block, with all targets distinct and contained in $Q_R$.
This is the source--target geometry of Construction~\ref{construction:fc-system} below.

To view the source and target values as functions of common initial data, use a bowl boundary 
\[
\mathfrak B=T_s=\bigl(\{-s-1,-s\}\times\Omega_\Sigma\bigr)\cup\bigl(\{-s,\ldots,s-1\}\times\partial^+\Omega_\Sigma\bigr).
\]
where $\Omega_\Sigma=[-s,s]^{d-1}\cap\mathbb Z^{d-1}$ 
and 
\[\partial^+\Omega_\Sigma=\left\{ z\in \Z^{d-1}\setminus \Omega_\Sigma: \exists \ z'\in \Omega_\Sigma \ {\rm such\ that } \  |z-z'|_1=1  \right\}. \]
Here $s$ is taken to be a suitable fixed multiple of $R$, with $r<s<R$.
With given $u|_{\mathfrak B}$, the Schr\"odinger equation inside $Q_s$ can be solved by propagating the data in the $e_1$ direction.
Thus, by taking all source and target sets inside the cube $Q_s$, once the potential is given, all source and target values are linear functions of the boundary data $u|_{\mathfrak B}$, with coefficients bounded exponentially in $R$.

For the moment fix the boundary data in advance, and condition on the potentials outside $Q_r$.
For the block with index $f$,
let $S_{f,j}$ and $Y_{f,k}$ denote the reconstructed source and target values, as in Definition~\ref{def:linear-martingale-system} below.
The ordering of the blocks ensures that sources in block $f$ depend only on earlier blocks, while its targets depend only on earlier blocks and block $f$ itself.
Changing the Bernoulli bit at one source from $0$ to $1$ gives the exact identity
\[
 Y_{f,k}^{(1)}-Y_{f,k}^{(0)}=\pm\lambda S_{f,j}
\]
for every target of that block.
Consequently, if $|S_{f,j}|\ge\delta$, one choice of that bit makes at least half the targets have modulus at least $\lambda\delta/2$.
After exposing the other bits in the block, this choice still has probability at least $1/2$.
Processing the blocks in order therefore gives a conditional success estimate, and an exponential supermartingale controls the number of unsuccessful large blocks.
For fixed boundary data, this explains the amplification $|S|\mapsto R|S|^\gamma$.

\smallskip
\noindent\textbf{Uniform over boundary data: choice of bowl and propagation lemma.}
So far we have seen that largeness can be propagated from the source block $\mathcal J_f$ to the target set $\mathcal K_f$ with high probability, given the boundary data $u|_{\mathfrak B}$.
An estimate for each input fixed before sampling, as in \cite{LSZ26}, cannot be applied directly to an eigenfunction chosen after the potential is known.
Therefore, the main difficulty is to make this probabilistic estimate hold simultaneously over the boundary data.
We overcome this through two reductions.

First, vary the bowl radius over an interval of length comparable to $R$.
Taking appropriately spaced radii gives order $R$ disjoint bowl boundaries, each of which reconstructs the same source and target sites.
Since $S'$ counts all values above the new threshold in $Q_R$, one of these boundaries satisfies
\[
 \mathcal A=\mathfrak B\cap S',\qquad
 p=|\mathcal A|\lesssim \frac{|S'|}{R}.
\]
If many large values were present on every boundary, the desired lower bound on $|S'|$ would already follow.
Otherwise, retain only the data $v=u|_{\mathcal A}\in\C^p$ and set the other boundary values to zero.
With suitable separation of the thresholds, the reconstruction error is smaller than the target values produced by propagation.
The relevant input dimension is therefore reduced to $p$.

Second, we make propagation uniform over these $p$ input coordinates using a key propagation lemma, i.e., Lemma~\ref{lem:propagation} below.
To obtain a single event valid for all inputs $v$, we basically need a union bound. A standard sufficiently fine net argument would be too expensive here.
Instead, we classify inputs by \emph{records} produced by a cut procedure.
More precisely, fix a potential realization and an input $v$, and scan the source blocks and target sets in a fixed order.
Whenever a functional is small at $v$ but has large supremum on the current input body, cut the body by the strip where that functional is small.
We call the ordered list of labels at which these cuts are made the \emph{record} of $v$ for the chosen realization.
Each cut retains the input and a common small ball, while shrinking the volume.
Volume comparison bounds the number of cuts by a budget $m$, which is of order $p$ for the single-scale parameters considered here.
With $N$ source and target labels, there are at most $(N+1)^m$ possible records.

Now for each given record, i.e., a list of labels of cuts, a conditional success argument gives an exponential tail $e^{-t}$ for its propagation loss, with tail parameter $t\ge0$, regardless of the input $v$ producing the record.
The key union bound is therefore
\[
 \P[\text{at least one record violates its propagation bound}]
 \le (N+1)^m e^{-t}.
\]
Every possible input $v$ is covered by a record, so the resulting estimate is simultaneous over all inputs.
Since $N$ is polynomial in $R$, then the probability above is made small whenever $t$ is order at least $p\log R$.
By taking into account this and all the other error terms, we get the desired one-step gain
\[
 |S'|\ge R^{1-o(1)}|S|^\gamma.
\]

Finally, we repeat this amplification.
If $|S|=R^\beta$, one step raises the exponent to $1+\gamma\beta-o(1)$.
Ignoring the vanishing losses, the exponents follow
\[
 1\longmapsto 1+\gamma\longmapsto 1+\gamma+\gamma^2
 \longmapsto\cdots\longrightarrow\frac1{1-\gamma}=d.
\]
Repeating the argument until the remaining exponent deficit tends to zero yields $R^{d-o(1)}$ large values.
The detailed proof coordinates the scales and thresholds so that the total loss in amplitude remains $e^{-CR}$, and obtains the sharper count $cR^d/(\log R)^d$ stated above.
The resulting uniform PDUC estimate supplies the eigenfunction bounds for the rank-one perturbation and generalized Sperner argument in Section~\ref{sec:WegnerMSA}.
This gives the Wegner estimate, closes the multiscale induction near the spectral bottom, and yields localization.

\subsection{Discussion on the use of AI}
The proof was developed through sustained, highly interactive exchanges
between the authors and Large Language Models (LLM).
Experience with the ABM and related problems led the authors to expect that combinatorial structures on the lattice, such as the pyramid in~\cite{LZ22}, would play a central role in establishing the desired DUC or PDUC.
This motivated their use of LLM to help identify and explore such structures.
The authors guided the direction of
the investigation, while LLM contributed to the exploration of
possible approaches and the discovery of proofs of intermediate
results. Throughout this iterative process, the authors examined and
refined the proposed arguments, developing their understanding of the
proof as the investigation progressed. The proofs are digested and written by the authors, but LLM are used for improving expository, literature review, and generating figures. The authors take full responsibility for the
mathematical content and correctness.

The broad outline of our research was as follows. With the assistance of LLM, we sought to establish a suitable DUC with the form \eqref{eq:DUCneed}.
\begin{itemize}
    \item Initially, following the simplex structure in \cite{BLMS,DS20,LZ22}, we aimed to prove a deterministic DUC. This led to a support version of the result \cite{Li26} and its corresponding quantitative version \cite{LSZ26quc}.
    \item After observing that the deterministic result was already sharp and insufficient for the localization argument, we shifted our focus to a probabilistic version of the DUC. Using the simplex structure, again with the assistance of an LLM, the strongest bound we obtained using that approach is \(a=\frac{1+\sqrt{3}}{2}, b\sim \frac{3}{4}d\), but this was still insufficient for the localization argument.
    \item The key breakthrough came when we independently introduced a maximum condition 
    \[|u(0)|\geq e^{-K R^2}\| u\|_{\infty,Q_R} ,\] 
    and used the relevant structure from \cite{LSZ26} instead of the simplex structure to establish, under this condition, a probabilistic DUC with \(a=2, b=d-\).
    \item Finally, after understanding and synthesizing the relevant facts and developments, we independently realized that the probabilistic DUC could be improved to \(a=1, b=d-\) , ultimately yielding the DUC result presented in this paper.
\end{itemize}

\smallskip
\noindent\textbf{Organization of the remaining text.}
In Section \ref{sec:linear-martingale-system}, we will introduce a probabilistic system, which we call the linear-martingale system. This system provides an abstract description of how solutions of the Schr\"odinger equation propagate.
Then we prove the propagation lemma in terms of the linear martingale system. 
In Section \ref{sec:DUC}, we establish the key PDUC through the outlined bootstrap strategy. 
In Section \ref{sec:WegnerMSA}, we feed the PDUC into the existing framework to establish the Wegner estimate, and carry out the multiscale analysis, which implies localization.
In the end, we provide Appendix \ref{appendix:uniform-duc}, strengthening the PDUC to be uniform about the disorder strength $\lambda$ and energy $E$, using the deterministic DUC from \cite{LZ22,LSZ26quc} as input for the bootstrap arguments. We record this stronger PDUC as a result of independent interest, and it not used in the proof of the localization result Theorem \ref{thm:localization-near-the-edge-band}.

\smallskip

\noindent\textbf{Acknowledgements.}
LZ is supported by NSF grant DMS-2505625 and a Sloan Fellowship.
SL is currently a PhD student at Peking University, where he is working on this problem under the supervision of Zhifei Zhang, with support from the grant NSFC (12288101).
LL began working on this problem during his PhD studies at the University of Pennsylvania. His University of Pennsylvania affiliation reflects the institution at which this work was initiated and a substantial part of it was carried out.
The authors are grateful to Yunfeng Shi for his sustained interest in and support of this project, as well as for many helpful suggestions.
LL and LZ are grateful to Jian Ding for introducing them to this topic and for his continued support, and to Charles Smart for helpful discussions.
SL thanks Zhifei Zhang for his guidance and unwavering encouragement throughout this work.
LZ thanks Ahmed Bou-Rabee, Carlos Kenig, Eugenia Malinnikova, Hong Wang, and Ruixiang Zhang for many valuable discussions about this problem over the years.

\section{The linear-martingale system}\label{sec:linear-martingale-system}

In this section, we introduce an abstract linear structure to characterize the level-by-level propagation of solutions in our subsequent study of unique continuation in Section \ref{sec:DUC}. The readers are encouraged to refer to Construction \ref{construction:fc-system} below for the motivation of this structure. We observe that, in a certain sense, this structure bears some resemblance to the \textit{site-mixed martingale} in \cite[Appendix A]{LSZ26}.

\begin{definition}
\label{def:linear-martingale-system}
Fix integers $n_{\mathrm{blk}}\ge1$ and integers $r_f,b_f\ge1$ for indices $1\le f\le n_{\mathrm{blk}}$.
For each $1\leq f\leq n_{\mathrm{blk}}$, set
\[
 \mathcal J_f=\{(f,j):1\le j\le r_f\},
\]
which we call the $f$-th \textit{source block}. We equip each source block $\mathcal J_f$ with a \textit{target set}
\[ \mathcal{K}_f=\{(f,k)_{\rm tag}:1\leq k\leq b_f\},\]
and each element $(f,k)_{\rm tag}$ in $\mathcal{K}_f$ is called a \textit{target}. We package the two corresponding objects together and denote it by 
\[\mathcal{M}=\{(\mathcal J_f,\mathcal K_f):1\leq f\leq n_{\mathrm{blk}}\}.\]

Let $p\geq 1$ be an integer, and  $c_0>0,C_0>0$ be fixed constants. Assume $K\subset \C^p$ and $(\Omega,\P)$ is a probability space. 
We say $\mathcal{M}$ is a \textit{$(c_0,C_0)$-linear-martingale system} in $(\Omega,\P)$ with \textit{input body}
$K$, if it has following properties:

\begin{item}
\item[(1)] \textbf{\emph{Bits and words.}}
For each $1\leq f\leq n_{\mathrm{blk}}$ and each $(f,j)\in \mathcal J_f$, it is equipped with i.i.d. random variables $\omega_{f,j}$ in $(\Omega,\P)$. Their common distribution is 
\[
 P(\omega_{f,j}=0)=P(\omega_{f,j}=1)=1/2, \quad \forall \ 1\leq f\leq n_{\mathrm{blk}}, \ (f,j)\in \mathcal{J}_f.
\]
Equivalently, if we denote 
\[ \mathcal J=\bigcup_{f=1}^{n_{\mathrm{blk}}}\mathcal J_f,\]
we can view $\Omega=\{0,1\}^{\mathcal J}$ and $\mathbb P$ as the uniform distribution on the Boolean cube $\Omega$.

The random variables $\omega_{f,j}$ are called \textit{bits}, and
$\omega_f=(\omega_{f,1},\ldots,\omega_{f,r_f})$ is the \textit{bit sequence}
of source block $\mathcal J_f$. A complete configuration $\omega\in\Omega$ is called
a \textit{word}. Write 
\[\omega_{<f}=(\omega_1,\omega_2,\cdots,\omega_{f-1}),\] 
\[\omega_{\leq f}=(\omega_1,\omega_2,\cdots,\omega_{f}),\] 
for the restrictions of word $\omega$ to the corresponding indices, and define the filtration of $\sigma$-algebras by
\[
 \mathscr F_0=\{\varnothing,\Omega\},
 \qquad
 \mathscr F_f
 =\sigma(\omega_{\leq f}),1\leq f\leq n_{\mathrm{blk}}.
\]

\item[(2)] \textbf{\emph{Inputs, sources, sinks, and fans.}}
An \textit{input} is a vector $v=(v_1,\ldots,v_p)\in K$ which will be used to determine the value of some complex-linear functionals related to the system $\mathcal{M}$. For each element $(f,j)$ of $\mathcal J$ (resp. each target $(f,k)_{\rm tag}$), we equip it with a complex-linear functional $S_{f,j}$ (resp. $Y_{f,k}$) about $v$, called a \textit{source} (resp. \textit{sink}). 
They can be represented by
\begin{align}\label{eq:source}
 S_{f,j}(\omega_{<f};v)
 &=\sum_{\nu=1}^p s_{f,j,\nu}(\omega_{<f})v_\nu, \qquad 1\le j\le r_f,
\end{align}
\begin{align}\label{eq:sink}
Y_{f,k}(\omega_{\le f};v)
 &=\sum_{\nu=1}^p y_{f,k,\nu}(\omega_{\le f})v_\nu,\qquad 1\le k\le b_f.
\end{align}
The above two formulae indicate that the source $S_{f,j}$ (and its coefficients $s_{f,j,\nu}$) only depends on $\omega_{<f}$, i.e. is $\mathscr F_{f-1}$measurable. Especially, when $f=1$, the sources $S_{1,j}(v)$ depend on empty word, which is to say they are independent of $\omega$ and deterministic. Respectively, the sink $Y_{f,k}$ and its coefficients $y_{f,k,\nu}$ only depend on $\omega_{\leq f}$ and are 
 $\mathscr F_f$-measurable. 
The ordered sequence of the sinks 
\[
 \bigl(Y_{f,1}(\omega_{\le f};\cdot),\ldots,
         Y_{f,b_f}(\omega_{\le f};\cdot)\bigr)
\]
is called the \textit{fan} of source block $f$.  See Figure \ref{fig:fan-input-body} for an illustration.

\begin{figure}[htbp]
\centering
\begin{tikzpicture}[
    x=1cm, y=1cm,
    line cap=round,
    line join=round,
    target red/.style={
        draw=red!65!black,
        line width=0.9pt
    },
    target green/.style={
        draw=green!45!black,
        line width=0.9pt
    },
    target blue/.style={
        draw=blue!65!black,
        line width=0.9pt
    }
]
    \path[rotate=28, fill=black!3]
        (0,0) ellipse [x radius=2.9, y radius=1.05];

    \draw[target red]
        (-3.7,1.65) -- (3.7,-1.65)
        node[right=3pt, text=red!65!black]
        {$Y_{f,k_1}$};

    \draw[target green]
        (4.1,1.5) -- (-4.1,-1.5)
        node[left=3pt, text=green!45!black]
        {$Y_{f,k_2}$};

    \draw[target blue]
        (0,-2.8) -- (0,2.8)
        node[above=3pt, text=blue!65!black]
        {$Y_{f,k_3}$};

    \draw[rotate=28, black, line width=0.95pt]
        (0,0) ellipse [x radius=2.9, y radius=1.05];

    \fill[black] (1.4,1.02) circle[radius=1.6pt]
        node[above right=2pt] {$v$};

    \node at (2.95,0.50) {$K$};
\end{tikzpicture}
\caption{An illustration of a real two-dimensional input body
$K$, with an input $v$ and three target forms from the same fan.
The colored lines illustrate the kernels of nonzero real-linear sinks. In the complex case, one can also obtain an approximate understanding from the schematic diagram for the real case by considering either the real or imaginary part.}
\label{fig:fan-input-body}
\end{figure}

\item[(3)] \textbf{\emph{Propagation condition with $c_0$.}}
Fix a $(f,j)\in \mathcal{J}$ and a configuration 
\[
 \widehat\omega_{f,j}
 =\bigl(\omega_{<f},(\omega_{f,h})_{h\ne j}\bigr).
\]
For $e\in\{0,1\}$, since $Y_{f,k}$ depends only on $\omega_{\leq f}=(\widehat\omega_{f,j},\omega_{f,j})$, we can let
$Y_{f,k}^{(e)}(\widehat\omega_{f,j};\cdot)$ denote the target form
obtained by setting $\omega_{f,j}=e$ and keeping all the other bits in $\omega_{\leq f}$ unchanged. For every such configuration $\widehat\omega_{f,j}$ and every target $(f,k)_{\rm tag}$, there
exists a scalar $c_{f,j,k}(\widehat\omega_{f,j})$, independent of $v$,
such that
\begin{equation}
\label{eq:propagation-condition}
 \begin{aligned}
 Y_{f,k}^{(1)}(\widehat\omega_{f,j};v)
 -Y_{f,k}^{(0)}(\widehat\omega_{f,j};v)
 &=c_{f,j,k}(\widehat\omega_{f,j})S_{f,j}(\omega_{<f};v), \qquad |c_{f,j,k}(\widehat\omega_{f,j})|&\ge c_0.
 \end{aligned}
\end{equation}
The first equality is an identity of complex-linear functionals:
it holds for every $v\in K$.

\item[(4)] \textbf{\emph{Uniform coefficient bounds with $C_0$.}}
Writing $\mathbf s_{f,j}$ and $\mathbf y_{f,k}$ (there are indeed vectors in $\C^p$) for the respective
coefficient rows in \eqref{eq:source} and \eqref{eq:sink}, we require that their $\ell^2$-norms are deterministically bounded by $C_0$, i.e.,
\begin{equation}\label{eq:coefficient-bound-condition}
   \sup_{\omega\in\Omega}\max_{(f,j)}
 \|\mathbf s_{f,j}(\omega_{<f})\|_2\le C_0,
 \qquad
 \sup_{\omega\in\Omega}\max_{(f,k)_{\rm tag}}
 \|\mathbf y_{f,k}(\omega_{\le f})\|_2\le C_0.
\end{equation}

\end{item}

\end{definition}

\begin{remark}
  We remark that all source blocks, their orders, the targets, and the
coefficient maps are fixed before the word (or the randomness) is revealed.
In section \ref{sec:DUC}, we will give an one-to-one correspondence between targets and some sites in $\Z^d$, which will also be fixed before the word is revealed.

\end{remark}

\subsection{Volume estimate for a convex set cut by a strip}

A set $K\subset\C^p$ is called \textit{circled} if $e^{i\theta}K=K$ for every
$\theta\in\R$.  We also denote Lebesgue volume on $\C^p\simeq\R^{2p}$ by
$\vol_{2p}$.  The closed Euclidean ball in $\C^p$ is denoted by 
\[B_\rho=\{w\in\C^p:\|w\|_2\le\rho\}.\] 
If $K$ is compact and $L:\C^p\to\C$ is complex-linear,
we define the \textit{width} of $K$ under $L$ as
\begin{equation}\label{eq:fan-width}
 w_K(L)=\max_{v\in K}|L(v)|.
\end{equation}

We now state a lemma to estimate the change of volume when the input body $K$ of a linear-martingale system is cut by a strip. This process will happen when $K$ has a large width under a source, but the actual input makes the source small, so that we can cut a fairly large portion of $K$ while still retain the actual input in the remaining part. The following lemma is to estimate the loss of volume during such process. To understand the lemma, one can  see Figure \ref{fig:complex-central-strip} for an illustration. 

\begin{figure}[htbp]
\centering
\begin{tikzpicture}[
  scale=0.95,
  line cap=round,
  line join=round,
  >=Latex,
  every node/.style={font=\small},
  strip boundary/.style={
    red!65!black,
    line width=0.85pt
  },
  support/.style={
    blue!55!black,
    densely dashed,
    line width=0.65pt
  },
  measure/.style={
    {Latex[length=1.3mm,width=1mm]}-{Latex[length=1.3mm,width=1mm]},
    line width=0.75pt
  }
]
  \def\stripmajor{3.2}
  \def\stripminor{1.05}
  \def\striptilt{20}
  \def\striptau{0.40}

  \pgfmathsetmacro{\stripwidth}{
    sqrt((\stripmajor*sin(\striptilt))^2
        +(\stripminor*cos(\striptilt))^2)
  }

  \begin{scope}[rotate=10]

    \draw[support]
      (-4.25,\stripwidth) -- (4.25,\stripwidth);
    \draw[support]
      (-4.25,-\stripwidth) -- (4.25,-\stripwidth);

    \path[fill=black!3,rotate=\striptilt]
      (0,0) ellipse[
        x radius=\stripmajor,
        y radius=\stripminor
      ];

    \begin{scope}

      \clip[rotate=\striptilt]
        (0,0) ellipse[
          x radius=\stripmajor,
          y radius=\stripminor
        ];

      \clip
        (-4.3,-\striptau) rectangle (4.3,\striptau);

      \fill[red!9]
        (-4.3,-\striptau) rectangle (4.3,\striptau);

      \foreach \hatchindex in {-22,...,22} {
        \draw[red!45,line width=0.35pt]
          ({0.24*\hatchindex},\striptau)
          --
          ({0.24*\hatchindex+2*\striptau},-\striptau);
      }

    \end{scope}

    \draw[line width=0.8pt]
      (-4.35,0) -- (4.45,0)
      node[right] {$\ker L$};

    \draw[strip boundary]
      (-4.25,\striptau) -- (4.25,\striptau);
    \draw[strip boundary]
      (-4.25,-\striptau) -- (4.25,-\striptau);

    \draw[line width=1pt,rotate=\striptilt]
      (0,0) ellipse[
        x radius=\stripmajor,
        y radius=\stripminor
      ];

    \draw[measure,blue!55!black]
      (-3.90,0) -- (-3.90,\stripwidth)
      node[
        midway,
        left=4pt,
        fill=white,
        inner sep=2pt
      ] {$w_K(L)$};

    \draw[measure,red!65!black]
      (3.85,0) -- (3.85,\striptau)
      node[
        midway,
        right=3pt,
        fill=white,
        inner sep=1pt
      ] {$\tau$};

    \node[fill=black!3,inner sep=2pt]
      at (1.35,1.03) {$K$};

    \node[
      red!65!black,
      anchor=north
    ] (striplabel) at (1.45,-2.00)
      {$K\cap\{|L|\le\tau\}$};

    \draw[->,red!65!black,line width=0.75pt]
      (striplabel.north)
      to[out=110,in=-65] (0.75,-0.15);

  \end{scope}
\end{tikzpicture}

\caption{
A illustration of the estimate in Lemma \ref{lem:complexstrip}.
}
\label{fig:complex-central-strip}
\end{figure}

\begin{lemma}\label{lem:complexstrip}
Let $p\ge1$, let $K\subset\C^p$ be a compact convex circled set with
nonempty interior, and let $L$ be a nonzero complex-linear functional.
For every $\tau>0$,
\begin{equation}\label{eq:complexstrip}
 \frac{\vol_{2p}\bigl(K\cap\{|L|\le\tau\}\bigr)}{\vol_{2p}(K)}
 \le p(2p-1)\left(\frac{\tau}{w_K(L)}\right)^2.
\end{equation}
\end{lemma}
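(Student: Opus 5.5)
The plan is to disintegrate the volume of $K$ along the level sets of $L$ and use Brunn's concavity principle for sections of a convex body. After rescaling $L$ (which rescales $\tau$ and $w_K(L)$ by the same factor, so the ratio in \eqref{eq:complexstrip} is unchanged), write $L(v)=\langle v,a\rangle$ with $\|a\|_2=1$. Then $L:\C^p\to\C\simeq\R^2$ is an orthogonal projection onto a real $2$-plane. For $z\in\C$ set
\[
 g(z)=\vol_{2p-2}\bigl(K\cap\{L=z\}\bigr),
\]
so that by Fubini $\vol_{2p}(K)=\int_\C g\,dA$ and $\vol_{2p}(K\cap\{|L|\le\tau\})=\int_{|z|\le\tau}g\,dA$.

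\textbf{Step 1 (geometry of the image and symmetry of $g$).} Because $K$ is compact, convex and circled, $L(K)$ is a compact convex circled subset of $\C$. Such a set is exactly the closed disc $\{|z|\le w\}$ with $w=w_K(L)$. Since $e^{i\theta}K=K$ and $L(e^{i\theta}v)=e^{i\theta}L(v)$, the map $v\mapsto e^{i\theta}v$ is a unitary map of $\C^p$ sending the fibre over $z$ onto the fibre over $e^{i\theta}z$. Hence $g$ is radial: $g(z)=g(|z|)$.

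\textbf{Step 2 (Brunn concavity and the two bounds).} First assume $p\ge2$. By Brunn's theorem, $g^{1/(2p-2)}$ is concave on the disc $\{|z|\le w\}$.
\begin{itemize}
\item Since $g$ is radial, $z$ is the midpoint of $z$ and $-z$, so concavity gives $g(0)\ge g(z)$ for every $z$. Therefore the strip has volume at most $\pi\tau^2 g(0)$.
\item Concavity along the segment from $0$ to a boundary point $z_0$ with $|z_0|=w$, together with $g\ge0$ there, gives $g(z)\ge g(0)(1-|z|/w)^{2p-2}$.
\end{itemize}
The interior of $K$ is nonempty, so $0$ is an interior point (by circledness and convexity) and $g(0)>0$. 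Integrating the lower bound in polar coordinates gives
\[
 \vol_{2p}(K)\ \ge\ 2\pi w^2 g(0)\int_0^1(1-t)^{2p-2}t\,dt
 \ =\ \frac{2\pi w^2 g(0)}{(2p-1)\,2p}.
\]
Dividing the two bounds yields exactly $p(2p-1)(\tau/w)^2$, which is \eqref{eq:complexstrip}.

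\textbf{Step 3 (the case $p=1$).} When $p=1$, $K$ is itself a disc of radius $w/|a|$, $L$ is multiplication by a scalar, and the ratio is at most $(\tau/w)^2$. This equals the bound with $p(2p-1)=1$.

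The only point needing care is the use of Brunn's principle. One must check that it applies to sections by parallel codimension-two affine subspaces (the real codimension of the fibres is $2$), with exponent $1/(\dim\text{fibre})=1/(2p-2)$, on the support $L(K)$. This is the standard Brunn--Minkowski consequence: take Minkowski combinations of fibres and note that convexity of $K$ puts the combination inside the fibre over the combined point. The rest is the one-dimensional Beta integral above.
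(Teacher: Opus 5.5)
Your proof is correct and follows essentially the same route as the paper: slice $K$ along the fibres of $L$, use rotation invariance together with Brunn--Minkowski to get $g(0)\ge g(z)$ and $g(z)\ge (1-|z|/w)^{2p-2}g(0)$, then integrate in polar coordinates. Two differences are cosmetic: you normalize $L$ to be a coisometry where the paper uses a general coordinate change with a cancelling Jacobian, and you check $g(0)>0$ explicitly. One small slip: in Step 2 you mean that $0$, not $z$, is the midpoint of $z$ and $-z$.
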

\begin{proof}
We denote $r=w_K(L)>0$.  When $p=1$, $K$ is a disk centered at
zero and can be represented by
\[K=\{|z|\leq r_K\} \ {\rm with \ some}\ r_K>0.\]
$L$ can be represented by 
\[L(v)= z_0 v, \quad \forall \ v\in \C  \ {\rm with\ some}\ z_0\in \C\setminus\{0\}.\]
Then $r=|z_0|r_K$ and the left side is 
\[\frac{\min (r_K^2, \tau^2/|z_0|^2)}{r_K^2}=\min(1,\tau^2/r^2).\]
So \eqref{eq:complexstrip} holds for $p=1$. Now suppose $p\ge2$ and put
$m=2p-2$.  Choose complex-linear coordinates $(z,y)\in\C\times\C^{p-1}$
with $z=L(v)$, and define
\[
 K_z=\{y:(z,y)\in K\},\qquad f(z)=\vol_m(K_z).
\]
Under this coordinate change, the constant Jacobian will cancel in the volume
ratio. i.e. the left hand side of \eqref{eq:complexstrip}.  Convexity and circular symmetry of $K$ remain after this coordinate change, and give $L(K)=\{|z|\le r\}$.
 Moreover, the circular symmetry yields 
 \begin{equation}\label{eq:rotation-invariance-section-volume}
   f(e^{i\theta}z)=\vol_m(K_{e^{i\theta}z})= \vol_m(e^{i\theta}K_{z}) =f(z),
 \end{equation}
 and the convexity also yields
\begin{equation}\label{eq:BM-seed}
   \tfrac12(K_z+K_{-z})\subseteq K_0.
\end{equation}
Apply the Brunn--Minkowski inequality in $\R^m$ to \eqref{eq:BM-seed}, we obtain that 
\begin{align}\label{eq:zero-max-section-volume}
 \notag f(0)&= \vol_m(K_0)\geq \vol_m(\tfrac12(K_z+K_{-z})) \\
    &\geq \left[  (\vol_m(K_z)^{1/m}+\vol_m(K_{-z})^{1/m})/2 \right]^m =f(z).
\end{align}
Here for the last equality above we used \eqref{eq:rotation-invariance-section-volume} with $\theta=\pi$ to give $f(-z)=f(z)$.

For each $\theta\in \T$, we take $(re^{i\theta},y_\theta)\in K$.
Convexity of $K$ and $(1-s/r)\cdot 0 +(s/r)\cdot re^{i\theta}=se^{i\theta}$ imply,
\[
 (1-s/r)K_0+(s/r)y_\theta\subseteq K_{se^{i\theta}} \qquad {\rm for}\ 0\le s\le r.
 \]
Therefore, since shift a set by $(s/r)y_\theta$ doesn't change its volume, we have 
\[f(se^{i\theta})\ge(1-s/r)^m f(0)\qquad {\rm for}\ 0\le s\le r.\]
Combining this with \eqref{eq:zero-max-section-volume}, we can integral about the $y\in \C^{p-1}$ first, then use polar coordinate to integral about $z\in \C$, and obtain
\begin{align*}
 \vol_{2p}(K)
 \ge2\pi f(0)\int_0^r s(1-s/r)^m\,ds
   &=\frac{2\pi f(0)r^2}{(m+1)(m+2)},\\
 \vol_{2p}\bigl(K\cap\{|L|\le\tau\}\bigr)
 &\le\pi\tau^2 f(0).
\end{align*}
The quotient of the above two inequality is exactly \eqref{eq:complexstrip}, since
$(m+1)(m+2)/2=p(2p-1)$.
\end{proof}

\subsection{The propagation lemma}
We now prove the propagation lemma for a $(c_0,C_0)$-linear-martingale system. 

The propagation condition \eqref{eq:propagation-condition} implies that if a source satisfies $|S_{f,j}(\omega_{<f};v)|>\delta>0$, then, with all other bits fixed, there is a choice of $\omega_{f,j}$ for which at least half of the sinks in block $f$ satisfy $|Y_{f,k}(\omega_{\leq f};v)|>c_0\delta/2$. 
This propagation mechanism also appears in \cite[Page 72]{LSZ26}. There, one can view the site $X_1$ as an element of a source block, $|u(X_1)|$ as the corresponding source magnitude, and $\mathfrak{e}_1$ as the corresponding target set. We will show that, with high probability, sufficiently many source blocks containing a source of large modulus force many targets to have sinks of large modulus (see \eqref{eq:propagation-bound} and \eqref{eq:propagation-bound-equal}).

Now, assume we have a $(c_0,C_0)$-linear-martingale system $\mathcal{M}$. Set
\begin{equation}\label{eq:fan-NB}
 N=\sum_{f=1}^{n_{\mathrm{blk}}}(r_f+b_f),\qquad B=\max_{1\le f\le n_{\mathrm{blk}}}b_f
\end{equation}
to represent the total size of the system and the maximum of the size of target sets. In the remainder of the paper, we will occasionally suppress, for notational convenience, the dependence of the sources $S_{f,j}(\omega_{<f};v)$ and sinks $Y_{f,k}(\omega_{\leq f};v)$ on the random word $\omega$. Define, for a complete word and input,
\begin{align}
 W_\delta(v)
 &=\sum_{f=1}^{n_{\mathrm{blk}}} b_f\,\mathbf1\!\left\{
       \max_{1\le j\le r_f}|S_{f,j}(v)|\ge\delta\right\},
       \label{eq:fan-highmass}\\
 M_\tau(v)
 &=\sum_{f=1}^{n_{\mathrm{blk}}}\sum_{k=1}^{b_f}
       \mathbf1 \left\{|Y_{f,k}(v)|>\tau\right\}.
       \label{eq:fan-outputcount}
\end{align}
Thus $W_{\delta}(v)$ counts source blocks containing at least one source with modulus at least $\delta$, with each block weighted by its target count $b_f$, while $M_\tau(v)$ counts targets whose sinks have modulus greater than $\tau$.
Here $\mathbf1\{E\}$ is one when the event $E$ holds and zero otherwise. Actually, if one consider random quantities
\[ W_\delta(v)-\sum_{1\leq f\leq n_{\mathrm{blk}}}  b_f \E \left( \mathbf1\!\left\{\max_{1\le j\le r_f}|S_{f,j}(v)|\ge\delta\right\} \bigg| \mathscr{F}_{f-2} \right) ,\]
\[ M_\tau(v) -\sum_{f=1}^{n_{\mathrm{blk}}}\sum_{k=1}^{b_f} \E \left(\mathbf1\{|Y_{f,k}(v)|>\tau\}| \mathscr{F}_{f-1}\right), \]
one will obtain two compensated martingales since the dependence of sources and sinks on the bits are described in \eqref{eq:source} and \eqref{eq:sink}. That is the main reason we call the "linear-martingale system".

The following propagation lemma reveals that within large probability \eqref{eq:fan-outputcount} can be lower bounded by \eqref{eq:fan-highmass}.

\begin{lemma}\label{lem:propagation}
Assume $\mathcal M$ is a $(c_0,C_0)$-linear-martingale system defined as in Definition \ref{def:linear-martingale-system} with input body
\[K=B_{R_{\rm in}}=\{w\in\C^p:\|w\|_2\le R_{\rm in}\}\subset \C^p,\qquad R_{in}>0 .\]
Let $N,B$ be defined as in \eqref{eq:fan-NB} for $\mathcal{M}$. Let $\delta>0$, $\tau>0$, and set
\begin{equation}\label{eq:fan-budget}
 \begin{aligned}
 h&=\delta\min(1,c_0/2),\qquad D_p=\sqrt{p(2p-1)},\\
 m&=\min\left\{N,\left\lfloor
 \frac{p\log(C_0 R_{\rm in}/\tau)}{\log(h/(D_p\tau))}
 \right\rfloor\right\}.
 \end{aligned}
\end{equation}
Suppose also that the parameters $c_0,C_0,\delta,\tau$ satisfy
\[\tau/C_0 \le R_{\rm in} \ {\rm and} \  h>D_p\tau.\]
Then for each $L>0$, outside an event of probability at most $e^{-L}$,
simultaneously for all input $v\in K$, we uniformly have
\begin{equation}\label{eq:propagation-bound}
 2M_\tau(v)\ge
 \frac{1-e^{-1}}{2}W_\delta(v)
 -B\left[m\bigl(1+\log(N+1)\bigr)+L\right].
\end{equation}
In particular, if every target sets $\mcK_f$ is of size $\ell$ and 
\begin{equation}\label{eq:def-P-delta}
P_\delta(v) = \sum_{f=1}^{n_{\mathrm{blk}}} \,\mathbf1\!\left\{
       \max_{1\le j\le r_f}|S_{f,j}(v)|\ge\delta\right\},
\end{equation} 
is the number of source blocks containing a source with modulus at least $\delta$, then
\begin{equation}\label{eq:propagation-bound-equal}
 2M_\tau(v)\ge
 \frac{1-e^{-1}}{2} \ell P_\delta(v)
 -\ell\left[m\bigl(1+\log(N+1)\bigr)+L\right].
\end{equation}
The input $v$, its high-source pattern, and its low-target pattern may
all be chosen after the complete word is known.
\end{lemma}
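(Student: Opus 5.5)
The plan is to avoid a net argument. Instead, I would use a deterministic \emph{cut procedure} that assigns each input $v$ a short \emph{record}, prove a $v$-independent exponential tail for each fixed record, and finish with a union bound over the at most $(N+1)^m$ records. I would treat \eqref{eq:propagation-bound-equal} as the special case $b_f\equiv\ell$ of \eqref{eq:propagation-bound}, so only the general bound needs proof.

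\emph{Step 1: the cut procedure and the budget $m$.} Fix a word $\omega$ and an input $v\in K=B_{R_{\rm in}}$. Scan the labels in a fixed order: block $f=1,2,\dots$, and within each block first the sources $S_{f,j}$, then the sinks $Y_{f,k}$. Start with the current body $K^{(0)}=K$. At a label with functional $L$ and threshold $\tau$, cut when $|L(v)|\le\tau$ but $w_{K^{(t)}}(L)\ge h$; the new body is $K^{(t+1)}=K^{(t)}\cap\{|L|\le\tau\}$. Append the label to the record.

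Each body remains compact, convex and circled, and it contains $v$. It also contains $B_{\tau/C_0}$, because by \eqref{eq:coefficient-bound-condition} every functional has modulus at most $C_0\|w\|_2\le\tau$ there; this ball lies in $K$ since $\tau/C_0\le R_{\rm in}$. Lemma~\ref{lem:complexstrip} shows that each cut multiplies the volume by at most $(D_p\tau/h)^2<1$. Comparing with $\vol(B_{\tau/C_0})/\vol(B_{R_{\rm in}})=(\tau/(C_0R_{\rm in}))^{2p}$ bounds the number of cuts by $p\log(C_0R_{\rm in}/\tau)/\log(h/(D_p\tau))$, and trivially by $N$, hence by $m$. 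Padding records to length $m$ with a null symbol, there are at most $(N+1)^m$ records.

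\emph{Step 2: conditional success for a fixed record.} Fix a record $\rho$. The associated bodies are then determined adaptively by $\rho$ and $\omega_{\le f}$, with no reference to $v$. Call block $f$ \emph{active} if some source has width $\ge\delta$ on the body $K_f$ present when its targets begin; $K_f$ is $\mathscr F_{f-1}$-measurable.

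If $|S_{f,j}(v)|\ge\delta$ and $v$ produces $\rho$, then the block is active. By \eqref{eq:propagation-condition}, for every target $k$ we have $w_{K_f}(Y^{(1)}_{f,k})+w_{K_f}(Y^{(0)}_{f,k})\ge c_0\delta$. So, after revealing the other bits of the block, one value of $\omega_{f,j}$ makes at least $b_f/2$ targets have width $\ge c_0\delta/2\ge h$, and this event has conditional probability $\ge1/2$ given $\mathscr F_{f-1}$.

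On this success event, each wide target either has $|Y_{f,k}(v)|>\tau$, and so is counted in $M_\tau$, or has $|Y_{f,k}(v)|\le\tau$ and is cut, and so appears in $\rho$. The total loss from the second alternative over all blocks is at most $Bm$.

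\emph{Step 3: martingale tail and union bound.} Let $\xi_f$ be the indicator that block $f$ is active but fails. Then $\sum_f b_f\mathbf1\{\text{active}\}\bigl(\tfrac12-\xi_f\bigr)$ is handled by the exponential supermartingale $\exp\bigl(\sum_f(\xi_f-(1-e^{-1})/2\cdot\mathbf1_{\rm active})\cdot(\cdots)\bigr)$ with unit exponential parameter. This uses $\E[e^{\xi_f}\mid\mathscr F_{f-1}]\le(1+e)/2$, and weighting by $b_f\le B$ gives the scale factor $B$. For fixed $\rho$ this yields, outside probability $e^{-L-m\log(N+1)}$, the bound $2M_\tau+2Bm\ge\frac{1-e^{-1}}{2}W_\delta-B(m+L+m\log(N+1))$ after the bookkeeping. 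A union bound over $(N+1)^m$ records leaves total failure probability $e^{-L}$ and gives \eqref{eq:propagation-bound} simultaneously for every $v$, since every $v$ produces some record after $\omega$ is known.

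\emph{Main obstacle.} I expect the hard part to be Step 2's consistency. Cuts of earlier targets inside block $f$ shrink the body before later targets of the same block are examined, so widths measured on $K_f$ and on the current body differ. The propagation identity must then be applied on a body that does not depend on the bit $\omega_{f,j}$ being flipped. I would first try measuring all target widths on the block-start body $K_f$ and letting target cuts use that width. I would then check that Lemma~\ref{lem:complexstrip} still gives volume loss, since intersecting with the current, smaller body only helps the volume comparison if the width is recomputed. If this fails, the fallback is to cut targets only with respect to $K_f$, at a cost of at most one extra factor in the number of cuts absorbed into $m$.
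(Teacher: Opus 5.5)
Your architecture is the paper's: greedy cuts with widths measured on the current body, the inner ball $B_{\tau/C_0}$ plus Lemma~\ref{lem:complexstrip} to cap every record at $m$ labels, replay of a fixed record on the $\mathscr F_{f-1}$-measurable body $K_f$, pigeonhole for a conditional success probability $\ge 1/2$, an exponential supermartingale, and a union bound over $(N+1)^m$ records.

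The gap is the one you flagged, and as written the dichotomy in your Step 2 fails. Under your Step 1 rule, suppose an earlier target of block $f$ has been cut. A later target that is wide on $K_f$ with $|Y_{f,k}(v)|\le\tau$ can then be thin on the current body, so it is neither counted in $M_\tau(v)$ nor cut. Your first remedy (measure target widths on $K_f$ but intersect the current body) breaks the volume count. Second and later cuts inside a block need not remove any volume (take two equal sinks), so records are no longer bounded by $m$.

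The missing idea is to charge whole blocks. Keep your Step 1 rule. Let $A_f,I_f$ be the activity and success indicators, $M_f(v)$ the number of targets of block $f$ with $|Y_{f,k}(v)|>\tau$, and $q_f=|\mathcal I\cap\mcK_f|$.
\begin{itemize}
\item If $q_f=0$, the body equals $K_f$ throughout the target scan of block $f$. Each of the at least $b_f/2$ targets wide on $K_f$ must then have $|Y_{f,k}(v)|>\tau$, since otherwise it would have been cut. Hence $b_fI_f\le 2M_f(v)$.
\item If $q_f\ge1$, just use $b_fI_f\le B\le Bq_f$.
\end{itemize}
Summing gives $\sum_f b_fI_f\le 2M_\tau(v)+Bm$. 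So with $a_0=\frac{1-e^{-1}}{2}$, $a_0W_\delta(v)-2M_\tau(v)\le Bm+H_{\mathcal I(v,\omega)}(\omega)$, where $H_{\mathcal I}=a_0\sum_f b_fA_f-\sum_f b_fI_f$. The per-record tail $\P(H_{\mathcal I}>Bt)\le e^{-t}$ with $t=L+m\log(N+1)$ then gives exactly the error term $B[m(1+\log(N+1))+L]$.

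Three smaller repairs are also needed.
\begin{itemize}
\item Your final bookkeeping carries $2Bm$ on the left and $Bm$ on the right, a total loss of $3Bm$ rather than the $Bm$ in \eqref{eq:propagation-bound}. The block charging above is what yields the stated constant.
\item In the replay, the source whose bit is flipped must be chosen without reference to $v$, e.g.\ the first $j$ with $w_{K_f}(S_{f,j})\ge\delta$. Success must mean $\omega_{f,j_f}\in G_f$, the set of colours making at least $b_f/2$ targets wide on $K_f$. Phrasing it via $|S_{f,j}(v)|\ge\delta$ makes the event depend on $v$ and spoils the per-record tail bound. With the $v$-free choice, $W_\delta(v)\le\sum_f b_fA_f$ follows simply from $v\in K_f$.
\item $\exp\bigl(\sum_f x_f(\xi_f-a_0A_f)\bigr)$ is not a supermartingale for any weights $x_f\in(0,1]$, because $\E[e^{x\xi_f}\mid\mathscr F_{f-1}]$ can equal $(1+e^x)/2>e^{a_0x}$. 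One fix is to track successes as the paper does, via $\E[e^{-x_fI_f}\mid\mathscr F_{f-1}]\le 1-\tfrac12A_f(1-e^{-x_f})\le e^{-a_0x_fA_f}$ with $x_f=b_f/B$. The other is to use the compensator $\log\frac{1+e}{2}$, valid for all $x_f\in(0,1]$ by convexity of $x\mapsto\log\frac{1+e^x}{2}$; this still leaves a success fraction $1-\log\frac{1+e}{2}>a_0$.
\end{itemize}
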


\begin{remark}
  In Lemma \ref{lem:propagation}, we emphasize once again the order in which these objects is fixed throughout.  All source blocks, targets, and input body $K$ are fixed before randomness is introduced. We first fix the word (i.e. the randomness) $\omega$ in the high-probability event on which our argument is carried out. This determines the coefficients of all sources and sinks. We then fix an arbitrary input $v$ chosen from $K$. At this point, the values of all sources and sinks are determined, and consequently so are the values of $W_{\delta}(v)$ and $M_{\tau}(v)$.
\end{remark}

We next explain the ideas behind the proof of the propagation lemma.
First fix an input $v\in K$ before sampling the random word.
In a block containing a source with modulus at least $\delta$, choose the first such source $S_{f,j}(v)$; this choice depends only on the preceding blocks.
Reveal all other bits in the current block, leaving $\omega_{f,j}$ unrevealed.
By \eqref{eq:propagation-condition}, for each target $k$, at least one of the two possible sink values $Y_{f,k}^{(0)}(\widehat\omega_{f,j};v)$ and $Y_{f,k}^{(1)}(\widehat\omega_{f,j};v)$ has modulus at least $c_0\delta/2>\tau$.
The pigeonhole principle therefore gives one common choice of the remaining bit for which at least half of the sinks exceed the threshold $\tau$.
Since that bit is still fair, the conditional probability of this success is at least $1/2$.
The conditional exponential estimate in Step 3 then gives, for every fixed $v$ and every $t\ge0$,
\begin{equation}\label{eq:fixed-input-propagation}
 \P\left(2M_\tau(v)<\frac{1-e^{-1}}2W_\delta(v)-Bt\right)\le e^{-t}.
\end{equation}
When every target set has size $\ell$, substituting $W_\delta(v)=\ell P_\delta(v)$ and $B=\ell$ yields the corresponding estimate in terms of the number of large source blocks.

The difficulty is to obtain a single event on which the estimate holds for every $v\in K$, including inputs chosen after the entire random word is known.
A direct union bound over the uncountable set $K$ is unavailable.
An ordinary Euclidean net, with suitable slack in the thresholds, would need mesh of order $\tau/C_0$ and incur an entropy cost of order $p\log(1+C_0R_{\rm in}/\tau)$, which may be too large to absorb into the fixed-input tail estimate.
Instead, we encode the relevant information about each pair $(v,\omega)$ in a short record of cuts of the input body.
Whenever a source or sink is small at $v$ but has large width on the current body, we intersect that body with the strip on which the functional has modulus at most $\tau$ and record its label.
Each cut retains $v$ and the common inner ball $B_{\tau/C_0}$, while removing a definite fraction of the volume.
The volume comparison bounds the number of cuts by $m$.
Since the labels come from a fixed set of size $N$ and are scanned in a fixed order, the deterministic family $\mathscr R_m$ of possible records has cardinality at most $(N+1)^m$.

For each record $\mathcal I$ fixed before sampling, we replay its prescribed cuts without choosing an input.
This defines a random loss $H_{\mathcal I}(\omega)$, measuring the imbalance between active source blocks and successful target blocks, as specified in \eqref{eq:record-propagation-loss}.
Crucially, the replay runs under the original probability law: we do not condition on the event that this record arises from a particular input.
Define the loss in the desired propagation estimate by
\[
 D(v,\omega):=\frac{1-e^{-1}}2W_\delta(v)-2M_\tau(v),
\]
where the dependence of $W_\delta$ and $M_\tau$ on $\omega$ is suppressed.
The deterministic comparison with the record $\mathcal I(v,\omega)$ gives
\begin{equation}\label{eq:propagation-loss-domination}
 \sup_{v\in K}D(v,\omega)
 \le Bm+\max_{\mathcal I\in\mathscr R_m}H_{\mathcal I}(\omega).
\end{equation}
The same conditional success mechanism as in the fixed-input argument gives, for every fixed record,
\[
 \P\left(H_{\mathcal I}>Bt\right)\le e^{-t},
 \qquad t\ge0.
\]
Thus the key union bound is over the finite family of records:
\[
 \P\left(\sup_{v\in K}D(v,\omega)>Bm+Bt\right)
 \le \sum_{\mathcal I\in\mathscr R_m}\P(H_{\mathcal I}>Bt)
 \le (N+1)^m e^{-t}.
\]
Taking $t=L+m\log(N+1)$ gives the lemma.
The three error terms $Bm$, $Bm\log(N+1)$, and $BL$ account respectively for the cuts, the number of records, and the chosen failure probability.

\begin{proof}[Proof of Lemma \ref{lem:propagation}]
\noindent\textbf{Step 1: define the records through cut.}
Let
\[
 \mathscr L=\bigcup_{f=1}^{n_{\mathrm{blk}}}(\mathcal J_f\cup\mathcal K_f),
 \qquad |\mathscr L|=N.
\]
Fix a scanning order before revealing any randomness: visit the blocks in increasing order of $f$, and within each block scan the sources in increasing order of $j$, followed by the targets in increasing order of $k$.

For now fix a word $\omega$ and an input $v\in K$. Starting from the current body $K'=K=B_{R_{\rm in}}$, perform the following greedy cuts in the prescribed order. All source and sink functionals in this step are evaluated at the fixed word $\omega$.
\begin{itemize}
\item At a source label $(f,j)$, make a cut if
\begin{equation}\label{eq:cut-in-source-blocks}
 |S_{f,j}(v)|\le\tau
 \quad\hbox{and}\quad
 w_{K'}(S_{f,j})\ge\delta.
\end{equation}
Record $(f,j)$ and replace $K'$ by $K'\cap\{|S_{f,j}|\le\tau\}$.
\item At a target label $(f,k)_{\rm tag}$, make a cut if
\begin{equation}\label{eq:cut-in-targets}
 |Y_{f,k}(v)|\le\tau
 \quad\hbox{and}\quad
 w_{K'}(Y_{f,k})\ge c_0\delta/2.
\end{equation}
Record $(f,k)_{\rm tag}$ and replace $K'$ by $K'\cap\{|Y_{f,k}|\le\tau\}$.
\end{itemize}
If the relevant condition fails, leave $K'$ unchanged. Widths are always computed on the current body, after all preceding cuts. Let $\mathcal I(v,\omega)\subset\mathscr L$ be the set of recorded labels. The fixed scanning order determines the order of the cuts from this set.

Every cut retains $v$. It also preserves compactness, convexity, and the circled property of the current body. By \eqref{eq:coefficient-bound-condition}, every source or sink functional $L$ satisfies $|L(w)|\le C_0\|w\|_2$, so every intermediate body contains the same ball:
\begin{equation}\label{eq:K-has-inner-ball}
 B_{\tau/C_0}\subset K'\subset B_{R_{\rm in}}.
\end{equation}
In particular, these bodies have nonempty interior. At each recorded cut, the width of the corresponding functional is at least $h=\delta\min(1,c_0/2)$. Lemma~\ref{lem:complexstrip} therefore gives
\begin{equation}\label{eq:each-cut-ratio}
 \frac{\vol_{2p}(K'\cap\{|L|\le\tau\})}{\vol_{2p}(K')}
 \le D_p^2(\tau/h)^2<1.
\end{equation}
If $q=|\mathcal I(v,\omega)|$ and $K_{\rm end}$ is the final body, then
\begin{equation}\label{eq:fan-volume-ledger}
 \left(\frac{\tau}{C_0R_{\rm in}}\right)^{2p}
 \le\frac{\vol_{2p}(K_{\rm end})}{\vol_{2p}(B_{R_{\rm in}})}
 \le\left[D_p^2(\tau/h)^2\right]^q.
\end{equation}
 See Figure \ref{fig:common-inner-ball} for an illustration.

Taking logarithms and using $q\le N$ yields the deterministic bound
\begin{equation}\label{eq:cardinality-upper-bound-via-convex-geo}
 |\mathcal I(v,\omega)|\le
 \min\left\{N,\left\lfloor
 \frac{p\log(C_0R_{\rm in}/\tau)}{\log(h/(D_p\tau))}
 \right\rfloor\right\}=m.
\end{equation}

\begin{figure}[htbp]
\centering
\begin{tikzpicture}[
  scale=0.95,
  line cap=round,
  line join=round,
  >=Latex,
  every node/.style={font=\small}
]
  \def\outerradius{3.0}
  \def\innerradius{0.70}

  \fill[black!2]
    (0,0) circle (\outerradius);

  \foreach \bandangle/\bandcolor/\halfwidth in {
    8/red!65!black/0.98,
    60/green!45!black/0.86,
    120/blue!65!black/0.75
  } {

    \begin{scope}
      \clip (0,0) circle (\outerradius);

      \begin{scope}[rotate=\bandangle]
        \clip
          (-3.7,-\halfwidth)
          rectangle
          (3.7,\halfwidth);

        \foreach \hatchindex in {-16,...,16} {
          \draw[
            \bandcolor,
            opacity=0.45,
            line width=0.35pt
          ]
            ({0.34*\hatchindex},-\halfwidth)
            --
            ({0.34*\hatchindex+0.55},\halfwidth);
        }
      \end{scope}
    \end{scope}

    \begin{scope}[rotate=\bandangle]
      \draw[\bandcolor,line width=0.85pt]
        (-3.55,\halfwidth) -- (3.55,\halfwidth);

      \draw[\bandcolor,line width=0.85pt]
        (-3.55,-\halfwidth) -- (3.55,-\halfwidth);
    \end{scope}
  }

  \draw[line width=1pt]
    (0,0) circle (\outerradius);

  \filldraw[
    fill=white,
    draw=black,
    line width=1pt
  ]
    (0,0) circle (\innerradius);

  \node at (0,0) {$B_{\tau/C_0}$};

  \node[anchor=west] at (2.65,-2.25)
    {$B_{R_{\mathrm{in}}}$};

\end{tikzpicture}
\caption{
An illustration of the estimate \eqref{eq:fan-volume-ledger}.
}
\label{fig:common-inner-ball}
\end{figure}

We will take a union bound over the deterministic family
\[
 \mathscr R_m:=\{\mathcal I\subset\mathscr L:|\mathcal I|\le m\}.
\]
Its cardinality satisfies
\begin{equation}\label{eq:fan-record-count}
 |\mathscr R_m|=\sum_{q=0}^m\binom Nq\le(N+1)^m.
\end{equation}
Indeed, a record can be encoded by its labels in the fixed order, padded to length $m$ with an empty symbol.

\smallskip
\noindent\textbf{Step 2: define $H_{\mathcal I}$ by replaying a fixed record.}
We now define the quantity $H_{\mathcal I}$ which will be used to bound the loss in propagation for a given record.

Fix $\mathcal I\in\mathscr R_m$ before sampling $\omega$. In Steps 2 and 3, suppress the dependence on this fixed record in $K_f,A_f,G_f,I_f$. Starting from $K'=B_{R_{\rm in}}$, replay the prescribed cuts as follows, using the same scanning order as in Step 1.
\begin{enumerate}
\item Scan the sources in block $f$. At each label $(f,j)\in\mathcal I$, replace $K'$ by $K'\cap\{|S_{f,j}|\le\tau\}$; at every other source label leave it unchanged. Let $K_f$ be the body after these source cuts.
\item Define the activity indicator
\begin{equation}\label{eq:active-indicator}
 A_f
 :=\mathbf1\left\{\max_{1\le j\le r_f}
 w_{K_f}(S_{f,j})\ge\delta\right\}.
\end{equation}
If $A_f=1$, choose the first index $j_f$ whose source has width at least $\delta$.
\item In an active block, reveal all other bits in block $f$ while leaving $\omega_{f,j_f}$ unrevealed. For $e\in\{0,1\}$, let $Y_{f,k}^{(e)}$ be the sink functional obtained by setting this remaining bit equal to $e$. Define
\begin{equation}\label{eq:good-color-set}
 G_f
 :=\left\{e\in\{0,1\}:
 \left|\left\{1\le k\le b_f:
 w_{K_f}(Y_{f,k}^{(e)})\ge c_0\delta/2
 \right\}\right|\ge b_f/2\right\}.
\end{equation}
Then reveal $\omega_{f,j_f}$ and set
\begin{equation}\label{eq:success-indicator}
 I_f
 :=\mathbf1\{\omega_{f,j_f}\in G_f\}.
\end{equation}
In an inactive block, reveal all its bits and set $I_f=0$. Thus
\begin{equation}\label{eq:compare-If-Af}
 0\le I_f\le A_f.
\end{equation}
\item Scan the targets in block $f$. At each label $(f,k)_{\rm tag}\in\mathcal I$, replace $K'$ by $K'\cap\{|Y_{f,k}|\le\tau\}$; otherwise leave it unchanged. Proceed to block $f+1$.
\end{enumerate}

This procedure is defined for every record and every word, without an input $v$. All its bodies contain $B_{\tau/C_0}$, even if the record never arises from the greedy procedure. Since source functionals in block $f$ depend only on $\omega_{<f}$, the body $K_f$, the indicator $A_f$, and the selected index $j_f$ on active blocks are measurable with respect to $\mathscr F_{f-1}$. The indicator $I_f$ is $\mathscr F_f$-measurable.

Define
\begin{equation}\label{eq:record-propagation-loss}
 H_{\mathcal I}(\omega)
 :=a_0\sum_{f=1}^{n_{\mathrm{blk}}} b_fA_f
   -\sum_{f=1}^{n_{\mathrm{blk}}} b_fI_f,
 \qquad a_0=\frac{1-e^{-1}}2.
\end{equation}
Here $\mathcal I$ is deterministic data for a process run under the original probability measure. We do not condition on the event that $\mathcal I(v,\omega)=\mathcal I$ for some input, since that event could bias the bits used in the argument.

\smallskip
\noindent\textbf{Step 3: bound the upper tail of $H_{\mathcal I}$.}
Keep $\mathcal I$ fixed. On an active block, condition on the history $\mathscr F_{f-1}$ and all current bits except the selected bit $\omega_{f,j_f}$. The selected bit is still independent and fair. The propagation condition \eqref{eq:propagation-condition}, applied on the fixed body $K_f$, gives for every target
\[
 w_{K_f}(Y_{f,k}^{(0)})
 +w_{K_f}(Y_{f,k}^{(1)})
 \ge |c_{f,j_f,k}|w_{K_f}(S_{f,j_f})
 \ge c_0\delta.
\]
For each $k$, at least one of the two widths is at least $c_0\delta/2$. By the pigeonhole principle, one common choice of $e$ has this property for at least half the targets. Hence
\begin{equation}\label{eq:nonempty-of-favourable-set}
 G_f\ne\varnothing
 \qquad\hbox{whenever }A_f=1.
\end{equation}
The conditional success probability is therefore at least $1/2$ on an active block. Averaging over the other current bits, and using $I_f=0$ on inactive blocks, yields
\begin{equation}\label{eq:fan-fresh-active}
 \P\left(I_f=1\mid\mathscr F_{f-1}\right)
 \ge \frac12 A_f.
\end{equation}

Put $x_f=b_f/B\in(0,1]$. Since $A_f$ is $\mathscr F_{f-1}$-measurable,
\begin{align*}
 \E\left(e^{-x_fI_f}\mid\mathscr F_{f-1}\right)
 &=1-(1-e^{-x_f})
       \P\left(I_f=1\mid\mathscr F_{f-1}\right)\\
 &\le 1-\frac12 A_f(1-e^{-x_f})\\
 &\le \exp\left(-a_0x_fA_f\right),
\end{align*}
where we used $1-e^{-x}\ge(1-e^{-1})x$ for $0\le x\le1$, and $1-e^{-x}\leq x$ for $x\in \R$. Thus
\begin{equation}\label{eq:filtration-exponential-estimate}
 \E\left[
 \exp\left(\frac{a_0b_fA_f-b_fI_f}B\right)
 \,\middle|\,\mathscr F_{f-1}\right]\le1.
\end{equation}
Consequently,
\[
 Z_n:=\exp\left(\frac1B
 \sum_{f=1}^n(a_0b_fA_f-b_fI_f)\right),
 \qquad 0\le n\le n_{\mathrm{blk}},
\]
is a nonnegative supermartingale with $Z_0=1$. In particular,
\begin{equation}\label{eq:fan-exponential}
 \E\exp\left(H_{\mathcal I}/B\right)\le1.
\end{equation}
Markov's inequality now gives the required upper tail:
\begin{equation}\label{eq:record-upper-tail}
 \P\left(H_{\mathcal I}>Bt\right)\le e^{-t},
 \qquad t\ge0.
\end{equation}

\smallskip
\noindent\textbf{Step 4: bound the loss in propagation and take a union bound.}
Fix any word $\omega$ and any input $v\in K$, and take the greedy record $\mathcal I=\mathcal I(v,\omega)$ from Step 1. Replay this record with the same word. The greedy and replay procedures have the same body at every point in the scan: they start from the same ball, and at each label both make the same cut exactly when that label belongs to $\mathcal I$. This proves the matching by induction through the fixed scanning order.

Use $K_f,A_f,G_f,I_f$ for the replay quantities associated with this selected record. Since every greedy cut retains $v$, we have $v\in K_f$. A source with $|S_{f,j}(v)|\ge\delta$ therefore has width at least $\delta$ on this body. Hence
\begin{equation}\label{eq:fan-activity}
 W_\delta(v)\le\sum_{f=1}^{n_{\mathrm{blk}}} b_fA_f.
\end{equation}

Let
\[
 q_f=|\mathcal I\cap\mathcal K_f|,
 \qquad
 M_f(v)=\sum_{k=1}^{b_f}\mathbf1\{|Y_{f,k}(v)|>\tau\},
 \qquad M_\tau(v)=\sum_{f=1}^{n_{\mathrm{blk}}} M_f(v).
\]
If $q_f=0$, the body remains equal to $K_f$ throughout the target scan. If also $I_f=1$, at least $b_f/2$ targets have width at least $c_0\delta/2$ on this body. Each of their sinks must have modulus greater than $\tau$ at $v$, since otherwise \eqref{eq:cut-in-targets} would force a target cut. Thus $b_fI_f\le2M_f(v)$ when $q_f=0$. If $q_f\ge1$, the bound $b_fI_f\le Bq_f$ suffices. Summing these two cases gives
\begin{equation}\label{eq:fan-cut-charge}
 \sum_{f=1}^{n_{\mathrm{blk}}} b_fI_f
 \le2M_\tau(v)+B\sum_{f=1}^{n_{\mathrm{blk}}} q_f
 \le2M_\tau(v)+Bm.
\end{equation}
Together with \eqref{eq:fan-activity} and the definition of $H_{\mathcal I}$, this yields
\begin{equation}\label{eq:propagation-record-domination}
 D(v,\omega)
 \le Bm+H_{\mathcal I(v,\omega)}(\omega)
\end{equation}
with $D(v,\omega)$ defined by \eqref{eq:propagation-loss-domination}.
Since $|\mathcal I(v,\omega)|\le m$ for every pair $(v,\omega)$, taking the supremum over $v$ proves \eqref{eq:propagation-loss-domination}.

We can now take a union bound over the fixed finite family $\mathscr R_m$. For every $t\ge0$,
\begin{align}\label{eq:propagation-union-bound}
 \P\left(\sup_{v\in K}D(v,\omega)>Bm+Bt\right)
 &\le \P\left(\max_{\mathcal I\in\mathscr R_m}H_{\mathcal I}>Bt\right)\notag\\
 &\le\sum_{\mathcal I\in\mathscr R_m}
       \P(H_{\mathcal I}>Bt)\notag\\
 &\le (N+1)^m e^{-t}.
\end{align}
Taking $t=L+m\log(N+1)$, we obtain an event of probability at least $1-e^{-L}$ on which, simultaneously for every $v\in K$,
\[
 2M_\tau(v)\ge a_0W_\delta(v)
 -B\left[m\bigl(1+\log(N+1)\bigr)+L\right].
\]
This is \eqref{eq:propagation-bound}. When all target sets have size $\ell$, substitute $W_\delta(v)=\ell P_\delta(v)$ and $B=\ell$ to obtain \eqref{eq:propagation-bound-equal}.
\end{proof}

We now relate the proof of Lemma \ref{lem:propagation} to previous work. A key point is the conditional success estimate \eqref{eq:fan-fresh-active}, which leads to the exponential supermartingale in Step 3. Similar conditional estimates and exponential arguments also arise in \cite[Page 72]{LSZ26}.

Another important idea in the proof is how to discretize the initial input body $K$, thereby obtaining probability estimates that are uniform over the input. This is closely related to the goal pursued in \cite[Problem 2.7]{LSZ26}. In \cite[Theorem 2.6]{LSZ26}, a version of PDUC is proved; however, the corresponding probability depends on the initial data of the equation, which corresponds to the input in our setting. \cite[Problem 2.7]{LSZ26} indicate that how this dependence on the initial data might be removed is important, since it can lead to a version of PDUC that is uniform over all initial data, which is a key step toward solving the localization problem for the ABM in high dimension. Indeed, Theorem \ref{thm:duc-linear} below resolves \cite[Problem 2.7]{LSZ26} in the affirmative.

In the proof of the propagation lemma above, our goal is to establish, with high probability, quantitative lower bounds on the magnitudes of certain linear functionals. We therefore record indices whose associated linear functionals have relatively large widths, as in \eqref{eq:cut-in-source-blocks} and \eqref{eq:cut-in-targets}. Consequently, the collection of all possible record sets can be viewed as a discretization of the parameter space. Compared with perturbing the input $v$ and applying a net argument to $K$, this discretizing scheme incurs a smaller entropy cost, which can consequently be absorbed into the tail probability estimate.

\section{Probability discrete unique continuation principle}\label{sec:DUC}

In this section, we establish the main step of PDUC for Schr\"odinger equation on $\Z^d$, through the bootstrap strategy outline in Section \ref{ssec:iop}

\subsection{Graded sets and statement of PDUC}\label{sec:graded-set}

Before stating our main PDUC results, we need to introduce the definition of the graded set. 
This object was first introduced in \cite[Definition 3.3]{LZ22}, where it proved instrumental in describing the scale-by-scale evolution of the probabilities arising in multiscale analysis. Indeed, the multiscale analysis requires the incorporation of the so-called free-sites argument. This argument was first introduced by \cite{Bou04} to overcome the lack of eigenvalue variation in the localization problem for the ABM. The graded frozen set is designed to capture the geometric sparsity of the frozen sites—that is, the sites that are not free—throughout the iteration. This, in turn, ensures that sufficiently many free sites remain available at every step of the iteration.

In the rest of the paper, we write
\[
Q_r(a)=\{x\in\mathbb Z^d:|x-a|_\infty\le r\},\qquad Q_r=Q_r(0),
\]
to stand for the cube ($\ell^{\infty}$-ball) in $\Z^d$.

\begin{definition}\label{def:fsa-graded}
Let $N\in\mathbb Z_{\ge1}$, $0<\varepsilon<1$, and let $\boldsymbol\rho=(\rho_1,\ldots,\rho_s)$ be a finite, possibly empty list (in case that $s=0$) of real numbers greater than one, with
\[
\rho_{i+1}\ge\rho_i^{1+2\varepsilon}.
\]
We say the union $\mathcal F=\mathcal F_0\cup\cdots\cup\mathcal F_s\subset\mathbb R^d$ is a \textit{$(N,\varepsilon,\boldsymbol\rho)$-graded set}, if it has the following properties:
\begin{itemize}
  \item[(1)] The set $\mathcal F_0$ can be represented by 
                  \[\mcF_0=\bigcup_{j\in \Z_+} Z^{(j)},\]
  where each $Z^{(j)}$ is an open unit ball (under $\ell^2$-norm) in $\R^d$ with centers in $\mathbb Z^d$ and 
  \[   \dist(Z^{(i)},Z^{(j)}) \geq  \varepsilon^{-1}, \ i\neq j.\]
  \item[(2)] For each $i\ge1$, the set $\mathcal F_i$ can be represented by 
   \[\mcF_i=\bigcup_{1\leq t\leq N} \mcF_i^{(t)}, \ \mcF_i^{(t)}= \bigcup_{j\in \Z_+} Z^{(t,j)}.\]
   Here $Z^{(t,j)}$ are open balls of radius $\rho_i$ in $\R^d$; within each $\mcF_i^{(t)}$, we have 
     \[   \dist(Z^{(t,s)},Z^{(t,j)}) \geq  \rho_i^{1+\varepsilon}, \ s\neq j.\]
\end{itemize}
We say $F$ is a \textit{discrete $(N,\varepsilon,\boldsymbol\rho)$-graded set}, if $F=\mathcal F\cap\mathbb Z^d$ for some $(N,\varepsilon,\boldsymbol\rho)$-graded set $\mcF$. 

For $(N,\varepsilon,\boldsymbol\rho)$-graded set $\mcF$ (or, correspondingly, its discrete version $F=\mcF\cap \Z^d$), we say it is \textit{normal} in $Q_R(a)$ if for each $i\geq 1$, $Q_R(a)\cap \mcF_i\neq \emptyset$ implies
\[
\rho_i\le\operatorname{diam} (Q_R(a))^{1-\varepsilon/2}=(2\sqrt d\,R)^{1-\varepsilon/2}.
\]
\end{definition}
\smallskip

To better understand the graded set, one can see Figure \ref{fig:graded-set} for an illustration of a $(2,\varepsilon,\boldsymbol{\rho})$-graded set.

\begingroup
\definecolor{gfunit}{HTML}{6B7280}
\definecolor{gfblue}{HTML}{0072B2}
\definecolor{gfteal}{HTML}{008A67}
\definecolor{gforange}{HTML}{D55E00}
\definecolor{gfpurple}{HTML}{9B5198}

\begin{figure}

\begin{tikzpicture}[
  x=1cm,y=1cm,font=\small,
  gfball/.style={line width=0.85pt,fill opacity=0.14},
  gffamilytwo/.style={dash pattern=on 3pt off 1.8pt},
  gfmeasure/.style={<->,>=Latex,line width=0.65pt},
  gflegend/.style={anchor=west,align=left,inner sep=0pt}
]
  \draw[black!45,line width=0.7pt] (0,0) rectangle (10,10);
  \node[anchor=south west,inner sep=0pt] at (0,10.15)
    {$Q_R(a)$};

  \foreach \x/\y in {2/7.5,6.5/7.5,2.2/2.3}{
    \draw[gfball,draw=gforange,fill=gforange]
      (\x,\y) circle[radius=0.8];
  }
  \foreach \x/\y in {2.6/6.5,7/6.4,7/2.2}{
    \draw[gfball,gffamilytwo,draw=gfpurple,fill=gfpurple]
      (\x,\y) circle[radius=0.8];
  }

  \foreach \x/\y in {1/1,4.3/1.2,9/4.2,4.3/5,1/4.8,4.5/9,8.5/8.8}{
    \draw[gfball,draw=gfblue,fill=gfblue]
      (\x,\y) circle[radius=0.4];
  }
  \foreach \x/\y in {1.4/1.2,3.1/2.8,4.6/5.5,8.5/4.5,8.7/1.2,1/8.8,7/9}{
    \draw[gfball,gffamilytwo,draw=gfteal,fill=gfteal]
      (\x,\y) circle[radius=0.4];
  }

  \foreach \x/\y in {
    0.7/2.7,1.1/3.2,1.7/4,3/0.8,3/4.5,3.6/3.8,
    3.8/6.5,4.6/7,5.1/1.9,5.5/3.7,5.6/6,5.5/8.6,
    6.2/4.8,7.3/4.4,8.3/5.8,9.1/6.9,9.4/2.6,9.3/9.1,
    0.8/6.3,4/8,6.8/1,8.8/0.6,2/7.5}{
    \draw[draw=gfunit,fill=gfunit!18,line width=0.5pt]
      (\x,\y) circle[radius=0.1];
    \fill[gfunit] (\x,\y) circle[radius=0.018];
  }

  \draw[gfmeasure,draw=gforange] (2.8,7.5) -- (5.7,7.5)
    node[midway,above=5pt,text=black,fill=white,inner sep=2pt]
    {$\ge \rho_2^{\,1+\varepsilon}$};

  \fill[gforange] (2.2,2.3) circle[radius=0.025];
  \draw[-{Latex[length=1.5mm]},gforange,line width=0.65pt]
    (2.2,2.3) -- (3,2.3)
    node[midway,below=3pt,text=black] {$\rho_2$};
  \fill[gfblue] (4.3,1.2) circle[radius=0.022];
  \draw[-{Latex[length=1.3mm]},gfblue,line width=0.6pt]
    (4.3,1.2) -- (4.7,1.2);
  \node[inner sep=0pt] at (4.3,0.55) {$\rho_1$};

  \node[gflegend,font=\normalsize] at (10.55,9.55)
    {$\mathcal F=\mathcal F_0\cup\mathcal F_1\cup\mathcal F_2$};
  \node[gflegend] at (10.55,8.95)
    {$\mathcal F_i=\mathcal F_i^{(1)}\cup\mathcal F_i^{(2)}$,
     $i=1,2$};

  \draw[draw=gfunit,fill=gfunit!18,line width=0.6pt]
    (10.75,8.05) circle[radius=0.1];
  \node[gflegend] at (11.2,8.05)
    {$\mathcal F_0$\quad radius $1$};

  \node[gflegend,font=\bfseries] at (10.55,7.15)
    {Scale $\rho_1$};
  \draw[gfball,draw=gfblue,fill=gfblue]
    (10.75,6.55) circle[radius=0.18];
  \node[gflegend] at (11.2,6.55)
    {$\mathcal F_1^{(1)}$\quad family $1$};
  \draw[gfball,gffamilytwo,draw=gfteal,fill=gfteal]
    (10.75,5.95) circle[radius=0.18];
  \node[gflegend] at (11.2,5.95)
    {$\mathcal F_1^{(2)}$\quad family $2$};

  \node[gflegend,font=\bfseries] at (10.55,4.95)
    {Scale $\rho_2$};
  \draw[gfball,draw=gforange,fill=gforange]
    (10.75,4.35) circle[radius=0.25];
  \node[gflegend] at (11.2,4.35)
    {$\mathcal F_2^{(1)}$\quad family $1$};
  \draw[gfball,gffamilytwo,draw=gfpurple,fill=gfpurple]
    (10.75,3.6) circle[radius=0.25];
  \node[gflegend] at (11.2,3.6)
    {$\mathcal F_2^{(2)}$\quad family $2$};

  \node[gflegend] at (10.55,2.55)
    {$1<\rho_1<\rho_2$,\quad $N=2$};
  \node[gflegend] at (10.55,1.9)
    {$\rho_2\ge\rho_1^{\,1+2\varepsilon}$};
  \node[gflegend] at (10.55,0.8)
    {$F=\mathcal F\cap\mathbb Z^2$};

\end{tikzpicture}
\caption{An illustration of a $(2,\varepsilon,\boldsymbol\rho)$-graded set.}
\label{fig:graded-set}
\end{figure}
\endgroup

\smallskip

To apply the PDUC in the multiscale analysis, we need a version compatible with free sites: it must give a lower bound on the number of sites outside a given graded frozen set where the solution has large modulus. We state our main PDUC result as follows:

Let $e_1,e_2,\cdots,e_d$ be the standard basis of $\Z^d$. 
Recall the operator $H$ is given by \eqref{eq:ABM-schrodinger-operator}. Thus the equation $Hu=Eu$ can be written as
\begin{equation}\label{eq:pointwise-equation}
\sum_{i=1}^d\bigl(u(x+e_i)+u(x-e_i)\bigr)=(2d-E+\lambda V_x)u(x).
\end{equation}

\begin{theorem}\label{thm:duc-linear}
Fix $d\ge2$, $\lambda>0$, a compact interval $I\subset\mathbb R$, $K\ge0$,
and $N\in\mathbb Z_{\ge1}$. There is $\varepsilon_0>0$ such that, for all $0<\varepsilon<\varepsilon_0$,
there are scale $ L_*=L_*(N,\varepsilon)$ and constants
$a,C,c,R_0>0$, depending on
$d,\lambda,I,K,N,\varepsilon$, such that the following holds.

Assume $\rho_1\ge L_*$ and $R\ge R_0$. Let $F$ be a fixed discrete $(N,\varepsilon,\boldsymbol\rho)$-graded set that is normal in $Q_R$.
For arbitrary prescribe fixed values $v_F\in[0,1]^{F\cap Q_{R-1}}$, set
$V_x=v_F(x)$ on $F\cap Q_{R-1}$ (i.e., the frozen sites) and let
\[\P(V_x=0)=\P(V_x=1)=\frac{1}{2},\quad \forall \ x\in Q_{R-1}\setminus F,\, \text{independently} .\]
Then for every fixed $E_0\in I$, there is
an event $\mathcal E_R(F,v_F,E_0) \subset \{0,1\}^{Q_{R-1}\setminus F}$, such that
\begin{equation}\label{eq:fsa-probability}
\mathbb P\bigl[\mathcal E_R(F,v_F,E_0)\mid V_F=v_F\bigr]
\ge
1-\exp\big(
-c(R/\log R)^{(d-1)/d}
\big),
\end{equation}
and on this event, for any $E\in I$ and non-zero
$u:Q_R\to\mathbb C$ satisfying \eqref{eq:pointwise-equation} for $x\in Q_{R-1}$, and
\[
|E-E_0|\le e^{-aR},\qquad
\|u\|_{\infty,Q_R}\le e^{KR}|u(0)|.
\]
one has
\begin{equation}\label{eq:fsa-quadratic-count}
\left|\left\{
x\in Q_R\setminus F:
|u(x)|\ge e^{-CR}|u(0)|
\right\}\right|
\ge \frac{cR^d}{(\log R)^d}.
\end{equation}
\end{theorem}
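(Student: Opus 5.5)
We prove Theorem~\ref{thm:duc-linear} by the bootstrap on large values outlined in Section~\ref{ssec:iop}, with Lemma~\ref{lem:propagation} as the one-step engine. Normalize $u(0)=1$. We fix a finite ladder of scales $R=R_0'>R_1'>\dots$ and thresholds $C_1<C_2<\dots$, so that the inner cubes shrink by a fixed factor and the total amplitude loss is $e^{-CR}$. We run $k\approx\log\log R$ rounds, since the exponent deficit $d-\beta_k$ contracts geometrically by $\gamma=(d-1)/d$.

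\textbf{Initialization.} The cone property (Proposition~\ref{prop:cone-property}, from \cite{LZ22}) propagates from $u(0)$ along a monotone path. It gives $\gtrsim R$ sites in $Q_{r}$ with $|u|\ge e^{-C_1R}$. Because $F$ is graded and normal, with $\rho_1\ge L_*$, we can choose the cone paths to avoid $F$ except on a negligible fraction. The same sparsity gives two further facts used below: every line family loses only an $o(1)$ fraction of its sites to $F$, and the blocks retain free sites that can serve as bits.

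\textbf{One amplification step.} Suppose $S\subset Q_r\setminus F$ has $|S|\ge R^\beta$ and $|u|\ge e^{-C_iR}$ on $S$. Apply the Loomis--Whitney lemma (Lemma~\ref{lem:many-lattice-lines}) to pick a direction $\nu\in\mathcal V$ for which $\ge(|S|/2)^\gamma$ parallel lines meet $S$. Build the source blocks $\mathcal J_f$ (free sites of those lines in $Q_r$) and the target sets $\mathcal K_f$ (about $R$ free sites on the shifted line beyond $Q_r$), ordered as in Construction~\ref{construction:fc-system}. Next pick a bowl $\mathfrak B=T_s$ with $r<s<R$. There are $\asymp R$ disjoint admissible radii, so some bowl satisfies $p=|\mathfrak B\cap S'|\lesssim|S'|/R$, where $S'$ is the next-threshold set; if no such bowl exists, $|S'|$ is already large and the step is done. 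Solving \eqref{eq:pointwise-equation} in the $e_1$ direction from $u|_{\mathfrak B}$ expresses every source and target value as a linear functional of $v=u|_{\mathfrak B\cap S'}\in\C^p$, plus an error from the discarded boundary values. That error is at most $e^{(C'' - C_{i+1})R}$, which we make negligible by the threshold gap. The flip identity $Y^{(1)}-Y^{(0)}=\pm\lambda S$ shows this is a $(\lambda,e^{O(R)})$-linear-martingale system, conditional on the potentials outside $Q_r$ and on the frozen $v_F$. The energy enters the coefficients, so we fix $E_0$ and absorb $|E-E_0|\le e^{-aR}$ as a further $e^{-aR+O(R)}$ perturbation. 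The input body is $B_{R_{\rm in}}$ with $R_{\rm in}=\sqrt p\,e^{KR}$.

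Now apply Lemma~\ref{lem:propagation} with $\delta=e^{-C_iR}$ and $\tau=e^{-C_{i+1}R}$, taking $C_{i+1}-C_i$ large so that $\log(h/D_p\tau)\asymp R$ and $\log(C_0R_{\rm in}/\tau)\asymp R$. This gives $m=O(p)$. Every source block meeting $S$ is active, so $P_\delta\ge(|S|/2)^\gamma$, and hence
\[
|S'|\ge M_\tau\gtrsim R|S|^\gamma-R\bigl[p\log R+L\bigr].
\]
Since $p\lesssim|S'|/R$, the term $Rp\log R$ is at most $|S'|\log R$, which is affordable only up to log factors. This yields $|S'|\gtrsim R|S|^\gamma/\log R$ once we take $L\asymp(R/\log R)^{\gamma}$ with $L\ll|S|^\gamma$. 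We expect the bookkeeping here to be the main obstacle. The choice of $L$ comes from the first round, $|S|\sim R/\log R$; it sets the failure probability $\exp(-c(R/\log R)^{(d-1)/d})$ and explains the $(\log R)$ losses.

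\textbf{Iteration and conclusion.} The rounds give $\beta_{i+1}=1+\gamma\beta_i$ up to the log losses. We track $|S_i|\ge c_iR^{\beta_i}(\log R)^{-\beta_i}$, so that the limit is $cR^d/(\log R)^d$; we need $c_i$ bounded below, and we would check that the geometric sum of losses converges. After $O(\log\log R)$ rounds we stop at the final threshold $C$. Then we take a union bound over the rounds and over the finitely many bowls and directions, all of which are polynomially many in $R$, while the failure probability stays $\exp(-c(R/\log R)^\gamma)$. Finally, we must handle the conditioning on outer potentials when $r$ changes between rounds. We will do this by designating, from the start, disjoint annular regions as the free bits of each round, so that every round uses fresh randomness.
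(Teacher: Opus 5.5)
Your architecture matches the paper's: cone seed, Loomis--Whitney direction, bowl pigeonholing, flip identity, Lemma~\ref{lem:propagation}, and about $\log\log R$ rounds. However, the bookkeeping you defer is exactly where your plan, as written, fails to give the stated theorem. Your cubes shrink by a fixed factor, which is the right shape, but the amplification step is run as if every round lived at scale $R$. You use bowls with $r<s<R$ and $\asymp R$ radii, $\asymp R$ targets per block, and a gap $C_{i+1}-C_i$ with $\log(h/D_p\tau)\asymp R$, hence $m=O(p)$.

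This cannot give amplitude $e^{-CR}$. The discarded boundary values are controlled only through the reconstruction bound $e^{C_{\rm recon}s}$. So the threshold defining $S'$, which is also the next round's source threshold, must lie a factor $e^{-\Theta(s)}$ below the target threshold $\tau$. In particular you need three thresholds per round; your choice $\tau=e^{-C_{i+1}R}$ leaves the truncation error above $\tau$. With $s\asymp R$ in each of about $\log\log R$ rounds, the final threshold is $e^{-\Theta(R\log\log R)}$.

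The missing idea is to make every round-$j$ quantity proportional to a summable weight. The paper takes $w_j=\gamma^{J-1-j}$ with $\sum_j w_j\le d$ and:
\begin{itemize}
\item annulus width $\ell_j\asymp Rw_j$;
\item bowl radius $s_j\le Rw_j/16$, so reconstruction costs only $e^{DRw_j}$;
\item thresholds $H_{j+1}=H_je^{-BRw_j}$, with $\tau_j=H_je^{-BRw_j/2}$ in between.
\end{itemize}
The price is that $\log(C_0R_{\rm in}/\tau_j)$ stays $\Theta(R)$, since $R_{\rm in}=e^{(K+1)R}$ comes from the global a priori bound, while $\log(h/D_p\tau_j)\asymp Rw_j$. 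So the record budget is $m\lesssim p/w_j$, not $O(p)$. Combined with $p\le 10m_{j+1}/\ell_j$, this gives $m_{j+1}\gtrsim Rw_j^2m_j^\gamma/\log R$. You then must check that the accumulated loss $\prod_j w_j^{2\gamma^{J-1-j}}=\gamma^{2\sum_k k\gamma^k}$ is bounded below. It is, because the small weights occur in early rounds, whose influence on $m_J$ is damped by the exponent $\gamma^{J-1-j}$. This is what yields $cR^d/(\log R)^d$ at amplitude $e^{-CR}$.

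Second, your union bound is incomplete. The retained coordinates $\mathcal A=\mathfrak B\cap S'$ depend on $u$, hence on the realized potential, while Lemma~\ref{lem:propagation} needs the system fixed before sampling. You must therefore union-bound over all $\mathcal A\subset\mathfrak B$ with $|\mathcal A|=p$, up to $R^{(d+1)p}$ choices. This is not polynomial in $R$, and with $L\asymp(R/\log R)^\gamma$ the bound fails once $p\log R\gg(R/\log R)^\gamma$, which happens in late rounds. The paper uses the $p$-dependent tail $T_p=(d+1)p\log R+(d+3)\log R+\eta(R/\log R)^\gamma$. The extra loss $\ell_jp\log R\le 10m_{j+1}\log R$ is absorbed exactly like the record term.

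Your fresh-randomness device is unnecessary. For each fixed system, condition on the potential outside $Q_{r_j}\setminus F$, apply the lemma to get the bound $e^{-T_p}$, and average. The union bound over all systems and all rounds then needs no independence between rounds.
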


The rest of this section is devoted to the proof of Theorem \ref{thm:duc-linear}. We will start with some preparations.
Theorem \ref{thm:duc-linear} indeed only concerns about a fixed disorder strength $\lambda$ and energies in a small interval near $E_0$, which is sufficient for the proof of localization. Actually, when the dimension $d\geq 3$, the result can be strengthened uniformly for $(\lambda,E)$ in a large parameter space (but with no graded set involved), which we will discuss in the Appendix \ref{appendix:uniform-duc} for supplementary reading purposes.

\subsection{Sparsity of graded set}
In this subsection, we will discuss some facts concerning about the sparsity of a graded set. 
First, we recall the so-called \textit{cone property}. 

\begin{definition}\label{def:cone-chain}
  For any $x \in \mathbb{Z}^d$, $1 \leq k \leq d$, and $\sigma \in \{1,-1\}$, define the \textit{cone} with root $x$ and direction $\sigma e_k$  to be 
  \[\mathfrak{C}(x,\sigma e_k)=\left\{y\in \Z^d: |y-(x+\sigma e_k)|\leq 1,\ y\neq x\right\}.\]
  A \textit{cone-type chain} is a path $P=(x_0,x_1,\cdots,x_m)$ in $\Z^d$ satisfying
  \[x_{j}\in \mathfrak{C}(x_{j-1},\sigma e_k) \qquad \forall \ 1\leq j\leq m.\]
  Moreover, we call $m$ the length of $P$, $x_0$ the root of $P$, and $\sigma e_k$ the direction of $P$.
\end{definition}

Based on the above definition, we have the following fact (see also \cite[Section 2.1]{LSZ26}):
\begin{proposition}[\protect{\cite[Section 2]{LZ22}}]\label{prop:cone-property}
Fix $d\ge2$, $\lambda>0$, and a compact interval $I\subset\R$, and assume $0\le V_x\le1$.
There is $C_{\rm cone}=C_{\rm cone}(d,\lambda,I)>0$ such that, for every $E\in I$, the following holds.
If \eqref{eq:pointwise-equation} holds at $x+\sigma e_k$, then
\begin{equation}\label{eq:cone-property-lower-bound}
 \max_{y\in\mathfrak C(x,\sigma e_k)}|u(y)|
 \ge e^{-C_{\rm cone}}|u(x)|.
\end{equation}

Suppose that $u:Q_r(a)\to\C$ satisfies \eqref{eq:pointwise-equation} on $Q_{r-1}(a)$.
For every $x_0\in Q_r(a)$, every direction $\sigma e_k$, and every integer $m\ge0$ satisfying
\[
 2m\le r-|x_0-a|_\infty,
\]
there is a cone-type chain $P=(x_0,\ldots,x_m)\subset Q_r(a)$ in that direction such that
\[
 |u(x_j)|\ge e^{-C_{\rm cone}j}|u(x_0)|,
 \qquad 0\le j\le m.
\]
In particular, if $r\ge4$ and $x_0\in Q_{r/2}(a)$, one may take $m=\lfloor r/4\rfloor$.

If instead $u$ is an eigenfunction on $Q_r(a)$ with zero Dirichlet boundary conditions and $u(x_0)\ne0$, the same chain conclusion holds whenever
\[
 2m\le r-\sigma(x_0-a)_k.
\]
Thus, choosing $\sigma$ so that $\sigma(x_0-a)_k\le0$, one may take $m=\lfloor r/2\rfloor$ even when $x_0$ lies on the boundary.
\end{proposition}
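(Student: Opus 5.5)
The plan is to prove the one-step inequality \eqref{eq:cone-property-lower-bound} directly from the equation at the apex $z=x+\sigma e_k$, and then build the chain greedily, using that bound at each step. For the one-step bound, write \eqref{eq:pointwise-equation} at $z$ and solve for the neighbour $x=z-\sigma e_k$:
\[
u(x)=(2d-E+\lambda V_z)\,u(z)-u(x+2\sigma e_k)-\sum_{i\ne k}\bigl(u(z+e_i)+u(z-e_i)\bigr).
\]
Every site appearing on the right, namely $z$, $x+2\sigma e_k$ and $z\pm e_i$ for $i\ne k$, is at Euclidean distance at most $1$ from $z$ and differs from $x$. Hence all of them lie in $\mathfrak C(x,\sigma e_k)$, and there are $2d$ of them. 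Put $A=2d+\sup_{E\in I}|E|+\lambda$, using $0\le V_z\le1$. Then
\[
|u(x)|\le (A+2d-1)\max_{y\in\mathfrak C(x,\sigma e_k)}|u(y)|,
\]
so $C_{\rm cone}=\log(A+2d)$ works. It depends only on $d,\lambda,I$.

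For the chain, choose $x_j$ greedily as a maximizer of $|u|$ over $\mathfrak C(x_{j-1},\sigma e_k)$. Applying the one-step bound $j$ times gives $|u(x_j)|\ge e^{-C_{\rm cone}j}|u(x_0)|$. The only thing to verify is that the equation holds at $x_{j-1}+\sigma e_k$ for every $1\le j\le m$. A cone step changes the $k$-th coordinate by $1$ or $2$ and changes at most one other coordinate, by $1$. Hence $|x_j-a|_\infty\le|x_0-a|_\infty+2j$ and
\[
|x_{j-1}+\sigma e_k-a|_\infty\le |x_0-a|_\infty+2j-1\le r-1
\]
whenever $2j\le r-|x_0-a|_\infty$. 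So the equation is available on $Q_{r-1}(a)$ and the chain stays inside $Q_r(a)$. For the special case, if $x_0\in Q_{r/2}(a)$ and $m=\lfloor r/4\rfloor$, then $2m\le r/2\le r-|x_0-a|_\infty$.

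The Dirichlet case is the step needing the most care, since we lose the uniform $\ell^\infty$ room in the transverse directions. Extend $u$ by zero outside $Q_r(a)$; then \eqref{eq:pointwise-equation} holds at every site of $Q_r(a)$. Along the chain, $|u(x_j)|\ge e^{-C_{\rm cone}j}|u(x_0)|>0$, so each $x_j$ automatically lies in $Q_r(a)$, which controls the transverse coordinates for free. Thus $z=x_{j-1}+\sigma e_k$ has the same transverse coordinates as a point of $Q_r(a)$, and it remains to control its $k$-th coordinate. Each cone step increases $\sigma(\cdot-a)_k$ by $1$ or $2$, so
\[
\sigma(z-a)_k\le\sigma(x_0-a)_k+2(j-1)+1\le r
\]
under $2j\le r-\sigma(x_0-a)_k$. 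The lower bound $\sigma(z-a)_k\ge -r$ is inherited from $x_{j-1}\in Q_r(a)$, since the coordinate only increases. Hence $z\in Q_r(a)$, the equation applies there, and the induction closes. Choosing $\sigma$ with $\sigma(x_0-a)_k\le 0$ gives the admissible length $m=\lfloor r/2\rfloor$.
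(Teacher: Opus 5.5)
Your proof is correct. It is the standard argument behind the cited \cite[Section 2]{LZ22}; the paper only quotes the result and gives no proof of its own. You solve \eqref{eq:pointwise-equation} at the apex $x+\sigma e_k$ for $u(x)$ to get the one-step bound, then build the chain from maximizers while checking inductively that each apex lies where the equation holds, and in the Dirichlet case you use the zero extension with $u(x_j)\neq 0$ to keep the chain inside $Q_r(a)$.
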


The next lemma reveals that a cone-type chain can only intersect with a graded set in a small portion, due to sparsity of the graded set. 

\begin{lemma}\label{lem:fsa-packing}
Suppose that $P=(x_0,\ldots,x_m)$ is a cone-type chain and $F$ is a discrete $(N,\varepsilon,\boldsymbol\rho)$-graded set. Then
\begin{equation}\label{eq:fsa-packing}
|\{j:x_j\in F\}|\lesssim \varepsilon m+1+N\sum_{i:\mathcal F_i\cap P\ne\varnothing}(m\rho_i^{-\varepsilon}+\rho_i).
\end{equation}
Moreover, there exists $\varepsilon_0>0$ such that, for every $0<\varepsilon<\varepsilon_0$ and every $N$, one can find a scale $L_*=L_*(N,\varepsilon)$ with the following property. If $\rho_1\ge L_*$, then for each fixed $b>0$ there exist a sufficiently large scale $R_{\rm spa}=R_{\rm spa}(b,d,N,\varepsilon)$ such that, for any sufficiently large $R\geq R_{\rm spa}$
, if a discrete $(N,\varepsilon,\boldsymbol\rho)$-graded set $F$ is normal in $Q_R$, then  
\begin{equation}\label{eq:ample-free-site-in-long-chain}
  |P\setminus F| \geq 3m/4
\end{equation}
for every cone-type chain $P\subset Q_R$ with length $m\geq bR/\log(R)$.
\end{lemma}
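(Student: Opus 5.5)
The plan is a direct packing count built on one geometric fact: a cone-type chain is \emph{monotone and transversal} in its direction. Write $x_j=x_{j-1}+\sigma e_k+w_j$. The cone condition $|y-(x+\sigma e_k)|\le1$, $y\ne x$ forces $w_j\in\{0,\sigma e_k\}\cup\{\pm e_i:i\ne k\}$. Hence the $k$-th coordinate $\sigma(x_j)_k$ increases by $1$ or $2$ at each step, and every step has Euclidean length at most $2$. Two consequences follow.
\begin{itemize}
\item[(a)] Any ball of radius $\rho$ contains at most $2\rho+1$ points of $P$, since the $k$-coordinates of chain points are distinct and must lie in an interval of length $2\rho$.
\item[(b)] If $x_j$ and $x_{j'}$ lie in two balls at mutual distance at least $D$, then $|x_j-x_{j'}|\ge D$, so $|j-j'|\ge D/2$.
\end{itemize}

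\textbf{Proof of \eqref{eq:fsa-packing}.} Handle each layer of the graded set separately.
\begin{itemize}
\item For $\mathcal F_0$: by (b), consecutive unit balls met by $P$ are at least $\varepsilon^{-1}/2$ steps apart. So $P$ meets at most $1+2\varepsilon m$ of them, and by (a) each contains $O(1)$ chain points. This gives $\lesssim\varepsilon m+1$.
\item For a scale $i\ge1$ with $\mathcal F_i\cap P\neq\varnothing$ and a family $t\le N$: distinct balls of $\mathcal F_i^{(t)}$ are $\rho_i^{1+\varepsilon}$-separated. By (b), $P$ meets at most $1+2m\rho_i^{-1-\varepsilon}$ of them, each containing at most $2\rho_i+1$ chain points by (a). The product is $\lesssim \rho_i+m\rho_i^{-\varepsilon}$, using $\rho_i>1$ and $m\rho_i^{-1-\varepsilon}\le m\rho_i^{-\varepsilon}$.
\end{itemize}
Summing over the $N$ families and over the scales meeting $P$ gives \eqref{eq:fsa-packing}.

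\textbf{Proof of \eqref{eq:ample-free-site-in-long-chain}.} Let $A$ be the implicit constant in \eqref{eq:fsa-packing}. It suffices to make each of the four terms $A\varepsilon m$, $A$, $AN\sum_i m\rho_i^{-\varepsilon}$ and $AN\sum_i\rho_i$ at most $m/16$.
\begin{itemize}
\item Choose $\varepsilon_0$ with $A\varepsilon_0\le 1/16$; this handles $A\varepsilon m$.
\item The growth $\rho_{i+1}\ge\rho_i^{1+2\varepsilon}$ gives $\rho_i\ge\rho_1^{(1+2\varepsilon)^{i-1}}$. Hence $\sum_i\rho_i^{-\varepsilon}\le C(\varepsilon)\rho_1^{-\varepsilon}$, once $\rho_1$ is large enough that the series decays at least geometrically. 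Choosing $L_*(N,\varepsilon)$ so that $ANC(\varepsilon)L_*^{-\varepsilon}\le1/16$ handles the third term.
\item For the last term, note that $P\subset Q_R$. Normality of $F$ in $Q_R$ then says every scale $i$ with $\mathcal F_i\cap P\ne\varnothing$ has $\rho_i\le(2\sqrt dR)^{1-\varepsilon/2}$. For $\rho_1\ge L_*$ large we have $\rho_i\le\rho_{i+1}^{1/(1+2\varepsilon)}\le\rho_{i+1}/2$, so the sum over these scales is at most twice the largest term, i.e.\ $\lesssim R^{1-\varepsilon/2}$. Thus $AN\sum_i\rho_i\lesssim_{d} NR^{1-\varepsilon/2}$, which is at most $bR/(16\log R)\le m/16$ once $R\ge R_{\rm spa}(b,d,N,\varepsilon)$.
\item The constant term $A$ is absorbed into $m/16$ for the same large $R$, since $m\ge bR/\log R\to\infty$.
\end{itemize}
Adding up, $|P\cap F|\le m/4$, hence $|P\setminus F|\ge 3m/4$.

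\textbf{Main obstacle.} The only delicate point is the interaction between the two parts of the argument. The constant in \eqref{eq:fsa-packing} must be independent of $\varepsilon$, $N$ and the scales, which is why the per-ball count (a) and the hitting-spacing bound (b) are stated with explicit absolute constants. The order of choices is then fixed: first $\varepsilon_0$ from the $\varepsilon m$ term, then $L_*$ from the $m\rho_i^{-\varepsilon}$ term, and finally $R_{\rm spa}$ from the $\rho_i$ term via normality.
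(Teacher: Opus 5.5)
Your proposal is correct and follows essentially the same route as the paper. Monotonicity of the chain in the cone direction bounds the chain points per ball by $O(\rho_i)$, and the step bound $|x_j-x_{j-1}|_2\le 2$ turns the $\rho_i^{1+\varepsilon}$ (resp.\ $\varepsilon^{-1}$) separation into at most $1+2m\rho_i^{-1-\varepsilon}$ (resp.\ $1+2\varepsilon m$) balls met. For the second part you, like the paper, choose $\varepsilon_0$, then $L_*$ via $\sum_i\rho_i^{-\varepsilon}\lesssim\rho_1^{-\varepsilon}$, then $R_{\rm spa}$ via normality and $\sum_{i\le j}\rho_i\le 2\rho_j$. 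Your explicit absorption of the additive constant into $m/16$ is slightly more careful than the paper's rewritten bound, which drops that $+1$.
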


\begin{proof}
Up to a coordinate exchange and a reflection, we can assume that $P$ has direction $e_1$ without loss of generality. Represent $F$ by 
\[F=\mcF\cap \Z^d,\ \mathcal F=\mathcal F_0\cup\cdots\cup\mathcal F_s \]
as in the Definition \ref{def:fsa-graded}. 

Since $x_j\in \mathfrak{C}(x_{j-1},e_1)$ ensures that 
\begin{equation}\label{eq:cone-chain-coordinates-changes}
  1\leq (x_j-x_{j-1})\cdot e_1\leq 2, \ |x_j-x_{j-1}|_2\leq 2, \quad 1\leq j\leq m.
\end{equation}
In each $\mcF_{i}^{(t)}=\cup_{s\geq 1}Z^{(t,s)}$, assume that $P$ encounter $g$ many balls (might repeat)
\[Z^{(t,s_1)},Z^{(t,s_2)},\cdots, Z^{(t,s_g)}\]
successively. Since to travel from one ball to the other, the path $P$ must cost at least 
\[ \frac{\dist_{\ell^2} (Z^{(t,s)},Z^{(t,s')})}{\sup_{j}|x_j-x_{j-1}|_2 } \geq \rho_i^{1+\varepsilon}/2\]
steps. This tells us that 
\[g\leq 1+2m/\rho_{i}^{1+\varepsilon}.\]
Since \eqref{eq:cone-chain-coordinates-changes} ensures that the first coordinates must increase by at least one, 
a single ball contains at most $2\rho_i+2$ vertices of $P$. Therefore,
\begin{equation}\label{eq:chain-in-i-t-grade}
  |\mcF_i^{(t)}\cap P|\leq g (2+2\rho_i) \lesssim m\rho_i^{-\varepsilon}+\rho_i
\end{equation}
holds for each $i\geq 1$ and $1\leq t\leq N$. 

What's more, if we consider $\mcF_0$, open unit balls in $\R^d$ with lattice centers contain only their centers. So the same argument for \eqref{eq:chain-in-i-t-grade} will deduce that 
\begin{equation}\label{eq:chain-in-0-grade}
  |\mcF_0 \cap P| \lesssim \varepsilon m+1.
\end{equation}
Combining \eqref{eq:chain-in-i-t-grade} and \eqref{eq:chain-in-0-grade} yields \eqref{eq:fsa-packing}.

Now rewrite \eqref{eq:fsa-packing} as 
\begin{equation}\label{eq:rewrite-packing}
   |\{j:x_j\in F\}|\lesssim  \left(\varepsilon +N\sum_{i:\mathcal F_i\cap P\ne\varnothing}(\rho_i^{-\varepsilon}+\frac{\rho_i}{m}) \right) m.
\end{equation}
Since when $\varepsilon<1$, for sufficiently large $\rho_1\gg_{\varepsilon} 1$, we have
\begin{equation}\label{eq:fsa-scale-sums}
\sum_i\rho_i^{-\varepsilon}\le \sum_{i\geq 0} \rho_1^{-\varepsilon (1+2\varepsilon)^i} \le2\rho_1^{-\varepsilon},\quad \sum_{i\le j}\rho_i \leq \rho_j (1+\rho_{j-1}^{1-(1+2\varepsilon)}+ \rho_{j-2}^{1-(1+2\varepsilon)^2} +\cdots) \le2\rho_j.
\end{equation}
Therefore, we can choose $\varepsilon\ll 1$ small enough and then $\rho_1 \gg_{\varepsilon,N} 1$ large enough such that the quantity
\[ O\left(\varepsilon+N \sum_{i:\mathcal F_i\cap P\ne\varnothing}\rho_i^{-\varepsilon} \right) \leq 1/8.\]
Moreover, since $F$ is normal in $Q_R$, if $m\geq bR/\log R$ we can estimate
\begin{align*}
  O\left( N m^{-1} \sum_{i:\mathcal F_i\cap P\ne\varnothing} \rho_i  \right) & \lesssim \frac{N\log R}{b R} \operatorname{diam}_{\ell^2} (Q_R(a))^{1-\varepsilon/2} \lesssim_{b,N,d}R^{-\varepsilon/2} \log R,
\end{align*}
which will be smaller than $1/8$ again if we take $R\gg_{b,d,N,\varepsilon} 1$. Therefore \eqref{eq:rewrite-packing} becomes 
\[|P\cap F|\leq m/8+m/8=m/4.\]
This proves \eqref{eq:ample-free-site-in-long-chain}.
\end{proof}

\subsection{Minimal bowl-shape boundary}
This part is to handle the boundary condition of the equation \eqref{eq:pointwise-equation} on a cube $Q_{s}$. When one asks the question that: under what boundary condition, the solution $u$ can be determined on the cube $Q_{s+1}$? There might be different answers. For example, any Dirichlet boundary condition works (although there might be no such solution $u$). However, it will be more convenient to consider the following bowl-shape boundary:

Write a lattice point as $(t,z)\in\mathbb Z\times\mathbb Z^{d-1}$. For an integer $s\ge1$, put
$\Omega_\Sigma=[-s,s]^{d-1}\cap\mathbb Z^{d-1}$ 
and let 
\[\partial^+\Omega_\Sigma=\left\{ z\in \Z^{d-1}\setminus \Omega_\Sigma: \exists \ z'\in \Omega_\Sigma \ {\rm such\ that } \  |z-z'|_1=1  \right\}. \]
We consider the bowl-shape boundary
\begin{equation}\label{eq:fsa-trace}
T_s=\bigl(\{-s-1,-s\}\times\Omega_\Sigma\bigr)\cup\bigl(\{-s,\ldots,s-1\}\times\partial^+\Omega_\Sigma\bigr).
\end{equation}
It's easy to verify that if we give the data of $u$ on the boundary $T_s$, the whole values of $u$ on $Q_{s}$ will be determined by 
\begin{equation}\label{eq:fsa-recurrence}
u(t+1,z)=(2d-E+\lambda V_{t,z})u(t,z)-u(t-1,z)-\sum_{z'\sim z}u(t,z'),\qquad -s\le t\le s-1.
\end{equation}
Here $z\sim z'\in \Z^{d-1}$ stands for $|z-z'|_1=1$. Under such bowl-shape boundary, it is easier to track the propagation of the solution $u$, since \eqref{eq:fsa-recurrence} reveals that $u$ at $(t+1,z)$ will be determined by $u$ in the cone $\mathfrak{C}((t+1,z),-e_1)$. This enables us to solve the equation layer by layer in the direction $e_1$.

Such bowl-shape region has been used by Bourgain-Klein \cite{BK13}, and also plays a role in the proof of 2D PDUC in \cite[Lemma 5.3]{DS20}.

However, due to the existence of the frozen set, we can not directly apply the regular bowl-shape boundary $T_s$. The boundary needs to be adjusted to avoid the graded frozen set $F$. This is because that our final goal is to count the 
sites outside $F$ where $|u|$ is at least a prescribed threshold, and find a boundary containing as few such sites as possible.
This will be clearer in the following proof of Lemma \ref{lem:min-bowl}.

Now let us show how to adjust the boundary. We start our construct from $T_s$. Assume now $F$ is our frozen set. We define the \textit{height function} $h(v)$ on $T_s\cup Q_s$ inductively, by setting
\[h(v)=0, \ {\rm for\ all\ } v\in T_s,\]
and giving the recurrence relationship
\begin{equation}\label{eq:fsa-depth}
h(v)=\mathbf1_{\{v\notin F\}}+\min_{p\in \mathfrak{C}(v,-e_1)}h(p).
\end{equation}
Actually, $h(v)$ is the the minimum number of free sites in a cone-type chain that origins from $T_s$ and ends at $v$. That is to say,
\begin{equation}\label{eq:height-interpretion}
  h(v)=\min_{P:T_s\rightarrow v \atop P {\rm \ is\ a \ cone-type\ chain \ in\ direction\ } e_1.} |P\setminus (F\cup T_s)|.
\end{equation}
Indeed, the right hand side of \eqref{eq:height-interpretion} equals zero on $T_s$, and also satisfies \eqref{eq:fsa-depth} since $v\in \mathfrak{C}(v',e_1)\Leftrightarrow v'\in \mathfrak{C}(v,-e_1)$.

We define the \textit{adjusted bowl-shape boundary} to be the level sets of $h$:
\begin{equation}\label{eq:adjust-boundary}
  \mathfrak{B}_k:=\{ v\in Q_s\setminus F: h(v)=k \}, \quad k\geq 1.
\end{equation}
We also denote
\begin{equation}\label{eq:adjust-region}
  \mathfrak{D}_k:=\{ v\in Q_s: h(v)\geq k \}, \quad k\geq 1.
\end{equation}
See Figure \ref{fig:adjusted-boundaries-two-dimensions} for an illustration in 2D.

\begingroup
\definecolor{bkdepth1}{HTML}{CBDCF3}
\definecolor{bkdepth2}{HTML}{FBE7AA}
\definecolor{bkdepth3}{HTML}{F3CADA}
\definecolor{bkdepth4}{HTML}{CAE5D4}
\definecolor{bkdepth5}{HTML}{DDD1EE}
\definecolor{bkdepth6}{HTML}{F5D0AD}
\definecolor{bkdepth7}{HTML}{C6E4EB}
\definecolor{bkdepth8}{HTML}{E0E7B8}
\definecolor{bkdepth9}{HTML}{D9DCE3}
\definecolor{bkfrozen}{HTML}{008A91}

\begin{figure}[htbp]
\centering
\resizebox{0.98\linewidth}{!}{%
\begin{tikzpicture}[
  x=5mm,y=5mm,font=\small,
  bkpoint/.style={circle,minimum size=3.85mm,inner sep=0pt,
    font=\scriptsize,text=black},
  bktrace/.style={draw=black!65,fill=white,line width=0.55pt},
  bklegend/.style={anchor=west,align=left,inner sep=0pt},
  bkcross/.style={draw=bkfrozen,line width=1.25pt,line cap=round}
]
  \draw[step=1,black!7,line width=0.25pt]
    (-9,-10) grid (9,9);

  \foreach \t in {-10,-9}{
    \foreach \z in {-9,...,9}{
      \expandafter\xdef\csname bkfigdepth@\t @\z\endcsname{0}
      \draw[bktrace]
        (\z-0.17,\t-0.17) rectangle (\z+0.17,\t+0.17);
    }
  }
  \foreach \t in {-9,...,8}{
    \foreach \z in {-10,10}{
      \expandafter\xdef\csname bkfigdepth@\t @\z\endcsname{0}
      \draw[bktrace]
        (\z-0.17,\t-0.17) rectangle (\z+0.17,\t+0.17);
    }
  }

  \foreach \t in {-8,...,9}{
    \pgfmathtruncatemacro{\bkprev}{\t-1}
    \pgfmathtruncatemacro{\bktwoprev}{\t-2}
    \foreach \z in {-9,...,9}{
      \pgfmathtruncatemacro{\bkleft}{\z-1}
      \pgfmathtruncatemacro{\bkright}{\z+1}
      \edef\bkqa{\csname bkfigdepth@\bkprev @\z\endcsname}
      \edef\bkqb{\csname bkfigdepth@\bktwoprev @\z\endcsname}
      \edef\bkqc{\csname bkfigdepth@\bkprev @\bkleft\endcsname}
      \edef\bkqd{\csname bkfigdepth@\bkprev @\bkright\endcsname}

      \def\bkisfrozen{0}
      \ifnum\t=-5\relax
        \ifnum\z=-3\relax\def\bkisfrozen{1}\fi
      \fi
      \ifnum\t=0\relax
        \ifnum\z=8\relax\def\bkisfrozen{1}\fi
      \fi
      \pgfmathtruncatemacro{\bkq}{
        min(\bkqa,\bkqb,\bkqc,\bkqd)+1-\bkisfrozen}
      \expandafter\xdef\csname bkfigdepth@\t @\z\endcsname{\bkq}

      \pgfmathtruncatemacro{\bkqzero}{min(floor((\t+10)/2),10-abs(\z))}
      \ifnum\bkisfrozen=1\relax
        \draw[bkcross] (\z-0.23,\t-0.23) -- (\z+0.23,\t+0.23);
        \draw[bkcross] (\z-0.23,\t+0.23) -- (\z+0.23,\t-0.23);
      \else
        \ifnum\bkq<\bkqzero\relax
          \node[bkpoint,fill=bkdepth\bkq,draw=black!65,line width=0.5pt]
            at (\z,\t) {$\bkq$};
        \else
          \node[bkpoint,fill=bkdepth\bkq]
            at (\z,\t) {$\bkq$};
        \fi
      \fi
      \ifdefined\BkVerify
        \typeout{BKFIG t=\t,z=\z,q=\bkq,fr=\bkisfrozen,qzero=\bkqzero}
      \fi
    }
  }

  \draw[-{Latex[length=2mm]},black!75]
    (-11.3,-10.6) -- (-11.3,10.1) node[above] {$t$};
  \draw[-{Latex[length=2mm]},black!75]
    (-10.6,-11.6) -- (10.7,-11.6) node[right] {$z$};
  \foreach \t in {-10,-8,...,8}{
    \draw[black!65] (-11.42,\t) -- (-11.18,\t);
    \node[anchor=east,font=\scriptsize,inner sep=2pt]
      at (-11.45,\t) {$\t$};
  }
  \foreach \z in {-10,-8,...,10}{
    \draw[black!65] (\z,-11.72) -- (\z,-11.48);
    \node[anchor=north,font=\scriptsize,inner sep=3pt]
      at (\z,-11.8) {$\z$};
  }

  \begin{scope}[yshift=-23mm]
    \node[bklegend,font=\bfseries] at (12.7,8.8)
      {Adjusted boundary};
    \foreach \k/\x/\y in {
      1/13/7.6,2/16.6/7.6,3/20.2/7.6,
      4/13/6.2,5/16.6/6.2,6/20.2/6.2,
      7/13/4.8,8/16.6/4.8,9/20.2/4.8}{
      \node[bkpoint,fill=bkdepth\k] at (\x,\y) {$\k$};
      \node[bklegend] at (\x+0.7,\y) {$\mathfrak{B}_{\k}$};
    }
    \draw[bktrace] (12.83,3.03) rectangle (13.17,3.37);
    \node[bklegend] at (14,3.2) {$T_9$};
    \draw[bkcross] (12.77,1.07) -- (13.23,1.53);
    \draw[bkcross] (12.77,1.53) -- (13.23,1.07);
    \node[bklegend] at (14,1.3) {Frozen site};
    \node[bkpoint,fill=bkdepth3,draw=black!65,line width=0.5pt]
      at (13,-0.6) {$3$};
    \node[bklegend] at (14,-0.6)
      {Locations where $ h<\bar h$.\\$\bar h$: height with no frozen site.};
  \end{scope}

\end{tikzpicture}%
}
\caption{An illustration of the height function and the adjusted boundaries in 2D.}
\label{fig:adjusted-boundaries-two-dimensions}
\end{figure}
\endgroup

\smallskip

Next we prove that the data of $u$ on the boundary $\mathfrak{B}_k$ can reconstruct the data of $u$ in $\mathfrak{D}_k$.
\begin{theorem}\label{thm:reconstruction}
  Assume \eqref{eq:pointwise-equation} holds on $Q_s$, and assume that all the value of $V_x,\ x\in Q_s$ has been revealed. Then for each $k\geq 1$, if we give the values of $u$ on $\mathfrak{B}_k$, then all values of $u$ in region $\mathfrak{D}_k$ will be determined uniquely through the equation  \eqref{eq:pointwise-equation}.

  More precisely, for each $v\in \mathfrak{D}_k$, $u(v)$ is a linear functional about the initial data $u|_{\mathfrak{B}_k}$, i.e. 
  \begin{equation}\label{eq:reconstruct-formula}
    u(v)=\sum_{w\in \mathfrak{B}_k} \mathscr{W}(v,w;d,\lambda,E, V|_{Q_s}) u(w),
  \end{equation}
  with every coefficient being real and satisfying 
  \begin{equation}\label{eq:bound-reconstruction-coefficients}
    \sup_{E\in I,k} \| \mathscr{W}(v,w) \|_{\ell^{\infty}(\mathfrak{D}_k\times \mathfrak{B}_k)} \leq e^{C_{\rm recon} s},\quad  \sup_{E\in I,k} \| \partial_E \mathscr{W}(v,w) \|_{\ell^{\infty}(\mathfrak{D}_k\times \mathfrak{B}_k)} \leq e^{C_{\rm recon} s}.
  \end{equation}
  Moreover, we have 
  \begin{equation}\label{eq:reconstruct-ell-infty-control}
      \| u \|_{\ell^{\infty}(\mathfrak{D}_k)} \leq e^{C_{\rm recon} s} \|u \|_{\ell^\infty(\mathfrak{B}_k)}.
  \end{equation}
  Here $C_{\rm recon}=C_{\rm recon}(d,I,\lambda)$ is a large constant depending on $d,I$ and $\lambda$. 
\end{theorem}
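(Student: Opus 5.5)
The plan is to reconstruct $u$ on $\mathfrak D_k$ by induction along the $e_1$ direction, using the recurrence \eqref{eq:fsa-recurrence}, i.e.\ the equation at $(t,z)$ solved for $u(t+1,z)$.

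\textbf{Structure of $\mathfrak D_k$.} By \eqref{eq:fsa-depth}, $h(v)$ is at most $1$ more than the height of any point of the backward cone $\mathfrak C(v,-e_1)$, and $h$ is nondecreasing along forward cone moves. So if $v\in\mathfrak D_k$ and $h(v)>k$, or if $h(v)=k$ with $v\in F$, then every $p\in\mathfrak C(v,-e_1)$ satisfies $h(p)\ge k$. Hence such $p$ lies in $\mathfrak D_k$, or in $T_s$ when $k=0$, which does not occur since $k\ge1$. The points of $\mathfrak D_k\setminus\mathfrak B_k$ are therefore exactly those whose whole backward cone lies in $\mathfrak D_k$. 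I also need $\mathfrak C(v,-e_1)\subset Q_s\cup T_s$, so that the recurrence is meaningful; this holds because $T_s$ is exactly the backward boundary layer of $Q_s$. Write $v=(t+1,z)$. Its backward cone is $\{(t,z'):z'\sim z\}\cup\{(t,z),(t-1,z)\}$, which is precisely the set of neighbours of $(t,z)$ other than $v$, together with $(t,z)$ itself. Equation \eqref{eq:pointwise-equation} at $x=(t,z)\in Q_{s}$ then gives $u(v)$ as a combination of values on $\mathfrak C(v,-e_1)$ with coefficients $\pm1$ and $2d-E+\lambda V_{(t,z)}$. All of these are real and bounded by $A:=2d+\sup_I|E|+\lambda$.

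\textbf{Induction.} Order the points of $\mathfrak D_k$ by first coordinate $t$. For $v\in\mathfrak B_k$ the value $u(v)$ is given data, and the corresponding coefficient is $\mathscr W(v,w)=\delta_{vw}$. Frozen points $v\in F$ with $h(v)=k$ are not in $\mathfrak B_k$, but by the observation above they fall under the recursive case. For $v\in\mathfrak D_k\setminus\mathfrak B_k$, every point of the backward cone lies in $\mathfrak D_k$ and has strictly smaller first coordinate. By induction each such value is already a real linear functional of $u|_{\mathfrak B_k}$, so the recurrence expresses $u(v)$ as the real linear combination \eqref{eq:reconstruct-formula}. Uniqueness follows because each step is forced by the equation. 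One caveat: points of $\mathfrak D_k$ lying in $\mathfrak B_{k'}$ with $k'>k$ are not data; they are reconstructed, which is consistent with the observation above.

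\textbf{Bounds.} Let $\mathscr W$ be the coefficient array at layer $t$. Each recursion step gives
\[
\max_w|\mathscr W(v,w)|\le (2d+1+A)\max_{p\in \mathfrak C(v,-e_1)}\max_w|\mathscr W(p,w)|.
\]
The depth of the recursion is at most the number of layers, $2s+2$. This yields $|\mathscr W|\le (2d+1+A)^{2s+2}\le e^{C s}$. The same argument applied to $|u|$ gives \eqref{eq:reconstruct-ell-infty-control}. For $\partial_E$, differentiate the recurrence: $\partial_E\mathscr W(v,\cdot)=-\mathscr W((t,z),\cdot)+\sum(\text{coeff})\,\partial_E\mathscr W(p,\cdot)$. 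If $D_t$ denotes the layer-$t$ maximum of $|\partial_E\mathscr W|$, then $D_{t+1}\le (2d+1+A)D_t+e^{Cs}$. Iterating over $O(s)$ layers gives $e^{C's}$. All constants depend only on $d,I,\lambda$, uniformly in $E\in I$ and in $k$.

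The main obstacle is the first step: checking that the backward cone of every non-boundary point of $\mathfrak D_k$ stays inside $\mathfrak D_k$ (in particular, never reaches $T_s$ or leaves $Q_s$), and that frozen points of height $k$ are handled correctly. Both facts follow from the minimum in \eqref{eq:fsa-depth}.
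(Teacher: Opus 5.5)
Your proposal is correct and follows the paper's proof: you obtain the closure $\mathfrak C(v,-e_1)\subset\mathfrak D_k$ for $v\in\mathfrak D_k\setminus\mathfrak B_k$ from the height recurrence \eqref{eq:fsa-depth}, then reconstruct forward along $e_1$ via \eqref{eq:fsa-recurrence}, with your layer-by-layer recursive bounds in place of the paper's explicit path-sum formula \eqref{eq:exact-formula-reconstruct-coefficients} (and getting \eqref{eq:reconstruct-ell-infty-control} without the paper's extra factor $|\mathfrak B_k|$). One small correction: $h$ is not nondecreasing along forward cone moves (it can drop near the lateral part of $T_s$), but you never need this, because for $v\in F$ with $h(v)=k$ the recurrence gives $\min_{p\in\mathfrak C(v,-e_1)}h(p)=h(v)=k$ directly.
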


\begin{proof}
Fix $k\ge1$. If $\mathfrak D_k=\varnothing$, there is
nothing to prove. For $v\in\mathbb Z^d$, write
$v_1=v\cdot e_1$. It's easy to see that
\begin{equation}\label{eq:reconstruction-time-order}
p\in\mathfrak C(v,-e_1)
\quad\Longrightarrow\quad
p_1\in\{v_1-1,v_1-2\}.
\end{equation}
For $v\in\mathfrak D_k\setminus\mathfrak B_k$,
the height recurrence gives
\[
\min_{p\in\mathfrak C(v,-e_1)}h(p)
=
h(v)-\mathbf 1_{\{v\notin F\}}.
\]
Moreover,
\[
\begin{cases}
h(v)\ge k,
&v\in\mathfrak D_k\cap F,\\
h(v)\ge k+1,
&v\in\mathfrak D_k\setminus(F\cup\mathfrak B_k).
\end{cases}
\]
Therefore
\begin{equation}\label{eq:reconstruction-height-closure}
\min_{p\in\mathfrak C(v,-e_1)}h(p)\ge k,
\qquad
\forall \ v\in\mathfrak D_k\setminus\mathfrak B_k,
\end{equation}
which indicates that
\begin{equation}\label{eq:reconstruction-domain-closure}
\mathfrak C(v,-e_1)\subset\mathfrak D_k,
\qquad
\forall \ v\in\mathfrak D_k\setminus\mathfrak B_k.
\end{equation}

The relationship \eqref{eq:reconstruction-domain-closure} enables us to execute the following procedure: for each $v\in \mathfrak{D}_k\setminus \mathfrak{B}_k$, use \eqref{eq:fsa-recurrence}, we essentially can expand
\begin{equation}\label{eq:one-step-downside-recurrence}
  u(v)=\sum_{w\in \mathfrak{C}(v,-e_1)} {\rm weight}(v,w) \cdot  u(w)
\end{equation}
as a linear combination of values of $u$ on $\mathfrak{C}(v,-e_1)\subset \mathfrak{D}_k$. The coefficients in the right hand side of above is independent of $u$ and deterministic, since $V_x,x\in Q_s$ have been determined. Next, we consider each $w\in \mathfrak{C}(v,-e_1)$ 
in the right hand side summation. If $w\in \mathfrak{B}_k$, then we stop the expansion; else, if $w\in \mathfrak{D}_k\setminus \mathfrak{B}_k$, we keep expanding $u(w)$ by a linear combination of values of $u$ in $\mathfrak{C}(w,-e_1)$. Such expansion procedure will end in 
finite steps, or otherwise we will obtain a infinity cone-type chain $P$ lies in $\mathfrak{D}_k$ with direction $-e_1$. However, since the first coordinate of elements in $P$ decreases, it will definitely go out of $Q_s\supset \mathfrak{D}_k$, a contradiction. Therefore, we finally can represent 
$u(v)$ by 
\[u(v)=\sum_{x\in \mathfrak{B}_k} \mathscr{W}(v,x) \cdot  u(x),\]
a linear combination of values of $u$ in $\mathfrak{B}_k$ with deterministic coefficients independent of $u$.

More precisely, one can check that in \eqref{eq:one-step-downside-recurrence}, the coefficients are 
\begin{equation}\label{eq:one-step-coefficient}
  {\rm weight}(v,w):=\begin{cases}
        2d-E+\lambda V_{w},& {\rm \ if } \ w=v-e_1;\\
        -1,& {\rm \ if} \ w\in \mathfrak{C}(v,-e_1)\setminus \{v-e_1\}.
  \end{cases}
\end{equation}
Therefore it's easy to computed that 
\begin{equation}\label{eq:exact-formula-reconstruct-coefficients}
  \mathscr{W}(v,w)=\sum_{P \in  \mathscr P_k(v,w)} {\rm weight}(P),
\end{equation}
\begin{equation}\label{eq:weight-P}
  {\rm weight}(P)=\prod_{1\leq i\leq m} {\rm weight}(x_{i-1},x_i) \ {\rm if} \ P=(x_0,x_1,\cdots,x_m). 
\end{equation}
Here, we define the set of admissible cone-type chains by 
\[
\begin{aligned}
\mathscr P_k(v,w)=\bigl\{P=(x_0,\ldots,x_m):\;&x_0=v,\quad x_m=w,\\
&x_{j+1}\in\mathfrak C(x_j,-e_1)
       &&(0\le j<m),\\
&x_j\in\mathfrak D_k\setminus\mathfrak B_k
       &&(0\le j<m)\bigr\},
\end{aligned}
\]
We allow paths of length zero and assign weight one
to the empty product. Thus, for $v,w\in\mathfrak B_k$,
\[
\mathscr W(v,w)=\mathbf 1_{\{v=w\}}.
\]

Since we take $0\leq V_x\leq 1$, \eqref{eq:one-step-coefficient} can be bounded by $2d+M_I+\lambda, M_I=\sup_{E\in I}|E|$ uniformly about energy $E$, and since previous discussion also reveals that 
$P:v\in \mathfrak{D}_k\rightarrow w\in \mathfrak{B}_k$ has length at most $2s$ (it does not go out of $Q_s$),
\[ \sup_{E\in I}|{\rm weight}(P)|\leq(2d+M_I +\lambda)^{2s+1},\ \sup_{E\in I}|\partial_E \ {\rm weight}(P)|\leq(2s+1)(2d+M_I +\lambda)^{2s} \]
for every possible $P$. Moreover, since each cone has $2d$ many point, the total number of cone-type chains $P$ from $v$ to $w$ with length at most $2s+1$, is smaller than $(2d)^{2s}$. Thus,
\[|\mathscr P_k(v,w)| \leq (2d)^{2s},\]
and \eqref{eq:exact-formula-reconstruct-coefficients} yields 
\begin{align*}
    \sup_{E\in I,k} \| \mathscr{W}(v,w) \|_{\ell^{\infty}(\mathfrak{D}_k\times \mathfrak{B}_k)} & \leq (2d)^{2s} (2d+M_I+\lambda)^{2s+1},\\
      \sup_{E\in I,k} \| \partial_E \mathscr{W}(v,w) \|_{\ell^{\infty}(\mathfrak{D}_k\times \mathfrak{B}_k)} & \leq (2d)^{2s} (2s+1)(2d+M_I+\lambda)^{2s}.
\end{align*}
This ensures \eqref{eq:bound-reconstruction-coefficients} if we take $C_{\rm recon}(d,I,\lambda)$ large. Finally, we estimate the $\ell^{\infty}$ norm of $u$ by 
  \begin{align*}
  \| u\|_{\ell^{\infty} (\mathfrak{D}_k)} & \leq (|\mathfrak{B}_k|)\cdot  \sup_{E\in I,k} \| \mathscr{W}(v,w) \|_{\ell^{\infty}(\mathfrak{D}_k\times \mathfrak{B}_k)} \cdot \| u\|_{\ell^{\infty} (\mathfrak{B}_k)} \\
                         &\leq (2s+1)^d (2d)^{2s} (2d+M_I+\lambda)^{2s+1}\cdot \| u\|_{\ell^{\infty} (\mathfrak{B}_k)} \leq e^{C_{\rm recon} s}\| u\|_{\ell^{\infty} (\mathfrak{B}_k)}
  \end{align*}
suppose that $C_{\rm recon}(d,I,\lambda)$ large.
\end{proof}

The next lemma selects an adjusted bowl-shape boundary with few sites where $|u|$ is at least a prescribed threshold, among boundaries lying in the same annulus.

\begin{lemma}\label{lem:min-bowl}
Fix $b>0$. Suppose $\ell\ge bR/\log(R)$, $r+4\ell<R/2$, and set $s=r+3\ell$. We construct the adjusted bowl-shape boundaries starting from $T_s$. Suppose $0<\varepsilon<\varepsilon_0,N\geq 1,\rho_1\geq L_{*}(N,\varepsilon), R\geq R_{\rm spa}(b,d,N,\varepsilon)$ are the same as in Lemma~\ref{lem:fsa-packing}. Assume that $F$ is a discrete $(N,\varepsilon,\boldsymbol\rho)$-graded set and is normal in $Q_R$. 
Then there are at least $L=\lfloor\ell/8\rfloor \ge \ell/10$ many (disjoint) adjusted bowl-shape boundaries satisfying 
\[
\mathfrak B_1,\ldots,\mathfrak B_L\subset Q_s\setminus (F\cup Q_{r+\ell +1}),
\]
and therefore the corresponding regions they influence satisfy
\[Q_{r+\ell +1}\subset \mathfrak{D}_1,\ldots,\mathfrak{D}_L.\]

Finally, for every function $u$ on $Q_{r+4\ell}$ and every $h>0$, some $1\le k\le L$ satisfies
\begin{equation}\label{eq:min-bowl}
|\{x\in \mathfrak B_k:|u(x)|\ge h\}|\le\frac{10}{\ell}|\{x\in Q_{r+4\ell}\setminus F:|u(x)|\ge h\}|.
\end{equation}
\end{lemma}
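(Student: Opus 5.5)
The plan is to compare the height function $h$ with the frozen-free height $\bar h$ computed from the same bowl $T_s$. Two estimates suffice. First, every site of $Q_{r+\ell+1}$ has height well above $L$. Second, every site of height at most $L$ lies in the annulus $Q_s\setminus Q_{r+\ell+1}$. Once both hold, the level sets $\mathfrak B_1,\dots,\mathfrak B_L$ are pairwise disjoint subsets of $Q_s\setminus(F\cup Q_{r+\ell+1})$. Since $\mathfrak D_k=\{h\ge k\}$ and $k\le L$, we get $Q_{r+\ell+1}\subset\mathfrak D_k$.

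\textbf{Key lower bound.} By \eqref{eq:height-interpretion}, $h(v)$ is the minimum, over cone-type chains in direction $e_1$ from $T_s$ to $v$, of the number of free sites on the chain. Let $v\in Q_{r+\ell+1}$ and let $P$ be such a chain.

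\emph{Length of $P$.} Each step raises the first coordinate by at most $2$. The chain starts at first coordinate $\le -s$, or from a side point $(t,z)$ with $z\in\partial^+\Omega_\Sigma$. In the side case, each step moves the $z$-coordinate by at most one in $\ell^\infty$, so the chain must travel at least $s+1-(r+\ell+1)=2\ell$ in $\ell^\infty$ distance. In either case the length is $m\ge \ell\ge bR/\log R$.

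\emph{Free sites on $P$.} The chain lies in $Q_s\subset Q_R$. Lemma \ref{lem:fsa-packing}, via \eqref{eq:ample-free-site-in-long-chain}, gives $|P\setminus F|\ge 3m/4$. The chain itself is only used as a path here; \eqref{eq:ample-free-site-in-long-chain} is a statement purely about $F$.

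Hence $h(v)\ge 3\ell/4-O(1)>L=\lfloor\ell/8\rfloor$. In particular $h\ge L+1$ on $Q_{r+\ell+1}$, so $\mathfrak B_k\cap Q_{r+\ell+1}=\varnothing$ for $k\le L$.

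\textbf{Nonemptiness.} I would also check that each $\mathfrak B_k$ with $k\le L$ is nonempty, or at least note that it is well defined. Along any chain, $h$ increases by at most one per step. So moving from $T_s$ into the interior, $h$ passes through every value $1,\dots,L$. This is needed only to make sense of the count; empty boundaries would make \eqref{eq:min-bowl} trivial for that $k$.

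\textbf{Averaging.} Set $A=\{x\in Q_{r+4\ell}\setminus F:|u(x)|\ge h\}$ for the threshold $h$ in \eqref{eq:min-bowl}, which is a different $h$ from the height function. The sets $\mathfrak B_k\cap\{|u|\ge h\}$ are pairwise disjoint and contained in $A$, because $\mathfrak B_k\subset Q_s\setminus F$ and $s=r+3\ell<r+4\ell$. Therefore
\[
\sum_{k=1}^L|\mathfrak B_k\cap\{|u|\ge h\}|\le |A|.
\]
Some $k$ therefore has count at most $|A|/L\le 10|A|/\ell$, using $L\ge \ell/10$. That bound holds once $\ell\ge 80$, which follows from $R$ large.

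\textbf{Main obstacle.} The delicate step is the length lower bound for chains that start from a side face of $T_s$ and wander laterally. I would verify carefully that lateral displacement per step is at most $1$ in every coordinate and that $|z|_\infty$ must drop from $s+1$ to at most $r+\ell+1$. I would also confirm that Lemma \ref{lem:fsa-packing} applies with $m$ equal to the actual chain length, which may exceed $\ell$. This is fine, since it requires only $m\ge bR/\log R$ and $P\subset Q_R$.
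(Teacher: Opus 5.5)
Your proposal is correct and follows essentially the same route as the paper. Both arguments bound the chain length from $T_s$ into $Q_{r+\ell+1}$ below by $\ell$, apply Lemma~\ref{lem:fsa-packing} to get $h\ge 3\ell/4-O(1)>L$ on $Q_{r+\ell+1}$, and then conclude by disjointness of the level sets and pigeonhole. One minor correction: the chains start in $T_s$, so they lie in $T_s\cup Q_s\subset Q_{s+1}\subset Q_R$ rather than in $Q_s$, which is still all Lemma~\ref{lem:fsa-packing} requires; the comparison with $\bar h$ and the nonemptiness check are harmless but unnecessary.
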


\begin{proof}
We first consider the height of elements in $Q_{r+\ell +1}$. Recall the formula \eqref{eq:height-interpretion}. Since in each movement of a cone-type chain $P=(x_0,x_1,\cdots,x_m)$ in direction $e_1$, the first coordinate changes at most $2$ and the other coordinate changes at most $1$, we can 
deduce that any $P$ starting from $T_s$ and ending in $Q_{r+\ell +1}$ has length at least 
\[\lceil (2\ell-1)/2 \rceil =\ell \geq bR/\log R.\]
Therefore, by our assumption on the parameters, we can apply Lemma \ref{lem:fsa-packing} for such $P$ to deduce that 
\[ |P\setminus F| \geq 3\ell /4 ,\]
and \eqref{eq:height-interpretion} implies 
\begin{equation}\label{eq:height-no-touch}
  h(v)\ge 3\ell/4-2 \geq \ell/2, \qquad \forall \ v\in Q_{r+\ell+1}.
\end{equation}

Choose $L=\lfloor \ell/8\rfloor$ and construct for $1\leq k\leq L$ that $\mathfrak{B}_k,\mathfrak{D}_k$ as in \eqref{eq:adjust-boundary} and \eqref{eq:adjust-region}.
These sets are fixed by $F$ and $\mathfrak{B}_k$ are disjoint in $k$. Moreover, by our choice of $L$ and \eqref{eq:height-no-touch}, we have 
\[Q_{r+\ell+1}\subset \mathfrak{D}_{L}  \subset \cdots \subset \mathfrak{D}_1  \subset Q_s, \quad \mathfrak B_1,\ldots,\mathfrak B_L\subset Q_{s}\setminus (F\cup Q_{r+\ell +1})\subset Q_{r+4\ell}\setminus (F\cup Q_{r+\ell +1}).  \]

Finally, since $\mathfrak{B}_k,1\leq k\leq L$ are disjoint,
\[\sum_{1\leq k\leq L} |\{x\in \mathfrak B_k:|u(x)|\ge h\}|\le |\{x\in Q_{r+4\ell}\setminus F:|u(x)|\ge h\}|.\]
Applying the pigeonhole principle yields \eqref{eq:min-bowl}
\end{proof}

\subsection{Construction of the linear-martingale system}

In this section, we will show that during the propagation of the equation \eqref{eq:pointwise-equation} 
on a cube, a linear-martingale system will occurs naturally. 

Now let $s=r+3\ell$, with $r+4\ell<R/2$ and $\ell\geq bR/\log R$. Assume $F$ is a discrete $(N,\varepsilon,\boldsymbol\rho)$-graded set, with parameters agree with Lemma \ref{lem:fsa-packing}, and $R$ sufficiently large. Let us recall we family of adjusted bowl-shape boundaries we constructed in Lemma \ref{lem:min-bowl}, i.e. 
\[\mathfrak{B}_1,\cdots,\mathfrak{B}_L\subset Q_s\setminus(F\cup Q_{r+\ell+1})\]
We arbitrarily choose one $\mathfrak{B}$ from above boundaries, and denote its corresponding region by $\mathfrak{D}$.

Now let us fix the direction $e_1$, and define the \textit{feasible direction set} by 
\begin{equation}\label{eq:feasible-direction}
  \mathcal V=\{e_1+e_2,e_1-e_2,e_1+e_3,e_1+e_4,\ldots,e_1+e_d \}.
\end{equation}
There are totally $d$ many directions. We arbitrarily choose one of them, denoted by $\nu$. Since all structure are invariant under coordinate changes and reflections inner the directions $e_2,e_3\cdots,e_d$, we can assume $\nu=e_1+e_2$ without loss of generality.

Since all $V_x$ in $F$ are frozen (i.e., the randomness in $F$ has been revealed), we keep take condition on the randomness in $\mathfrak{D}\setminus (Q_r\cup F)$ (that is to say, fixed all $V_x=\omega_x$ out of $Q_r$ and consider randomness only in $Q_r\setminus F$). 
We also recall that, by Theorem \ref{thm:reconstruction}, if we previously give the initial data $u|_{\mathfrak{B}}$, then after all $V_x$ in $Q_r\setminus F$ are revealed further, all values of $u$ in $\mathfrak{D}\setminus\mathfrak{B}$ (and therefore $Q_{r+\ell+1}\setminus Q_r$) will be determined.

We begin to construct our linear-martingale system under the scales $r$ and $\ell$. One can see Figure \ref{fig:fc-system} for an illustration of the construction in 2D and 3D.

\begin{construction}\label{construction:fc-system}
For each fixed energy $E$ we have the following thing holds. Fix arbitrary nonempty subset  
\[
\mathcal A=\{\xi_1,\ldots,\xi_p\}\subset\mathfrak B,
\qquad p\ge1,
\]
with a deterministic cardinality $p$. Let the input body $K\subset\mathbb C^p$.
Let
$\mathcal S=Q_r\setminus F,$ 
and $\nu$ is chosen from \eqref{eq:feasible-direction}.
Suppose that $\mathcal S\ne\varnothing$, and recall that, by our previous setting, we fix the randomness outside $\mcS$ and keep the randomness in $\mcS$ unrevealed. 

According to the direction $\nu$, $\mathcal S$ can be partitioned into ordered, disjoint source blocks $\mcJ_1,\mcJ_2,\cdots$. Each
source block $\mcJ_f$ is equipped with a target set $\mcK_f\subset Q_{r+\ell+1}\setminus Q_r$ with 
\begin{equation}\label{eq:ell-tag}
  b_f= \ell_{\rm tag} :=\lfloor\ell/4\rfloor\ge\ell/8
\end{equation}
distinct targets. Those targets are disjoint, and $(f,k)_{\rm tag}$ are mapped injectively to vertices in $Q_{r+\ell+1}\setminus Q_r$. 

We take input $v\in K$ and extend it to an initial data on boundary $\mathfrak{B}$, by
\begin{equation}\label{eq:input-as-initial-data}
u(\xi_i)=v_i\quad(1\le i\le p),\qquad
u \equiv 0\quad {\rm on\ } \mathfrak B\setminus\mathcal A,
\end{equation}
View the unrevealed potentials $V_x=\omega_x$ on $\mathcal S=\mcJ_1\cup\mcJ_2\cdots$ as the bits of each source block. 
Theorem \ref{thm:reconstruction} ensures that we can set 
\[S_{f,j}=u((f,j)),\ (f,j)\in \mcS;\qquad  Y_{f,k}=u((f,k)_{\rm tag}),\ (f,k)_{\rm tag} \in  Q_{r+\ell+1}\setminus Q_{r} \]
as sources and sinks.
Then the above will give a $(c_0,C_0)$-linear-martingale system $\mcM^{E}_{r,\ell}(\nu,\mathfrak{B},\mcA)$ with input body $K$, where
\begin{equation}\label{eq:source-block-constants}
c_0=\lambda,\qquad
C_0=e^{C_{\rm recon}(d,I,\lambda)R}.
\end{equation}
Here the constant $C_{\rm recon}(d,I,\lambda)$ is exactly that in Theorem \ref{thm:reconstruction}.
\end{construction}

\begingroup
\definecolor{fmblue}{HTML}{1766A0}
\definecolor{fmteal}{HTML}{008B82}
\definecolor{fmpurple}{HTML}{8656A3}
\definecolor{fmred}{HTML}{C84427}
\definecolor{fmamber}{HTML}{B77B17}
\definecolor{fmpink}{HTML}{B84B83}
\definecolor{fmolive}{HTML}{71862A}
\definecolor{fmslate}{HTML}{576E86}
\definecolor{fmcyan}{HTML}{2898AE}

\begin{figure}[htbp]
\centering
\begin{minipage}[t]{0.485\linewidth}
\vspace{0pt}
\centering
\resizebox{\linewidth}{!}{%
\begin{tikzpicture}[
  x=3.5mm,y=3.5mm,
  font=\small,
  line cap=round,line join=round,
  fmopen/.style={circle,draw,line width=0.8pt,
    fill=white,inner sep=0pt,minimum size=3.7pt},
  fmsolid/.style={circle,fill,draw=none,
    inner sep=0pt,minimum size=4.2pt},
  fmtext/.style={fill=white,inner sep=1.4pt}
]
\path[use as bounding box] (-11,-10.5) rectangle (11,11.5);

\fill[black!2] (-8.35,-7.35) rectangle (8.35,8.35);
\draw[black!13,line width=0.4pt]
  (-8.35,-7.35) rectangle (8.35,8.35);

\fill[fmblue!3] (-5,-5) rectangle (5,5);
\draw[step=1,black!8,line width=0.25pt] (-5,-5) grid (5,5);
\draw[black!50,dashed,line width=0.7pt] (-5,-5) rectangle (5,5);
\fill[white] (-2,-2) rectangle (2,2);
\draw[step=1,black!13,line width=0.3pt] (-2,-2) grid (2,2);
\draw[black!75,line width=0.9pt] (-2,-2) rectangle (2,2);

\foreach \tt in {-7,-6}{
  \foreach \xx in {-8,...,8}{
    \draw[black!58,fill=white,line width=0.45pt]
      (\xx-0.13,\tt-0.13) rectangle (\xx+0.13,\tt+0.13);
  }
}
\foreach \tt in {-5,...,8}{
  \foreach \xx in {-8,8}{
    \draw[black!58,fill=white,line width=0.45pt]
      (\xx-0.13,\tt-0.13) rectangle (\xx+0.13,\tt+0.13);
  }
}

\foreach \kap/\col in {
 -4/fmslate,-3/fmcyan,-2/fmpurple,-1/fmteal,
 0/fmblue,1/fmamber,2/fmpink,3/fmolive,4/fmred}{
  \pgfmathtruncatemacro{\amin}{max(-2,-2-\kap)}
  \pgfmathtruncatemacro{\amax}{min(2,2-\kap)}
  \draw[\col,line width=0.9pt]
    (\amin,{\kap+\amin}) -- (\amax,{\kap+\amax});
  \foreach \aa in {\amin,...,\amax}{
    \node[fmopen,draw=\col] at (\aa,{\kap+\aa}) {};
  }
  \draw[\col,line width=0.65pt]
    ({\amax+1},{\kap+\amax+2}) --
    ({\amax+2},{\kap+\amax+3});
  \foreach \mm in {1,2}{
    \node[fmsolid,fill=\col]
      at ({\amax+\mm},{\kap+\amax+\mm+1}) {};
  }
}

\draw[fmblue,densely dashed,line width=0.9pt,
      -{Latex[length=1.5mm,width=1mm]}]
  (2,2) -- (2,3) -- (3,4);

\node[fmtext] at (0,6.35) {$Q_{r+\ell+1}$};
\node[fmtext] at (-5.9,7.55) {$\mathfrak D$};
\node[fmtext] at (-3.45,-0.3) {$Q_r$};
\draw[black!55,line width=0.4pt] (-2.85,-0.3) -- (-2.1,-0.3);

\node[anchor=west,fmtext,text=fmblue] at (5.65,3.1) {targets};
\draw[fmblue!65,line width=0.45pt] (5.55,3.45) -- (4.25,4.75);
\node[anchor=east,fmtext,text=fmblue] at (-3.05,-3.15) {source block};
\draw[fmblue!65,line width=0.55pt] (-2.95,-3.05) -- (-1.55,-1.55);

\node[fmtext] at (0,-8.25) {$\mathfrak B$};
\draw[-{Latex[length=1.6mm]},black!75]
  (-9.1,-7.8) -- (-9.1,8.5) node[above] {$t$};
\node[anchor=east] at (-9.4,5.8) {$e_1$};
\draw[-{Latex[length=1.6mm]},black!75]
  (-8.4,-9.15) -- (8.8,-9.15) node[right] {$x$};
\end{tikzpicture}%
}
\par\smallskip
{\small\raggedright
\textbf{(a)}
An illustration of the construction of the linear-martingale system in 2D.
\par}
\end{minipage}
\hfill
\begin{minipage}[t]{0.485\linewidth}
\vspace{0pt}
\centering
\resizebox{\linewidth}{!}{%
\begin{tikzpicture}[
  x=1cm,y=1cm,font=\small,
  line cap=round,line join=round,
  fmopen/.style={circle,draw,line width=0.8pt,fill=white,
    inner sep=0pt,minimum size=3.8pt},
  fmsmall/.style={circle,draw=black!65,fill=white,
    line width=0.6pt,inner sep=0pt,minimum size=3.2pt},
  fmtext/.style={fill=white,inner sep=1.4pt}
]
\path[use as bounding box] (-3.85,-3.675) rectangle (3.85,4.025);

\begin{scope}[
  x={(0cm,0.65cm)},
  y={(0.73cm,0.18cm)},
  z={(0.48cm,-0.25cm)}
]
  \fill[fmblue!5]
    (-2,-2,-2) -- (2,2,-2) -- (2,2,2) -- (-2,-2,2) -- cycle;
  \foreach \aa in {-2,2}{
    \foreach \bb in {-2,2}{
      \draw[black!25,line width=0.45pt] (-2,\aa,\bb) -- (2,\aa,\bb);
      \draw[black!25,line width=0.45pt] (\aa,-2,\bb) -- (\aa,2,\bb);
      \draw[black!25,line width=0.45pt] (\aa,\bb,-2) -- (\aa,\bb,2);
    }
  }

  \foreach \zz/\col in {-1/fmteal,0/fmblue,1/fmamber}{
    \draw[\col,line width=0.9pt] (-2,-2,\zz) -- (2,2,\zz);
    \foreach \aa in {-2,...,2}{
      \node[fmopen,draw=\col] at (\aa,\aa,\zz) {};
    }
  }

  \draw[fmpurple,dashed,line width=0.8pt] (-2,-1,1) -- (1,2,1);
  \foreach \aa in {-1,...,2}{
    \node[fmopen,draw=fmpurple] at ({\aa-1},\aa,1) {};
  }

  \coordinate (fmZ) at (4,3,0);
  \coordinate (fmFzero) at (2,2,0);

  \foreach \tt/\xx/\zz in {
    1/0/0,-1/0/0,0/1/0,0/-1/0,0/0/1,0/0/-1}{
    \draw[black!65,densely dashed,line width=0.7pt]
      (0,0,0) -- (\tt,\xx,\zz);
    \node[fmsmall] at (\tt,\xx,\zz) {};
  }

  \draw[fmred,densely dashed,line width=1.2pt,
        -{Latex[length=1.8mm,width=1.2mm]}]
    (0,0,0) -- (1,0,0);
  \foreach \ii in {0,1,2}{
    \draw[fmred,line width=1.05pt,
          -{Latex[length=1.8mm,width=1.2mm]}]
      ({1+\ii},\ii,0) -- ({2+\ii},{1+\ii},0);
  }
  \foreach \ii in {0,1,2,3}{
    \fill[fmred] ({1+\ii},\ii,0) circle[radius=1.7pt];
  }
  \fill[fmred] (0,0,0) circle[radius=2.7pt];
\end{scope}

\node[anchor=east,fmtext,text=fmred] at (1.72,3.34) {target};
\draw[fmred!75,line width=0.45pt] (1.78,3.30) -- (fmZ);
\node[anchor=west,fmtext,text=fmblue] at (2.02,1.81) {source block};
\draw[fmblue!70,line width=0.45pt] (1.97,1.78) -- (fmFzero);
\end{tikzpicture}%
}
\par\smallskip
{\small\raggedright
\textbf{(b)}
An illustration of source blocks and their propagation in 3D. 
\par}
\end{minipage}
\caption{Illustrations of linear-martingale systems from Construction \ref{construction:fc-system}.}
\label{fig:fc-system}
\end{figure}
\endgroup

\begin{proof}[Proof of Construction \ref{construction:fc-system}]

\noindent\textbf{{Step 1: construct the source blocks.}}
As we discussed before, we can just take 
$\nu= e_1+e_2$. Write
\begin{equation}\label{eq:cone-coordinate}
  y=(t,x,\zeta)\in\mathbb Z\times\mathbb Z
                         \times\mathbb Z^{d-2}.
\end{equation}
For $d=2$, $\mathbb Z^0$ means the last coordinate $\zeta$ vanishes. Notice that such argument is valid only for $d\geq 2$, thus the method developed in this paper suffices for all dimension $d\geq 2$, but not for $d=1$.

For each pair $(\kappa,\zeta')$, the restriction
\begin{align}\label{eq:line-in-direction}
\notag  L(\kappa,\zeta')&=\left\{ (t,x,\zeta)\in \Z^d: t-x=\kappa,\zeta=\zeta'\right\}\\
                    &=\left\{(\kappa+a,a,\zeta')\in \Z^d:a\in \Z \right\}
\end{align}
gives a lattice line in direction $\nu$. Enumerate all pairs $(\kappa_f,\zeta_f)$ for which
$L(\kappa_f,\zeta_f)\cap\mathcal S\ne\varnothing$, where $\mathcal S=Q_r\setminus F$, as
$f=1,\ldots,n_{\mathrm{blk}}$, first by increasing $\kappa_f$ and then lexicographically by $\zeta_f$.
Thus $n_{\mathrm{blk}}$ is the number of these nonempty lattice lines, and
$\zeta_f\in[-r,r]^{d-2}\cap\mathbb Z^{d-2}$.
Define the source blocks by
\[
\mathcal J_f=L(\kappa_f,\zeta_f)\cap\mathcal S,\qquad r_f=|\mathcal J_f|.
\]
For each $f$, list the integers $a$ for which $(\kappa_f+a,a,\zeta_f)\in\mathcal J_f$ in increasing order:
\[
a_{f,1}<\cdots<a_{f,r_f}.
\]
The indices $f$ and $j$ now agree with those in Definition~\ref{def:linear-martingale-system}.

By the above correspondence, we gives the elements in source blocks and their corresponding bits by
\begin{equation}\label{eq:fc-source-blocks}
(f,j)=X_{f,j}:=(\kappa_f+a_{f,j},a_{f,j},\zeta_f),\qquad
\omega_{f,j}=V_{X_{f,j}}.
\end{equation}
The map $(f,j)\mapsto X_{f,j}$ is a bijection onto $\mathcal S$, and therefore
\[\mcS=\bigcup_{1\leq f\leq n_{\mathrm{blk}}} \mcJ_f.\]
With the potential outside $\mathcal S$ fixed, write $\omega=(\omega_x)_{x\in\mathcal S}$. The bits $\omega_{f,j}$ remain independent and fair under this conditional law, as required in Definition~\ref{def:linear-martingale-system}.

\smallskip

\smallskip
\noindent\textbf{{Step 2: construct the targets.}}
Now we construct the target set $\mcK_f$ for each source block $\mcJ_f$ we constructed in {{Step 1}}. For $1\le m\le\ell$, define
\begin{equation}\label{eq:fc-original-targets}
a_+(\kappa)=\min(r,r-\kappa), \quad a_f(m)=a_+(\kappa_f)+m,\quad
Z_f(m)=(\kappa_f+a_f(m)+1,a_f(m),\zeta_f).
\end{equation}
Indeed, $a_+(\kappa)$ is the largest $a$ such that the point $(\kappa+a,a,\zeta)$ lies in $Q_r$ (therefore lies in the inner boundary $\partial_-Q_r$).
Hence
\begin{equation}\label{eq:fc-shell}
Z_f(m)\in Q_{r+\ell+1}\setminus Q_r.
\end{equation}
Also,
\[
Z_f(m+1)-Z_f(m)=e_1+e_2=\nu.
\]
and the path $(Z(1),Z(2),\cdots)$, which can be checked that is a cone-type chain, has $\ell-1$ edges. 
If we set $Z(0)=(\kappa_f+a_+(\kappa_f)+1,a_+(\kappa_f),\zeta_f)$, the path $(Z(0),Z(1),Z(2),\cdots)$ will have length $\ell\geq bR/\log R$ and still be a cone-type chain.
Applying Lemma \ref{lem:fsa-packing} for this chain yields 
\[
|\{1\leq m\leq \ell:Z_f(m)\notin F\}|\geq |\{0\leq m\leq  \ell:Z_f(m)\notin F\}|-1
\ge\tfrac34 \ell-1\ge \lfloor \ell/4\rfloor:=\ell_{\rm tag}.
\]
Let $m_{f,1}<\cdots<m_{f,\ell_{\rm tag}}$ be the first $\ell_{\rm tag}$ indices in $\{m\geq 1:Z_{f}(m) \notin F\}$ and put
\begin{equation}\label{eq:fc-targets}
(f,k)_{\rm tag}=Z_f(m_{f,k}) \ {\rm for} \ 1\leq k\leq \ell_{\rm tag},\qquad
\mathcal K_f=\{(f,k)_{\mathrm{tag}}:1\le k\le \ell_{\rm tag}\}.
\end{equation}
The map
$(f,k)_{\mathrm{tag}}\mapsto Z_f(m_{f,k})$ is an injective obviously.

\smallskip

\noindent\textbf{{Step 3: the dependence of the sources and sinks.}}
Now we prove the dependence of the sources and sinks. The corresponding source and sink element \eqref{eq:fc-source-blocks} and target \eqref{eq:fc-targets} are defined by 
\[S_{f,j}= u(X_{f,j}),\qquad Y_{f,k}= u(Z_f(m_{f,k})). \]
We first investigate how the solution changes if a bit, i.e. the value of a $V_x$, changes. 
Recall that \eqref{eq:fsa-recurrence} and \eqref{eq:one-step-downside-recurrence} enable us to expand a $u(v)$ with previous values of $u$ (that is, those values at sites with smaller first coordinate) along the direction $-e_1$, and we have the formula
\eqref{eq:reconstruct-formula} with coefficients given by \eqref{eq:exact-formula-reconstruct-coefficients} and \eqref{eq:weight-P}. 
This indicates that, if the value $u(v)$ concerns about the value of $V_w$, then there must be some cone-type chain starting from $v$,
\[P=(x_0=v,x_1,x_2,\cdots,x_m,\cdots)\]
with direction $-e_1$, having some node $x_i=w+e_1,x_{i+1}=w$. Reversing this process, this means there must be a cone-type chain $P$ starting
from $w+e_1$, in direction $e_1$, and pass through $v$. Under the coordinate \eqref{eq:cone-coordinate}, $v=(t,x,\zeta)$ will lie in the cone region
\begin{equation}\label{eq:cone-propagation-whole-region}
  {\rm Cone}(w)={\rm Cone}(t',x',\zeta'):=\left\{ (t,x,\zeta):t-(t'+1)\geq |x-x'|+|\zeta-\zeta'|_1    \right\}
\end{equation}
with apex $w+e_1=(t'+1,x',\zeta')$.

Now fix an element $(f,j)\in \mcJ_f$, it corresponding to the site $X_{f,j}=(\kappa_f+a_{f,j},a_{f,j},\zeta_f)$. If the bit at another $X_{f',j'}=(\kappa_{f'}+a_{f',j'},a_{f',j'},\zeta_{f'})$ influences the source $S_{f,j}$, we must have 
$X_{f,j}\in {\rm Cone}(X_{f',j'})$. That is to say,
\begin{equation}\label{eq:fc-source-causality}
\kappa_f-\kappa_{f'}
\ge 1+(a_{f',j'}-a_{f,j})+|a_{f',j'}-a_{f,j}|+|\zeta_f-\zeta_{f'}|_1
\ge1.
\end{equation}
By the rule of order of $f$, we have $\kappa_f>\kappa_{f'}\Rightarrow f> f'$. 
This implies that, the dependence of source here satisfying a stronger condition than \eqref{eq:source} in Definition \ref{def:linear-martingale-system}.
It depends on $\omega_{\kappa<\kappa_f}$ only, which is a subword of $\omega_{< f}$, and is independent of the bits on the other source blocks with $\kappa=\kappa_f$ but $\zeta< \zeta_f$.
Hence the source $S_{f,j}$
depends only on earlier source blocks, and we can represent it by
\begin{equation}\label{eq:fc-source-forms}
S_{f,j}(\omega_{<f};v):= u(X_{f,j};v,\omega_{\kappa<\kappa_f}) =u(X_{f,j};v,\omega_{<f}).
\end{equation}

Do the same discussion for the target
 \[(f,k)_{\rm tag}=Z_f(m_{f,k}) =(\kappa_f+a_+(\kappa_f)+m_{f,k}+1,a_+(\kappa_f)+m_{f,k} ,\zeta_f).\]
If the bit of some $X_{f',j'}$ influences the sink $Y_{f,k}$, we must have $Z_f(m_{f,k})\in {\rm Cone}(X_{f',j'})$ and therefore
\begin{equation}\label{eq:fc-target-causality}
\kappa_f-\kappa_{f'}
\ge (a_{f',j'}-a_+(\kappa_f)-m_{f,k})+ |a_{f',j'}-a_+(\kappa_f)-m_{f,k}|+|\zeta_f-\zeta_{f'}|_1
\ge0.
\end{equation}
It is important to observe that, 
\begin{equation}\label{eq:source-blocks-of-same-kappa}
  \kappa_{f'}=\kappa_f\Rightarrow \zeta_{f'}=\zeta_f.
\end{equation}
This again implies that, the dependence of sink here satisfying a stronger condition than \eqref{eq:sink} in Definition \ref{def:linear-martingale-system}.
It depends on $(\omega_{\kappa<\kappa_f},\omega_{f})$ only, which is a subword of $\omega_{\leq f}$, and is independent of the bits on the other source blocks with $\kappa=\kappa_f$ but $\zeta\neq \zeta_f$.
This also ensures us to represent the sink by
\begin{equation}\label{eq:fc-sink-forms}
  Y_{f,k}(\omega_{\le f};v):=u(Z_f(m_{f,k});v,(\omega_{\kappa<\kappa_f},\omega_{f}))=u(Z_f(m_{f,k});v,\omega_{\leq f})
\end{equation}

\begin{remark}\label{rmk:fit-for-the-whole-line}
  Indeed, since $\mcJ_f$ and $\mcK_f$ all lie on some line like \eqref{eq:line-in-direction} with direction $\nu=e_1+e_2$, the same argument and results also holds for the dependence of values of solution on those line 
  with previous bits. More precisely, one can prove that the values of $u$ on $L(\kappa,\zeta)$ only depends on the word $\omega_{\kappa'<\kappa}$ with the same argument.

\end{remark}

\smallskip
\noindent\textbf{{Step 4: verify the propagation condition.}}
Now consider a source block $\mcJ_f$ and its target set $\mcK_f$. Arbitrarily fix one element $(f,j)\in \mcJ_f$, the subword $\omega_{<f}$, all other bits in $\mcJ_f$, and the input $v\in K$ in. For simplicity, in the following we write
\[
X=X_{f,j},\ X_n=X+n\nu \ (n\in \Z),\ a=a_{f,j},\ A=A_{f,k}=a_+(\kappa_f)+m_{f,k},\quad
A\geq a+1,\ Z=Z_f(m_{f,k}).
\]
Then one can check that $Z=X+(A-a+1)e_1+(A-a) e_2=X_{A-a}+e_1$.
Under the configuration (i.e those bits and input $v$ we fixed previously), let $U^{(e)}$ denote the solution of \eqref{eq:pointwise-equation} with $\omega_{f,j}=e$, and put
$D(y)=U^{(1)}(y)-U^{(0)}(y)$. Since $\omega_{\kappa<\kappa_f}$ has been fixed, \eqref{eq:fc-source-forms} ensures that
\[
U^{(1)}(\tilde{X})=U^{(0)}(\tilde{X}),\quad D(\tilde X)=0, \quad \forall \ \tilde{X}\in \mcJ_{f'} \ {\rm with}\ \kappa_{f'}\leq \kappa_f.
\]
What's more, the Remark \ref{rmk:fit-for-the-whole-line} ensures further that
\begin{equation}\label{eq:zero-diff-low-line}
  U^{(1)}(\tilde{X})=U^{(0)}(\tilde{X}),\quad D(\tilde X)=0, \quad \forall \ \tilde{X}\in L(\kappa,\zeta) \ {\rm with}\ \kappa \leq \kappa_f.
\end{equation}
Moreover, $D|_{\mathfrak B}=0$. Subtracting the pointwise recurrence \eqref{eq:pointwise-equation} of $U^{(1)}$ and $U^{(0)}$ at $X_n$ yields
\begin{align}\label{eq:fc-difference}
 \notag D(X_n+ e_1)+D(X_{n-1}+e_1)=& (2d-E+\lambda V_{X_n}^{(1)})D(X_n) -D(X_n-e_1)-D(X_n+e_2)\\
                        &-\sum_{\sigma=\pm 1 \atop 3\leq i\leq d}D(X_n+\sigma e_i)+\lambda\mathbf1_{\{n=0\}}S_{f,j}(\omega_{<f};v).
\end{align}
Here we used $X_n-e_2=X_{n-1}+e_1$. It's easy to check that all values in the right hand side of \eqref{eq:fc-difference} lie in some $L(\kappa,\zeta)$ with $\kappa \leq \kappa_f$, and thus \eqref{eq:zero-diff-low-line} applies. Therefore, \eqref{eq:fc-difference} becomes 
\begin{equation}\label{eq:diff-iteration-along-nu}
  D(X_n+ e_1)+D(X_{n-1}+e_1)=\lambda\mathbf1_{\{n=0\}}S_{f,j}(\omega_{<f};v).
\end{equation}
Finally, the site $X_{n}+e_1\in L(\kappa_f+1,\zeta_f)$, and since the randomness out of $Q_r$ has been determined initially, \eqref{eq:diff-iteration-along-nu} holds for all $X_n+e_1,X_{n-1}+e_1$ along $L(\kappa_f+1,\zeta_f)\cap \mathfrak{D}$. Therefore, we can find some 
$n_0<0$ such that $X_{n_0}+e_1\in \mathfrak{B}$. (Otherwise, if no such $n_0$, since $X_{n-1}+e_1\in \mathfrak{C}(X_{n}+e_1,-e_1)$, \eqref{eq:reconstruction-domain-closure} and $X_0+e_1\in \mathfrak{D}$ will yields $X_n+e_1\in \mathfrak{D}\subset Q_s$ for all $n\leq 0$. But $L(\kappa_f+1,\zeta_f)$ goes out of $Q_s$ definitely, a contradiction.)
Therefore, we have 
\begin{equation}\label{eq:initial-difference}
  D(X_{n_0}+e_1)=0.
\end{equation}
See Figure \ref{fig:backward-shifted-source-block} for an illustration.

\begingroup
\definecolor{btgreen}{HTML}{087F6B}
\definecolor{btred}{HTML}{B6474A}

\begin{figure}[htbp]
\centering
\resizebox{0.55\linewidth}{!}{%
\begin{tikzpicture}[
  x=6.5mm,y=6.5mm,
  font=\small,
  line cap=round,line join=round,
  btlabel/.style={fill=white,inner sep=1.3pt}
]
\path[use as bounding box] (-6.3,-6.65) rectangle (6.9,6.3);

\draw[black!85,line width=0.9pt] (-6,-6) rectangle (6,6);
\draw[black!85,line width=0.85pt] (-3,-3) rectangle (3,3);
\node[anchor=north west] at (6.05,-5.95) {$Q_s$};
\node[anchor=north west,btlabel] at (3.03,-3.02) {$Q_r$};

\begin{scope}
  \clip (0,-3) -- (3,-3) -- (3,0) -- cycle;
  \fill[black!3] (0,-3) -- (3,-3) -- (3,0) -- cycle;
  \foreach \cc in {-3,-2.5,...,3}{
    \draw[black!33,line width=0.4pt]
      (-0.5,{\cc+0.5}) -- (3.5,{\cc-3.5});
  }
\end{scope}

\fill[btgreen!6] (-1,-3) -- (3,1) -- (3,0) -- (0,-3) -- cycle;

\draw[btred,line width=1.05pt]
  (-4.9,5.1) --
  (-4.9,4.55) -- (-4.55,4.55) -- (-4.55,3.7) --
  (-4.85,3.7) -- (-4.85,3.2) -- (-4.4,3.2) --
  (-4.4,2.4) -- (-4.7,2.4) -- (-4.7,1.75) --
  (-4.35,1.75) -- (-4.35,0.9) -- (-4.7,0.9) --
  (-4.7,0.25) -- (-4.45,0.25) -- (-4.45,-0.6) --
  (-4.8,-0.6) -- (-4.8,-1.3) -- (-4.55,-1.3) --
  (-4.55,-2.1) -- (-4.85,-2.1) -- (-4.85,-2.8) --
  (-4.5,-2.8) -- (-4.5,-4.6) --
  (-3.9,-4.6) -- (-3.9,-4.35) -- (-3.15,-4.35) --
  (-3.15,-4.65) -- (-2.65,-4.65) -- (-2.65,-4.25) --
  (-2,-4.25) -- (-2,-4) --
  (-1.35,-4) -- (-1.35,-4.45) -- (-0.55,-4.45) --
  (-0.55,-4.2) -- (0.3,-4.2) -- (0.3,-4.6) --
  (1.05,-4.6) -- (1.05,-4.15) -- (1.8,-4.15) --
  (1.8,-4.45) -- (2.65,-4.45) -- (2.65,-4.2) --
  (3.4,-4.2) -- (3.4,-4.5) -- (4.7,-4.5) --
  (4.7,-3.3) -- (5,-3.3) -- (5,-2.55) --
  (4.65,-2.55) -- (4.65,-1.8) -- (4.35,-1.8) --
  (4.35,-1.1) -- (4.75,-1.1) -- (4.75,-0.2) --
  (4.5,-0.2) -- (4.5,0.55) -- (4.95,0.55) --
  (4.95,1.35) -- (4.55,1.35) -- (4.55,2) --
  (4.85,2) -- (4.85,2.8) -- (4.6,2.8) --
  (4.6,3.5) -- (5,3.5) -- (5,4.3) --
  (4.6,4.3) -- (4.6,5.1);
\node[anchor=west,text=btred] at (-4.5,5.25) {$\mathfrak B$};

\draw[black,line width=1.2pt] (0,-3) -- (3,0);
\node[anchor=west,btlabel] at (3.12,0.05) {$\mathcal J_f$};

\foreach \xx in {0,1,2,3}{
  \draw[btgreen!75,line width=0.7pt]
    ({\xx-1},{\xx-3}) -- (\xx,{\xx-3}) -- (\xx,{\xx-2});
}

\draw[btgreen,line width=1.15pt] (-2,-4) -- (4,2);
\draw[btgreen,line width=1.15pt,
      -{Latex[length=1.8mm,width=1.2mm]}]
  (-0.9,-2.9) -- (-1.55,-3.55);

\foreach \xx in {0,1,2,3}{
  \fill[black] (\xx,{\xx-3}) circle[radius=1.9pt];
  \fill[btgreen] (\xx,{\xx-2}) circle[radius=1.8pt];
}
\fill[btgreen] (-2,-4) circle[radius=2.5pt];
\fill[btgreen] (4,2) circle[radius=2pt];

\node[anchor=north west,btlabel] at (2.13,-1.12) {$X_0$};
\node[anchor=south east,text=btgreen,btlabel]
  at (1.86,0.18) {$X_0+e_1$};
\node[anchor=south east,text=btgreen,btlabel]
  at (0.87,-0.79) {$X_{-1}+e_1$};
\node[anchor=south east,text=btgreen,btlabel] at (3.87,2.15) {$Z$};
\node[anchor=north,text=btgreen,btlabel]
  at (-2,-4.93) {$X_{n_0}+e_1$};
\node[btlabel] at (2.13,-2.55) {$D=0$};

\node[text=btgreen,btlabel] at (-0.65,1.8)
  {$L(\kappa_f+1,\zeta_f)$};
\draw[btgreen!65,line width=0.45pt] (0.75,1.48) -- (2.9,0.9);

\end{tikzpicture}%
}
\caption{An illustration of \eqref{eq:initial-difference}.}
\label{fig:backward-shifted-source-block}
\end{figure}
\endgroup

Put \eqref{eq:initial-difference} back to \eqref{eq:diff-iteration-along-nu}, we get 
\begin{equation}\label{eq:formula-DXn}
  D(X_n+e_1)=\begin{cases}
     0,& n\leq -1;\\
     (-1)^n \lambda S_{f,j}(\omega_{<f};v),& n\geq 0.
  \end{cases}
\end{equation}
Therefore, take $n=A-a$ and we prove that,
\[ U^{(1)}(Z)-U^{(0)}(Z)=  (-1)^{A-a} \lambda S_{f,j}(\omega_{<f};v),\]
which is exactly
\begin{equation}\label{eq:fc-contrast}
Y^{(1)}_{f,k}(\widehat\omega_{f,j};v)
-Y^{(0)}_{f,k}(\widehat\omega_{f,j};v)
=\lambda(-1)^{A_{f,k}-a_{f,j}}
 S_{f,j}(\omega_{<f};v).
\end{equation}
This verifies the propagation condition \eqref{eq:propagation-condition} with $c_{f,j,k}=\lambda(-1)^{A_{f,k}-a_{f,j}}$ and $c_0=\lambda$.

\smallskip
\noindent\textbf{{Step 5: verify the bounded condition.}}
By our choice of sources and sinks, and by \eqref{eq:reconstruct-formula}, we can just ensures \eqref{eq:source} and \eqref{eq:sink}
with
\begin{equation}
  s_{f,j,i}(\omega_{<f})= \mathscr W(X_{f,j},\xi_i),\quad y_{f,k,i}(\omega_{\leq f})= \mathscr W(Z_f(m_{f,k}) ,\xi_i)
\end{equation}
for every $\xi_i\in\mcA$. The condition \eqref{eq:coefficient-bound-condition} is immediately ensured by \eqref{eq:bound-reconstruction-coefficients} in Theorem \ref{thm:reconstruction} with $C_0=e^{C_{\rm recon} R}$ (since $s<R/2$).
\end{proof}

\subsection{Multiscale bootstrap via annular structures, and the proof of PDUC}\label{sec:bootstrap-annular}

We make the final preparation before proving PDUC. The following combinatorial lemma shows that, for each finite subset $S\subset\Z^d$, there 
is a direction $\nu\in\mathcal V$ with many lattice lines $x+\mathbb Z\nu$ intersecting $S$.

\begin{lemma}
\label{lem:many-lattice-lines}
Let $d\ge 2$, let $S\subset\mathbb Z^d$ be finite, and set
\[
\gamma:=\frac{d-1}{d}.
\]
Then there is a $\nu\in \mathcal V$ such that the following holds:
\begin{equation}\label{eq:many-lattice-lines}
  \left|\left\{(x+\mathbb Z \nu ): x\in \Z^d, (x+\mathbb Z \nu ) \cap S \neq \emptyset \right\}\right| \geq (|S|/2)^\gamma.
\end{equation}
\end{lemma}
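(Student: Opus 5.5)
The plan is to reduce \eqref{eq:many-lattice-lines} to the discrete Loomis--Whitney inequality by a unimodular-type change of coordinates that straightens the $d$ directions of $\mathcal V$ into the coordinate axes.

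The key observation is about the vectors in $\mathcal V=\{e_1+e_2,\,e_1-e_2,\,e_1+e_3,\ldots,e_1+e_d\}$. There are exactly $d$ of them and they are linearly independent. Their integer span $\Lambda$ is the checkerboard lattice $\{x\in\mathbb Z^d:\sum_i x_i\equiv 0 \pmod 2\}$. To see this, note that every vector in $\mathcal V$ has even coordinate sum. Conversely, $2e_1$ and $2e_2$ lie in the span, and $e_i-e_1=(e_1+e_i)-2e_1$ does too, which generates all even-sum vectors. Equivalently, the matrix with these columns has determinant $\pm2$. Hence $\mathbb Z^d$ splits into the two cosets $\Lambda$ and $\Lambda+e_1$. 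Each line $x+\mathbb Z\nu$ with $\nu\in\mathcal V$ lies entirely inside one coset.

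First I pick the coset containing at least $|S|/2$ points of $S$, call the points there $S'$, and translate by $0$ or $-e_1$ so that $S'\subset\Lambda$. Translation maps distinct lines in direction $\nu$ to distinct lines, so it changes no line counts. Next let $A:\Lambda\to\mathbb Z^d$ be the group isomorphism sending $\nu_i\mapsto e_i$, where $\nu_1,\ldots,\nu_d$ enumerate $\mathcal V$. Under $A$, lines $x+\mathbb Z\nu_i$ in $\Lambda$ correspond bijectively to lines $y+\mathbb Z e_i$ in $\mathbb Z^d$. Therefore the number of $\nu_i$-lines meeting $S'$ equals $|\pi_i(AS')|$, where $\pi_i$ is the coordinate projection forgetting the $i$-th coordinate.

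The discrete Loomis--Whitney inequality then gives
\[
|S'|^{d-1}=|AS'|^{d-1}\le\prod_{i=1}^d|\pi_i(AS')|.
\]
So some $i$ has $|\pi_i(AS')|\ge |S'|^{(d-1)/d}\ge(|S|/2)^\gamma$. The lines meeting $S'$ are among those meeting $S$ (after undoing the translation), which proves \eqref{eq:many-lattice-lines} with $\nu=\nu_i$. For $d=2$ the inequality is trivial: $|S'|\le|\pi_1||\pi_2|$.

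The only real obstacles are bookkeeping. The first is checking the index-$2$ lattice claim, which is what forces the factor $1/2$. The second is making sure that the correspondence between lines is a bijection. If the discrete Loomis--Whitney inequality is not taken as known, I would prove it by the standard induction on $d$ via H\"older's inequality, or by deriving it from Shearer's lemma applied to the uniform distribution on $AS'$.
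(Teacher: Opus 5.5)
Your proposal is correct and follows essentially the same route as the paper: identify the integer span of $\mathcal V$ as the index-two even-sum lattice, pigeonhole onto the coset containing at least $|S|/2$ points of $S$, change basis so that $\mathcal V$ becomes the standard basis, and apply the discrete Loomis--Whitney inequality. Your explicit translation by $-e_1$ for the odd coset, and your check that lines correspond bijectively under the change of basis, spell out what the paper covers with ``without loss of generality'' and its coordinate-projection step.
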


\begin{proof}
The assertion is trivial when $S=\varnothing$.
Henceforth we assume $S\ne\varnothing$.
Set
\[
v_1:=e_1+e_2,\
v_2:=e_1-e_2,\
v_i:=e_1+e_i \ (3\le i\le d),
\]
and let
\[
A:=(v_1\ \cdots\ v_d),\quad
\mathcal L:=A\mathbb Z^d.
\]
For $a=(a_1,\ldots,a_d)\in\mathbb R^d$,
\begin{equation}
\label{eq:portfolio-coordinate-map}
Aa=
\left(
a_1+a_2+\sum_{j=3}^d a_j,\,
a_1-a_2,\,
a_3,\ldots,a_d
\right).
\end{equation}
In particular, $\det A=-2$, so $A$ is invertible.

We claim that
\begin{equation}
\label{eq:portfolio-even-lattice}
\mathcal L
=
\left\{
x\in\mathbb Z^d:
\sum_{j=1}^d x_j\in 2\mathbb Z
\right\}.
\end{equation}
Indeed, if $x=Aa$ with $a\in\mathbb Z^d$, then
\[
\sum_{j=1}^d x_j
=
2a_1+2\sum_{j=3}^d a_j
\in 2\mathbb Z,
\]
and therefore the `$\subset$' relationship in \eqref{eq:portfolio-even-lattice} holds. Conversely, suppose that $x\in\mathbb Z^d$ and
$\sum_{j=1}^d x_j$ is even. Define
\begin{equation}
\label{eq:portfolio-inverse-map}
a_1:=
\frac{x_1+x_2-\sum_{j=3}^d x_j}{2},
\qquad
a_2:=
\frac{x_1-x_2-\sum_{j=3}^d x_j}{2},
\qquad
a_j:=x_j\quad(3\le j\le d).
\end{equation}
Both numerators in \eqref{eq:portfolio-inverse-map}
are congruent to $\sum_{j=1}^d x_j$ modulo two.
Thus $a\in\mathbb Z^d$, and direct substitution into
\eqref{eq:portfolio-coordinate-map} gives $Aa=x$. Therefore the `$\supset$' relationship of \eqref{eq:portfolio-even-lattice} holds.
This proves \eqref{eq:portfolio-even-lattice}.

Since
\[
e_1+\mathcal L
=
\left\{
x\in\mathbb Z^d:
\sum_{j=1}^d x_j\in 2\mathbb Z+1
\right\},
\]
the decomposition
$
\mathbb Z^d
=
\mathcal L\,\dot\cup\,(e_1+\mathcal L).
$
exactly gives the two cosets of $\mcL$ in
$\mathbb Z^d$.
By pigeonhole principle, at least one of $\mcL$ and $\mcL+e_1$ will intersect $S$ with more than $|S|/2$ sites.
Without loss of generality we assume $|\mcL\cap S|\geq |S|/2$.

Now represent the point in $\mcL\cap S$ in basis $\mathcal V$ (which is equivalent to embedding $A^{-1}(\mcL\cap S)$ into $\Z^d$). For each $1\le i\le d$, let
\[
\pi_i:\mathbb Z^d\longrightarrow\mathbb Z^{d-1},
\qquad
\pi_i(y_1,\ldots,y_d)
:=
(y_1,\ldots,y_{i-1},y_{i+1},\ldots,y_d).
\]
The discrete Loomis-Whitney inequality gives
\[
|S\cap \mcL|^{d-1}
\le
\prod_{i=1}^d |\pi_i(S\cap \mcL )|.
\]
Therefore, for some $i\in\{1,\ldots,d\}$,
\begin{equation}
\label{eq:portfolio-large-projection}
|\pi_i(S)|\geq |\pi_i(S\cap\mcL)|
\ge |S\cap\mcL |^{(d-1)/d}
\ge \left(\frac{|S|}{2}\right)^\gamma.
\end{equation}
Fix one such index $i$, and along the direction $v_i$, we have 
\[\big( (y_1v_1+\cdots+y_{i-1}v_{i-1}+y_{i+1}v_{i+1}\cdots)+ \Z v_i \big) \cap S\neq \emptyset\]
for all $(y_1,\ldots,y_{i-1},y_{i+1},\ldots,y_d)\in \pi_i(S)$. This proves \eqref{eq:many-lattice-lines}.
\end{proof}

Now we prove Theorem \ref{thm:duc-linear}. Since the Construction \ref{construction:fc-system} has already constructs linear-martingale system $\mcM^E_{r,\ell}(\nu,\mathfrak{B},\mcA)$ in scale $r,\ell$, the key idea of the proof can be interpreted by following: 
We choose different scales in a large cube, and thus construct a family of linear-martingale systems for each scale. Those scales will divide the cube in an annular structure. Applying the propagation lemma (Lemma~\ref{lem:propagation}) to every 
linear-martingale system gives, with overwhelming probability, a lower bound on the number of sites where $|u|$ reaches the threshold at the next scale. Iterating this estimate increases the number of such sites through successive annuli.

In fact, the technique of bootstrapping along annuli is not new in the study of localization for the ABM. As early as Imbrie \cite[Proposition 3.2]{Imb21}, a similar idea was used to prove, with high probability, the strict monotonicity of the number of resonant eigenvalues. The annular structure was made more explicit in \cite[Section 4]{LSZ26}, where the authors employed an idea similar to that of \cite[Proposition 3.2]{Imb21} to establish a Wegner estimate for the corresponding model when $d \geq 4$. We believe that this structure is a noteworthy phenomenon in both localization theory and multiscale analysis.

\begin{proof}[Proof of Theorem \ref{thm:duc-linear}]
First, we take $0<\varepsilon<\varepsilon_0$ and then $\rho_1\geq L_*(N,\varepsilon)$ with $\varepsilon_0,L_*$ given by Lemma \ref{lem:fsa-packing}. Therefore Lemma \ref{lem:fsa-packing} is available. 

Now assume $F$ is a fixed discrete $(N,\varepsilon,\boldsymbol\rho)$-graded set that is normal in $Q_R$. By linearity, it suffices to consider solutions normalized by $u(0)=1$. Put $\gamma=(d-1)/d$ as in Lemma \ref{lem:many-lattice-lines}. Define
\[J =\min\{j\ge1:\gamma^j\log R\le1\};
 \quad w_j =\gamma^{J-1-j},\quad 0\le j<J.\]
We construct the following scales in the cube $Q_R$:
\begin{equation}\label{eq:fsa-scales}
 \begin{aligned}
r_0=\left\lfloor\frac{Rw_0}{64d}\right\rfloor,
\qquad
\ell_j=\left\lfloor\frac{Rw_j}{64d}\right\rfloor,
\qquad
r_{j+1}=r_j+4\ell_j.
 \end{aligned}
\end{equation}
For large $R\gg_d 1$, we can ensure that (by simple computation)
\begin{equation}\label{eq:fsa-scale-bounds}
 \frac1{\log R}<w_0\le\frac1{\gamma \log R},\qquad
 \sum_{j<J}w_j\le d,\qquad
 \frac{Rw_j}{128d}\le\ell_j\le\frac{Rw_j}{64d},\qquad
 r_J<R/4.
\end{equation}
In particular, $J=O_d(\log\log R)$.
Moreover, \eqref{eq:fsa-scale-bounds} ensures that all $r_i,\ell_i$ and $s_i=r_i+3\ell_i$ above 
satisfy the scale condition in Lemma \ref{lem:min-bowl} and Construction \ref{construction:fc-system} with the parameter
\[b=\frac{1}{4096 d}.\]

Let us introduce the following quantities to describe the value of solutions for $0\leq j< J$:
\begin{equation}\label{eq:fsa-heights}
H_0=e^{-R},\quad
H_{j+1}=H_je^{-BRw_j},\quad
\delta_j=H_j/2,\quad
\tau_j=H_je^{-BRw_j/2},
\end{equation}
where $B$ is a large constant to be determined later. 
The choice and \eqref{eq:fsa-scale-bounds} ensure that 
\begin{equation}\label{eq:low-bound-Hj}
  H_j\geq \exp \left( -B R\sum_{0\leq j<J} w_j \right)H_0 \geq e^{-(Bd+1)R}
\end{equation}
for all $0\leq j\leq J$. Likewise, $\tau_j\ge e^{-(Bd+1)R}$ for every $0\le j<J$.
For each normalized solution $u$, define its large-value sets and their cardinalities by
\begin{equation}\label{eq:fsa-counts}
 S_j=\{x\in Q_{r_j}\setminus F:|u(x)|\ge H_j\},\qquad m_j=|S_j|,\quad 0\leq j\leq J.
\end{equation}
These sets depend on $u$. Step 1 gives a deterministic bound for every such solution, and Step 2 constructs an event independent of the choice of $u$ before we use these sets in Step 3.

\smallskip
\noindent\textbf{{Step 1: the start of the bootstrap.}}
Since $u(0)=1$, Proposition \ref{prop:cone-property} supplies a cone-type chain
of length
$
\lfloor r_0/4\rfloor
$
from the origin inside $Q_{r_0}$, along which
\[
|u(x)|\ge e^{-C_{\mathrm{cone}}\lfloor r_0/4\rfloor}.
\]
For sufficiently large $R$, since the length
\[
\lfloor r_0/4\rfloor
\ge \frac{R}{1024d\log R}
\ge \frac{bR}{\log R},
\]
Lemma \eqref{eq:fsa-packing} shows that at least
$3\lfloor r_0/4\rfloor/4\geq 3R/(4096d\log R)$ vertices of this cone-type chain are outside $F$.
Moreover,
\[
C_{\mathrm{cone}}\lfloor r_0/4\rfloor \lesssim_{d,I,\lambda} R/\log R \le R,
\]
so these vertices have modulus at least $H_0=e^{-R}$.
Therefore, deterministically for every solution we have 
\begin{equation}\label{eq:fsa-seed}
m_0\geq |\{x\in Q_{r_0}\setminus F:|u(x)|\ge e^{-C_{\rm cone}\lfloor r_0/4\rfloor}\}|\ge \frac{3R}{4096d \log R}.
\end{equation}
This estimate will be our start of the bootstrap.

\smallskip
\noindent\textbf{{Step 2: probabilistic estimates.}}
Next we construct our probabilistic event. Since now the discrete graded set $F$ has been fixed, for each index $0\leq j< J$, Lemma \ref{lem:min-bowl}
will give a family of $L_j:= \lfloor \ell_j/8\rfloor\geq \ell_j/10$ many adjusted bowl-shape boundaries
\[
\mathfrak B^{(j)}_1,\ldots,\mathfrak B^{(j)}_{L_j}\subset Q_{r_{j+1}}\setminus (F\cup Q_{r_j+\ell_j +1}).
\]
Now for each $\mathfrak{B}^{(j)}_k,1\leq k\leq L_j$ and each direction $\nu$ in feasible direction set $\mathcal{V}$, we will fix a cardinality number $1\leq p\leq |\mathfrak{B}_k^{(j)}|$, and choose a subset 
$\mcA\subset\mathfrak{B}^{(j)}_k$ with $|\mcA|=p$. After fixing all of these, Construction \ref{construction:fc-system} will gives us a $(c_0,C_{0,j})$-linear-martingale system $\mcM^{E_0}_{r_j,\ell_j}(\nu,\mathfrak{B}_k^{(j)},\mcA)$
with 
\[c_0=\lambda,\qquad C_{0,j}=e^{DRw_j},\]
where $D=D(d,\lambda,I)>0$ is some constant depending only on $d,\lambda,I$. The choice of $c_0,C_{0,j}$ can be seen more clearly by the following argument.
Since
\[
\sum_{i=0}^{j-1}w_i\le(d-1)w_j,
\]
we have
\[
s_j:=r_j+3\ell_j
\le
\frac{Rw_j}{64d}\bigl(1+4(d-1)+3\bigr)
=\frac{Rw_j}{16}
\]
for all $0\leq j<J$. Thus we can apply Theorem \ref{thm:reconstruction} at radius $s_j$ to obtain that 
\[    \sup_{E\in I,k} \| \mathscr{W}(v,w) \|_{\ell^{\infty}(\mathfrak{D}^{(j)}_k\times \mathfrak{B}{(j)}_k)} \leq e^{C_{\rm recon} s_j}\leq e^{DRw_j},\]
\[  \sup_{E\in I,k} \| \partial_E \mathscr{W}(v,w) \|_{\ell^{\infty}(\mathfrak{D}^{(j)}_k\times \mathfrak{B}^{(j)}_k)} \leq e^{C_{\rm recon} s_j}\leq e^{DRw_j}.
\]
for some large constant $D=D(d,\lambda,I)>0$. We take $B=2D+8$.

Therefore, we define index set by
\begin{equation}\label{eq:index-set-of-all-system}
  {\rm SysIndex}=\left\{\mathfrak{i}: \mathfrak{i} = (j,k,\nu,\mcA)\ {\rm determined \ as\ above}\right\}.
\end{equation}
For each $\mathfrak{i}\in {\rm SysIndex}$, we denote the corresponding linear-martingale by $\mcM^{E_0}(\mathfrak{i})$. Now fix $\mathfrak{i}=(j,k,\nu,\mcA)$, put $p=|\mcA|$, and take
the input body 
\[K_p=B_{R_{\rm in}}\subset \C^p, \quad R_{\mathrm{in}}=e^{(K+1)R}.\]
Write $N_{\mathfrak{i}}$ for the total number of source and target labels in this system:
\[
N_{\mathfrak{i}}=\sum_{f=1}^{n_{\mathrm{blk}}}(r_f+b_f).
\]
The sources enumerate $Q_{r_j}\setminus F$, while the targets are distinct sites in
$Q_{r_j+\ell_j+1}\setminus Q_{r_j}$ by Construction~\ref{construction:fc-system}. Consequently,
\[
N_{\mathfrak{i}}\le |Q_{r_j+\ell_j+1}|\le |Q_R|\lesssim_d R^d.
\]

We now want to apply the propagation lemma to such $\mcM^{E_0}(\mathfrak{i})$. As in Construction \ref{construction:fc-system}, we previously condition the probability outside $Q_{r_j}\setminus F$ (i.e. condition on $(\omega_x)_{x\notin Q_{r_j}\cup F}$) and only keep the randomness (or the bits of $\mcM^{E_0}(\mathfrak{i})$) in $Q_{r_j}\setminus F$. 

Recall \eqref{eq:fsa-heights} takes 
\[\delta_j=H_j/2,\quad \tau_j=H_je^{-BRw_j/2}\]
for this system. Notice also that we must have $p\leq |\mathfrak{B}^{(j)}_k|\leq |Q_R|\lesssim_d R^d$.
We can estimate
\begin{align}\label{eq:check-ke-lemma-parameter}
\notag \log\left(
\frac{\delta_j\min(1,c_0/2)}{D_p\tau_j}
\right)
&=
\frac B2Rw_j
-\log\left(\frac{2D_p}{\min(1,\lambda/2)}\right)
\ge \frac B3Rw_j,
\\
\log(C_{0,j}R_{\mathrm{in}}/\tau_j)
&\le (D+K+2+Bd)R.
\end{align}
For the second inequality, we used \eqref{eq:low-bound-Hj}. Therefore, the record budget $m=m(\mathfrak{i})$ in \eqref{eq:fan-budget} satisfies
\begin{equation}\label{eq:fsa-record-bound}
 m\lesssim_{d,\lambda,I,K}\frac p{w_j}.
\end{equation}
Moreover, the first inequality of \eqref{eq:check-ke-lemma-parameter} ensures that $\delta_j\min(1,c_0/2)>{D_p\tau_j}$; and trivially we have $\tau_j/C_{0,j}<R_{\rm in}$. Therefore, the propagation lemma (Lemma~\ref{lem:propagation}) applies. Since by our construction of $\mcM^{E_0}(\mathfrak{i})$ (see \eqref{eq:ell-tag}), each target set has $\ell_{{\rm tag},j}=\lfloor \ell_j /4 \rfloor$ many targets.
Therefore Lemma \ref{lem:propagation} yields that, for $T_p$ being some quantity to be determined later, outside a probabilistic event $\mathcal U (\mathfrak{i})$ with 
\begin{equation*}
  \P(\mathcal U(\mathfrak{i})\mid (\omega_x)_{x\notin Q_{r_j}\cup F} ,V_F=v_F) \leq e^{-T_p}, 
\end{equation*}
we have 
\begin{align}\label{eq:event-discribe-out-of-Ui}
  2M_{\tau_j}^{\mathfrak{i}}(v)&\ge
 \frac{1-e^{-1}}{2} \ell_{{\rm tag},j} P^{\mathfrak{i}}_{\delta_j}(v)
 -\ell_{{\rm tag},j}\left[m\bigl(1+\log(N_{\mathfrak{i}}+1)\bigr)+T_p \right]
\end{align}
for any input $v\in K_p$ with $M_{\tau_j}^{\mathfrak{i}}(v), P^{\mathfrak{i}}_{\delta_j}(v)$ define by \eqref{eq:fan-outputcount} and \eqref{eq:def-P-delta}.
Take expectation on $(\omega_x)_{x\notin Q_{r_j}\cup F}$, we will have 
\begin{equation}\label{eq:prob-of-Ui}
  \P(\mathcal U(\mathfrak{i})\mid  V_F=v_F) \leq e^{-T_p}.
\end{equation}

Now we take a union bound over all bad events $\mathcal{U}(\mathfrak{i})$. Since $\mathfrak{i} = (j,k,\nu,\mcA)$, we can 
decompose the index set ${\rm SysIndex}$ according to $p=|\mcA|$, by 
\[{\rm SysIndex}=\bigcup_{1\leq p\lesssim_d R^d}{\rm SysIndex}_p,\ {\rm SysIndex}_p=\{(j,k,\nu,\mcA)\in {\rm SysIndex}:|\mcA|=p\}. \]
Since the number of $0\leq j<J$ is $J$, the number of $\mathfrak{B}^{(j)}_k$ less than $L_j\leq \ell_j/8$, the number of $\nu\in \mathcal{V}$ is $d$, and most importantly, the number of $\mcA$ in $\mathfrak{B}^{(j)}_k$ with $|\mcA|=p$ is at most 
\[(|\mathfrak{B}^{(j)}_k|)^p \leq (|Q_R|)^p\leq \exp((d+1)p\log R)\]
for large $R\gg_d 1$. Therefore, 
\begin{equation}\label{eq:number-sys-p}
  |{\rm SysIndex}_p| \lesssim_d \ \sum_{0\leq j<J} \ell_j \exp((d+1)p\log R) \leq R^2 \exp((d+1)p\log R)
\end{equation}
if $R\gg_d 1$. Therefore, if we denote
\[\mathcal{U}_R:=\bigcup_{\mathfrak{i}\in {\rm SysIndex}} \mathcal{U}(\mathfrak{i}),\]
by \eqref{eq:prob-of-Ui} and \eqref{eq:number-sys-p} we have 
\begin{equation}\label{eq:prob-absorb-Tp}
   \begin{aligned}
 \mathbb P(\mathcal U_R\mid V_F=v_F) &\leq  R^{2}\sum_{1\leq p\lesssim_d R^d}
       \exp\bigl((d+1)p\log R-T_p\bigr)\\
 &\le R^{d+3}\sup_{1\le p\lesssim_d R^d}
       \exp\bigl((d+1)p\log R-T_p\bigr).
 \end{aligned}
\end{equation}
So we take 
\begin{equation}\label{eq:Tp}
T_p
=
(d+1)p\log R+(d+3)\log R
+\eta\left(\frac{R}{\log R}\right)^\gamma.
\end{equation}
and therefore
\begin{equation}\label{eq:final-probability}
\mathbb P(\mathcal U_R\mid V_F=v_F) \le \exp\left[
-\eta\left(\frac{R}{\log R}\right)^\gamma
\right].
\end{equation}
Here $\eta>0$ is a fixed small constant to be determined later.  
Hence for any configuration of the randomness $\omega \notin \mathcal U_R$, \eqref{eq:event-discribe-out-of-Ui} becomes
\begin{align}\label{eq:propagation-inequality-for-bootstrap}
 \notag 2M_{\tau_j}^{\mathfrak{i}}(v) & \ge
 \frac{1-e^{-1}}{2} \ell_{{\rm tag},j} P^{\mathfrak{i}}_{\delta_j}(v)
 -\ell_{{\rm tag},j}\left[m\bigl(1+\log(N_{\mathfrak{i}}+1)\bigr)+T_p \right]\\
       &\geq \frac{1}{20}\ell_j P^{\mathfrak{i}}_{\delta_j}(v) -\ell_j\cdot O_{d,\lambda,I,K}\left( \frac{p}{w_j}\log R +\log R \right)-\eta\ell_j\left(\frac{R}{\log R}\right)^\gamma,
\end{align}
holds for all $\mathfrak{i}$ and the input $v \in K_p$ uniformly.
Here in \eqref{eq:propagation-inequality-for-bootstrap}, we used \eqref{eq:fsa-record-bound}, $N_{\mathfrak{i}}\leq |Q_R|\lesssim_d R^d$ and $\ell_j/5\leq \ell_{{\rm tag},j}\leq \ell_j$. We will take 
\[\mcE_R(F,v_F,E_0)=\mathcal{U}_R^c\]
as our final event.

\smallskip
\noindent\textbf{{Step 3: bootstrap via annular structure.}} 
Since we have already constructed the probability event $\mcE_R(F,v_F,E_0)$, we arbitrarily take $\omega \in \mcE_R$ and configure $V_x=\omega_x,x\in Q_R\setminus F$. Consider arbitrary solution $u$ of $Hu=Eu$ in $Q_{R-1}$ with 
\[
u(0)=1,\qquad
|E-E_0|\le e^{-aR},\qquad
\|u\|_{\infty,Q_R}\le e^{KR}.
\]
Use the sets $S_j$ and counts $m_j$ from \eqref{eq:fsa-counts} for this solution.

For each $0\leq j<J$, by Lemma \ref{lem:min-bowl}, there is an index $1\leq \iota_j \leq L_j$ such that 
\begin{equation}\label{eq:min-bowl-at-i-scale}
|\{x\in \mathfrak{B}^{(j)}_{\iota_j}:|u(x)|\ge H_{j+1}\}|\le\frac{10}{\ell_j}|\{x\in Q_{r_{j+1}}\setminus F:|u(x)|\ge H_{j+1}\}|=10m_{j+1}/\ell_j.
\end{equation}
Let 
\[\mcA= \{x\in \mathfrak{B}^{(j)}_{\iota_j}:|u(x)|\ge H_{j+1}\},\quad p=|\mathcal A|\le\frac{10m_{j+1}}{\ell_j}.\]
Here $p\ge1$: otherwise, \eqref{eq:reconstruct-ell-infty-control} would give $|u(0)|\le e^{DRw_j}H_{j+1}<1$, contradicting $u(0)=1$.
Take the input $v_x=u(x)$ for each $x\in \mcA$. Such $v$ lies in the input body $K_p=B_{R_{\rm in}}\subset \C^p$, since 
\begin{equation}\label{eq:fsa-mask}
p\lesssim_dR^d
\quad\Longrightarrow\quad
\|v\|_2
\le \sqrt p\,\|u\|_{\infty,Q_R}
\le \sqrt p\,e^{KR}
\le R_{\mathrm{in}}=e^{(K+1)R}.\end{equation}
Moreover, applying Lemma \ref{lem:many-lattice-lines} to $S_j$ gives a $\nu\in \mathcal{V}$ such that 
\begin{equation}\label{eq:many-lattice-lines-apply}
    \left|\left\{x+\mathbb Z \nu :x\in \Z^d,\  (x+\mathbb Z \nu ) \cap S_j \neq \emptyset \right\}\right| \geq (m_j/2)^\gamma.
\end{equation}
We will apply \eqref{eq:propagation-inequality-for-bootstrap} for such index $\hat {\mathfrak{i}}=(j,\iota_j,\nu,\mcA)$, with the input $v$ we constructed.

\begingroup
\definecolor{bootinner}{HTML}{356EAC}
\definecolor{bootmiddle}{HTML}{C98432}
\definecolor{bootouter}{HTML}{318778}

\begin{figure}[htbp]
\centering
\resizebox{0.55\linewidth}{!}{%
\begin{tikzpicture}[
  x=9mm,y=9mm,font=\small,
  line cap=round,line join=round,
  bootarrow/.style={line width=1.3pt,
    -{Latex[length=3.1mm,width=2.2mm]}},
  bootlabel/.style={inner sep=2pt},
  boottrace/.style={line width=0.75pt,dash pattern=on 3pt off 2.5pt}
]

\fill[bootouter!10] (-3.4,-3.4) rectangle (3.4,3.4);
\fill[bootmiddle!13] (-2.25,-2.25) rectangle (2.25,2.25);
\fill[bootinner!10] (-1.15,-1.15) rectangle (1.15,1.15);

\draw[bootouter,line width=0.95pt]
  (-3.4,-3.4) rectangle (3.4,3.4);
\draw[bootmiddle,line width=0.95pt]
  (-2.25,-2.25) rectangle (2.25,2.25);
\draw[bootinner,line width=0.95pt]
  (-1.15,-1.15) rectangle (1.15,1.15);

\draw[boottrace,bootmiddle!85!black]
  (-1.77,1.91) -- (-1.77,1.02) -- (-1.64,1.02) --
  (-1.64,0.35) -- (-1.80,0.35) -- (-1.80,-0.60) --
  (-1.67,-0.60) -- (-1.67,-1.72) --
  (-0.65,-1.72) -- (-0.65,-1.85) -- (0.27,-1.85) --
  (0.27,-1.69) -- (1.76,-1.69) --
  (1.76,-0.52) -- (1.62,-0.52) -- (1.62,0.36) --
  (1.80,0.36) -- (1.80,1.20) -- (1.67,1.20) -- (1.67,1.91);
\node[text=bootmiddle!85!black,fill=bootmiddle!13,
      font=\scriptsize,inner sep=1.2pt]
  at (-1.77,1.40)
  {$\mathfrak B^{(j)}_{\iota_j}$};

\draw[boottrace,bootouter!85!black]
  (-2.90,3.05) -- (-2.90,1.94) -- (-2.74,1.94) --
  (-2.74,0.75) -- (-2.92,0.75) -- (-2.92,-0.48) --
  (-2.78,-0.48) -- (-2.78,-1.62) -- (-2.94,-1.62) --
  (-2.94,-2.86) -- (-1.56,-2.86) -- (-1.56,-2.72) --
  (-0.32,-2.72) -- (-0.32,-2.94) -- (1.03,-2.94) --
  (1.03,-2.80) -- (2.87,-2.80) --
  (2.87,-1.33) -- (2.72,-1.33) -- (2.72,-0.14) --
  (2.93,-0.14) -- (2.93,1.18) -- (2.77,1.18) --
  (2.77,2.15) -- (2.92,2.15) -- (2.92,3.05);
\node[text=bootouter!85!black,fill=bootouter!10,
      font=\scriptsize,inner sep=1.2pt]
  at (-2.83,1.40)
  {$\mathfrak B^{(j+1)}_{\iota_{j+1}}$};

\draw[bootarrow,bootmiddle]
  (-0.55,-0.48)
  .. controls (-0.78,-0.95) and (-1.10,-1.35) .. (-1.47,-1.50);
\draw[bootarrow,bootouter]
  (-1.92,-1.98)
  .. controls (-2.10,-2.25) and (-2.40,-2.50) .. (-2.62,-2.57);

\node[bootlabel,text=bootinner] at (0,0) {$Q_{r_j}$};
\node[bootlabel,text=bootmiddle,fill=white]
  at (1.30,-2.25) {$Q_{r_{j+1}}$};
\node[bootlabel,text=bootouter,fill=white]
  at (2.40,-3.40) {$Q_{r_{j+2}}$};


\end{tikzpicture}%
}
\caption{An illustration of annular bootstrap through multiscale cubes. }
\label{fig:free-site-annular-bootstrap}
\end{figure}
\endgroup

Denote $\tilde{u}_1$ to be the solution of $H\tilde{u}_1=E_0\tilde{u}_1$ with initial data on boundary $\mathfrak{B}_{\iota_j}^{(j)}$ replaced by 
\begin{equation}\label{eq:u_1-boundary-data}
 \tilde{u}_1(x)=0, \ x\in\mathfrak{B}_{\iota_j}^{(j)}\setminus \mcA; \quad \tilde u_1(x)=v_x=u(x), \ x\in \mcA,
\end{equation}
and $\tilde{u}_2$ is the solutions of $H\tilde{u}_2=E\tilde{u}_2$ with the same boundary condition of $\tilde{u}_1$. Then by \eqref{eq:bound-reconstruction-coefficients} and \eqref{eq:reconstruct-ell-infty-control}
in Theorem \ref{thm:reconstruction}, we have 
\begin{equation}\label{eq:diff-solution-1}
  \|u-\tilde{u}_2 \|_{\infty,Q_{r_j+\ell_j+1}} \leq e^{C_{\rm recon} s_j} \|u-\tilde{u}_2 \|_{\infty,\mathfrak{B}}\leq e^{DRw_j} H_{j+1}\leq \tau_j/8,
\end{equation}
and by Cauchy-Schwarz inequality,
\begin{align}\label{eq:diff-solution-2}
\notag  \|\tilde{u}_1-\tilde{u}_2 \|_{\infty,Q_{r_j+\ell_j+1}} &\leq |E-E_0|\cdot \sup_{E\in I} \| \partial_E \mathscr{W}(v,w) \|_{2}\cdot  \|v \|_{2,\mathfrak{B}}\\
     &\lesssim_d e^{-aR} R^{d/2} e^{DRw_j} R_{\rm in} \leq
e^{-(a-D-K-2)R}
=
e^{-2(Bd+1)R} \leq  \tau_j/8 
\end{align}
suppose we set $a=D+K+2+2(Bd+1)$, since we have \eqref{eq:low-bound-Hj} and take $R\gg _{d,I,\lambda} 1$.

Combining this two estimate yields
\begin{equation}\label{eq:total-diff-solution}
  \| u-\tilde{u}_1\|_{\infty,Q_{r_j+\ell_j+1}}\leq \tau_j/4.
\end{equation}

Notice that, $\tilde{u}_1$ is exactly the sources and sinks reconstructed by the linear-martingale system $\mcM^{E_0}(\hat{\mathfrak{i}})$, and therefore \eqref{eq:propagation-inequality-for-bootstrap} gives
 \begin{align}
 \notag 2M_{\tau_j}^{\hat{\mathfrak{i}}}(v)        &\geq \frac{1}{20}\ell_j P^{\hat{\mathfrak{i}}}_{\delta_j}(v) -\ell_j\cdot O_{d,\lambda,I,K}\left( \frac{p}{w_j}\log R +\log R \right)-\eta\ell_j\left(\frac{R}{\log R}\right)^\gamma,
\end{align}
for such solution $\tilde{u}_1$. Notice that by \eqref{eq:def-P-delta}, exactly,
\[ P^{\hat{\mathfrak{i}}}_{\delta_j}(v) = \left|\left\{(x+\mathbb Z \nu ): x\in \Z^d, (x+\mathbb Z \nu ) \cap \left\{y\in Q_{r_j}\setminus F: |\tilde{u}_1(y)|\geq  \delta_j\right\} \neq \emptyset \right\}\right|. \]
Since $H_j-\tau_j/4>\delta_j$ and \eqref{eq:total-diff-solution}, we have $S_j\subset \{y\in Q_{r_j}\setminus F: |\tilde{u}_1(y)|\geq  \delta_j\} $, and therefore by \eqref{eq:many-lattice-lines-apply}
\begin{equation}\label{eq:recontrol-1}
  P^{\hat{\mathfrak{i}}}_{\delta_j}(v)\geq (m_j/2)^\gamma .
\end{equation}
Moreover, by \eqref{eq:fan-outputcount}, 
\[M_{\tau_j}^{\hat{\mathfrak{i}}}(v)  \leq \left|\left\{y \in Q_{r_j+\ell_j+1}\setminus F: |\tilde{u}_1(y)|>\tau_j \right\}\right|.\]
Since $\tau_j-\tau_j/4>H_{j+1}$, $r_j+\ell_j+1<r_{j+1}$ and \eqref{eq:total-diff-solution}, we have $\left\{y \in Q_{r_j+\ell_j+1}\setminus F: |\tilde{u}_1(y)|>\tau_j \right\}\subset S_{j+1}$ and therefore
\begin{eqnarray}\label{eq:recontrol-2}
  M_{\tau_j}^{\hat{\mathfrak{i}}}(v)\leq m_{j+1}.
\end{eqnarray}
Put \eqref{eq:recontrol-1} and \eqref{eq:recontrol-2} back to \eqref{eq:propagation-inequality-for-bootstrap}, we obtain the key recurrence relationship
\begin{equation}\label{eq:baby-recurrence}
   2m_{j+1}\ge c_* \ell_j m_j^\gamma
 -\ell_j\cdot C_* \left(\frac p{w_j}\log R+\log R\right)
 -\eta\ell_j\left(\frac{R}{\log R}\right)^\gamma,
\end{equation}
where $c_*=\frac{1}{20}2^{-\gamma}>0$ and $C_*=C_*(d,\lambda,I,K)$ are independent of $\eta$. Rewrite it by
\[
 \left(2+C_* \frac{\log R}{w_j}\right)m_{j+1}
 \ge c_*\ell_jm_j^\gamma-C_* \ell_j\log R-\eta\ell_j\left(\frac{R}{\log R}\right)^\gamma.
\]
We now specify the deterministic choice of the probability budget $\eta$ from \eqref{eq:final-probability}:
recall that \eqref{eq:fsa-seed} gives $m_j\geq m_0\geq 3R/(4096d \log R)$. By taking $\eta$ sufficiently small and taking $R\gg _{d,\lambda,I,K} 1$, we can ensure that 
\begin{equation}\label{eq:seed-big-absorb}
   C_*\log R+\eta \left(\frac{R}{\log R}\right)^\gamma
 \le \frac{c_*}{2}m_j^\gamma.
\end{equation}
Therefore,
\[
 \left(2+C_* \frac{\log R}{w_j}\right)m_{j+1}
 \ge c_*\ell_jm_j^\gamma/2.
\]
This tells us there is a constant $c_{**}=c_{**}(d,\lambda,I,K)$ such that 
\begin{equation}\label{eq:fsa-growth}
 m_{j+1}\ge c_{**}\frac{Rw_j^2}{\log R}m_j^\gamma,
 \quad 0\le j<J.
\end{equation}
Finally, write $x_j=m_j/R^d$. Since $1+d\gamma=d$,
\[
 x_{j+1}\ge\frac{c_{**}w_j^2}{\log R}x_j^\gamma,
 \qquad x_0\ge 3R^{1-d}/(4096d \log R).
\]
Successive apply \eqref{eq:fsa-growth} gives
\begin{equation}\label{eq:fsa-iteration}
 x_J\ge \left(\frac{3}{4096 d \log R}\right)^{\gamma^J}R^{-(d-1)\gamma^J}
 \left(\frac{c_{**}}{\log R}\right)^{\sum_{k=0}^{J-1}\gamma^k} 
 \gamma^{2\sum_{k=0}^{J-1}k\gamma^k}.
\end{equation}
Here, we can estimate that
\[ \gamma^J\log R\le1,\ \sum\gamma^k\le d, \
\sum k\gamma^k\le\gamma/(1-\gamma)^2.\] 
Thus $x_J\gtrsim1/(\log R)^d$ and 
\[m_{J}\gtrsim R^d/(\log R)^d.\]
Since
$Q_{r_J}\subset Q_R$ and $H_J\ge e^{-(Bd+1)R}$ by \eqref{eq:low-bound-Hj}, this proves
\eqref{eq:fsa-quadratic-count}. Finally, we remind the readers that, to understand the process of the bootstrap, one can refer to Figure \ref{fig:free-site-annular-bootstrap} for an illustration.
\end{proof}

\section{Wegner estimate and multiscale analysis}\label{sec:WegnerMSA}

In this section, we use the PDUC from the previous section to derive localization results.
Since the framework is now relatively standard \cite{BK05,DS20,LZ22,LSZ26b}, we choose to be brief here and omit some routine steps.

\subsection{Wegner estimate}

We first establish the Wegner estimate for the ABM in $d\geq 2$. The Wegner estimate is an important ingredient in the study of localization, which usually can be deduced for all scales directly via eigenvalue variation in case that the potential is continuous (i.e., has bounded density or some H\"older regularity). 
However, there is no general Wegner estimate available that fits the localization problem of ABM, since there are no eigenvalue variation for the model with our singular Bernoulli potential $V$. Therefore, we must reconstruct the Wegner estimate inductively, following \cite{Bou04,BK05}.

In the following, for any subset $\Lambda\subset \Z^d$ and operator ${\rm Op}:\ell^2(\Z^d)\rightarrow \ell^2(\Z^d)$, we will denote by ${\rm Op}_{\Lambda}$ the restriction of the operator on $\Lambda$. Espectially, for Schr\"odinger operator $H$, $H_{\Lambda}$ is exact the zero Dirichlet boundary operator on $\Lambda$. 

Before we state our Wegner estimate, we first give a simple geometric lemma.

\begin{lemma}\label{lem:duc-block-lemma}
Let $Q=Q_L(z)\subset \Z^d$ and let $\mcD$ be a union of at most $N$ cubes of radius
$n\ge4$ contained in $Q$. Let $I$ be a compact interval in $\R$. There are constants $C_{\mathrm{geo}},C_{\mathrm{esc}}>0$,
depending only on $d,\lambda,I,N$, such that if $L\ge C_{\mathrm{geo}}n$
and $H_Qu=Eu$ with $E\in I$ and $u\ne0$, then there some $a\in Q$ satisfies
\[
 Q_{n}(a)\subset Q\setminus \mcD,\quad
 |u(a)|\ge e^{-C_{\mathrm{esc}}n}\|u\|_{\infty,Q}.
\]
\end{lemma}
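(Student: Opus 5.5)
Start from a maximizer $x_0\in Q$ with $|u(x_0)|=\|u\|_{\infty,Q}$, and move from it along a cone-type chain (Proposition~\ref{prop:cone-property}, Dirichlet version) until reaching a point whose $n$-neighbourhood avoids $\mcD$ and stays inside $Q$. Since $u$ is a Dirichlet eigenfunction on $Q$, extend it by zero outside $Q$; then \eqref{eq:pointwise-equation} holds on all of $Q$. The Dirichlet part of Proposition~\ref{prop:cone-property} gives, for each direction $\sigma e_k$ with $\sigma(x_0-z)_k\le 0$, a cone-type chain $x_0,x_1,\ldots,x_m$ with $m=\lfloor L/2\rfloor$ and $|u(x_j)|\ge e^{-C_{\rm cone}j}|u(x_0)|$. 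All points with nonzero value lie in $Q$, so the chain stays in $Q$ automatically.

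\textbf{Choosing the direction.} Take $k=1$ and $\sigma$ with $\sigma(x_0-z)_1\le 0$, so the chain moves toward the centre in the first coordinate. Along such a chain the first coordinate increases (in direction $\sigma$) by $1$ or $2$ per step, and the transverse coordinates move by at most $1$ per step. The goal is to stop at a step $j\le C'n$ with $C'=C'(d,N)$ such that $Q_n(x_j)\subset Q\setminus\mcD$. The bound $|u(x_j)|\ge e^{-C_{\rm cone}C'n}\|u\|_\infty$ then gives the claim with $C_{\rm esc}=C_{\rm cone}C'$.

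\textbf{Finding the stopping step.} For $j$ with $n<j$, the first coordinate of $x_j$ differs from that of $x_0$ by at least $j$, so the slabs $\{y:|y_1-(x_j)_1|\le n\}$ for indices $j$ spaced by $2n+1$ have disjoint first-coordinate ranges. Each cube of $\mcD$ has first-coordinate projection an interval of length $2n+1$, so it meets the $2n$-neighbourhood slab of at most $3$ of these spaced points. Take $N'=3N+2$ candidate indices $j_i=(i+1)(2n+1)$ for $i=1,\ldots,N'$. At most $3N$ of them have $Q_n(x_{j_i})$ meeting $\mcD$, which leaves at least one good index $j_i\le (3N+3)(2n+1)$.

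\textbf{Staying inside $Q$.} Containment $Q_n(x_{j_i})\subset Q$ must also hold. In the first coordinate this holds because the chain starts on the $\sigma$-far side and travels at most $2j_i\le C n\le L$, provided $C_{\rm geo}$ is large. In the transverse coordinates, however, $x_0$ may lie near the boundary, and the chain need not move inward. This is the main obstacle. The fix I would try first is to do the same construction successively in each coordinate direction $e_1,\ldots,e_d$: first run a chain of length about $2n$ in direction $\sigma_k e_k$ toward the centre for each $k$, concatenating (each chain starting where the previous one ended, with the Dirichlet chain statement reapplied). Every coordinate then moves at least $n$ toward the centre while the others shift by at most $O(dn)$. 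Since each step starts with $L\ge C_{\rm geo}n$ of room, the endpoint sits at distance $\ge n+1$ from $\partial Q$ at a cost of $e^{-O(dn)}$. Running the avoidance argument above from this point then completes the proof, with all constants depending only on $d,\lambda,I,N$. The delicate bookkeeping is ensuring that the transverse drift of later chains does not undo the inward displacement gained earlier. Choosing inward displacements of order $(d+3N)n$ rather than $n$ should absorb this drift.
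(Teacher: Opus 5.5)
\textbf{Verdict: genuine gap in the bookkeeping you flagged.} Your architecture is sound: maximizer, Dirichlet cone chains toward the centre, a pigeonhole over the first-coordinate projections of the $N$ cubes, and inward drift in the transverse coordinates. Your avoidance count is also correct (in fact each cube blocks at most $2$ of the spaced points). The problem is the step you call ``delicate bookkeeping.'' With drift chains of comparable length, the conclusion that the endpoint sits at distance $\ge n+1$ from $\partial Q$ is false, and your remedy does not repair it.

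\textbf{Why equal lengths fail.} A cone-type chain of length $\ell$ in direction $\sigma_k e_k$ gains at least $\ell$ in coordinate $k$. However, it can move any other coordinate by up to $\ell$. Suppose you perform the drifts in the order $e_1,\dots,e_d$ with lengths $\ell_1,\dots,\ell_d$. Then coordinate $k$ is only guaranteed to end at distance $\ell_k-\sum_{j>k}\ell_j$ from its near face. When all $\ell_j$ are equal, this is $\le 0$ for every $k\le d-1$. Raising all lengths to order $(d+3N)n$ raises the undoing drift by the same factor, so nothing is absorbed.

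\textbf{The missing idea: a hierarchy of drift lengths.} Take, for example, $\ell_k=10^{d-k+1}\Lambda n$, so that $\ell_k\ge 2\sum_{j>k}\ell_j$. Each chain starts on the near side of the centre in its own coordinate. Hence coordinate $k$ ends at distance $\ge \ell_k/2\ge 5\Lambda n$ from its near face and at distance $\ge L-3\ell_1$ from its far face. This is exactly the role of the paper's $D_j=10^{d-j+1}$. In your order (drift first, then avoidance), the margin must also absorb the transverse drift of the later avoidance chain, whose length is up to $(3N+3)(2n+1)$. So you need $\Lambda\gtrsim N$ and $C_{\rm geo}\gtrsim 10^d\Lambda$. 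With that choice your argument closes, with $C_{\rm esc}$ of order $C_{\rm cone}10^dN$.

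\textbf{How the paper orders the two phases.} The paper runs them the other way round. First, a pigeonhole finds $y\in[(a_0)_1+Mn,(a_0)_1+2Mn]$ such that the slab $\{x\in Q:|x_1-y|\le 5B_dn\}$ misses $\mcD$. The paper then stops the $e_1$-chain from the maximizer at its first crossing $x'$ of level $y$, which also leaves room $\ge Mn$ in the first coordinate. Finally it drifts only in $e_2,\dots,e_d$, with lengths $D_jn$, staying inside $Q_{B_dn}(x')$, which is automatically disjoint from $\mcD$. Your order has the minor advantage that the drift phase never needs to see $\mcD$. Both orders, however, need the decreasing-length hierarchy.
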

\begin{proof}
By translation, assume $Q=Q_L$.
Choose $a_0\in Q$ with $|u(a_0)|=\|u\|_{\infty,Q}$, and reflect the coordinate axes so that $(a_0)_j\le0$ for every $j$.
The Dirichlet statement of Proposition~\ref{prop:cone-property} gives a cone-type chain $P$ starting at $a_0$ in direction $e_1$, of length $\lfloor L/2\rfloor$, with the corresponding lower bounds on $|u|$.
Recall that if $(x_0,x_1,\cdots,x_m)$ is a cone-type chain in direction $\sigma e_i$, we have 
\begin{equation}\label{eq:dirft-of-cone-chain}
    1\leq \sigma ((x_s)_i-(x_{s-1})_i) \leq 2, \ |(x_s)_j-(x_{s-1})_j|\leq 1  \ \forall \ j\neq i
\end{equation}
for any $1\leq s\leq m$. We take the following large constants: for $2\le j\le d$ put $D_j=10^{d-j+1}$ and take 
\[
 B_d=3+2\sum_{j=2}^d(D_j+1),\qquad
 M=2N(10B_d+2).
\]
We take $C_{\rm geo}=10M$. Since there are at most $N$ many cubes in $\mcD$, by pigeonhole principle, we can find some integer
$
 y\in[(a_0)_1+Mn,(a_0)_1+2Mn]
$
such that the rectangle with height $Mn$ satisfies
\[ \{x\in Q:|x_1-y|\leq 5 B_d n\} \cap \mcD=\emptyset. \]
Since $L\ge10Mn$, the chain $P$ has at least $\lfloor2Mn\rfloor$ steps and reaches the level $y$ in its first coordinate.
Choose $x'\in P$ at the first crossing of this level.
By \eqref{eq:dirft-of-cone-chain},
\[
 y\le x'_1\le y+1,
\]
and this crossing occurs within $2Mn$ steps.
Thus $Q_{B_d n}(x')\cap\mcD=\varnothing$, since $B_dn+1\le5B_dn$ and $\mcD\subset Q$.
The chain estimate also gives
\[
 |u(x')|\ge \exp(-2M C_{\rm cone}n)|u(a_0)|.
\]

Now we only consider the region $Q\cap Q_{B_d n}(x')$. Reflecting the coordinate axes if necessary, we may assume
$(x')_i\leq0$ for $1\leq i\leq d$. Now we do the following thing to make $x_0$ drift: according to Proposition \ref{prop:cone-property}, we can construct a path 
\[x'=y_0,y_1,y_2,\cdots\]
by following procedure, and ensures that $|u(y_s)|\geq e^{-C_{\rm cone} s}|u(x')|$. Go along a cone type chain along direction $e_2$ first with $D_2 n$ steps;  
then go along a cone type chain along direction $e_3$ first with $D_3 n$ steps... Finally, go along a cone type chain along direction $e_d$ first with $D_d n$ steps, we obtain a point $y'$ with 
\[|u(y')|\geq \exp \left( -C_{\rm cone}\left(2M+\sum_{2\leq j\leq d}D_j \right) n\right)|u(a_0)|:=e^{-C_{\rm esc}n}|u(a_0)|.\]
Now assume $Y_i$ is the terminal after we finish the drift in direction $e_{i-1}$ for $2\leq i\leq d$. The Dirichlet boundary ensures that $Y_i\in Q_{B_dn}(y')\cap Q$. By \eqref{eq:dirft-of-cone-chain}, the terminal $y'$ satisfying
\[ (y'-Y_i)_i\geq D_{i}n-\sum_{j>i}^d D_j n \geq D_i n/2 \geq n \ {\rm for \ all \ } 2\leq i\leq d.\]
Moreover,
\[
 |(y'-x')_i|\le 2n\sum_{j=2}^d D_j,
 \qquad 1\le i\le d.
\]
The first-crossing choice gives $L-|x'_1|\ge Mn$, and this remains true after the coordinate reflections.
The displacement bound therefore leaves $y'_1$ at distance at least $n$ from both faces of $Q$ in the first coordinate.
For $i\ge2$, the preceding lower bound on $(y'-Y_i)_i$ gives distance at least $n$ from the lower face, while $x'_i\le0$ and $2n\sum_{j=2}^dD_j+n<L$ give the corresponding distance from the upper face.
Finally, $B_d>2\sum_{j=2}^dD_j+1$ ensures $Q_n(y')\subset Q_{B_dn}(x')$.
Thus $Q_n(y')\subset Q\cap Q_{B_dn}(x')$, which is disjoint from $\mcD$, completing the proof.
\end{proof}

Now we prove the Wegner estimate. For an explanation of the corresponding conditions and scales in this estimate, see \cite[Remark 5.7]{DS20}.

\begin{theorem}
\label{thm:wegner}
Let integers
$N_1,N_F\ge1$. There is a numerical
$\varepsilon_*=\varepsilon_*(d)>0$,
such that the following holds.
For $0<\varepsilon<\varepsilon_*$ and $0<\delta\le\varepsilon/4$,
there is a large constant 
\[
 C_{\rm weg}=C_{\rm weg}(d,\lambda,I,N_1,N_F,\varepsilon,\delta)>0
\]
such that, if
\begin{enumerate}
\item[(1)] $E_0\in[0,4d+\lambda]$.
\item[(2)] $L_0>\cdots>L_5\ge C_{\rm weg}$ and
\begin{equation}\label{eq:six-adjacent-scales}
 L_j^{1-2\delta}\ge L_{j+1}\ge L_j^{1-\varepsilon/2},
 \qquad 0\le j\le4.
\end{equation}
Espectially, we take $L_0,L_3$ dyadic integers.
\item[(3)] $Q\subset\mathbb Z^d$ is a cube of size $L_0$.
\item[(4)] $Q'_1,\ldots,Q'_J\subset Q$ are cubes of size $L_3$,
where $1\le J\le N_1$. Denote $\mcD=\bigcup_{a=1}^{J}Q'_a$.
\item[(5)] $G\subset \mcD$ with $1\le |G|<L_0^\delta$.
\item[(6)] $F$ is a fixed discrete $(N_F,\varepsilon,\boldsymbol\rho)$-graded set with $\rho_1 \geq C_{\rm weg}$. Moreover, $F$ is normal in every cube $Q_{L_3}(a)\subset Q\setminus \mcD$.
          We fix a $v_F\in\{0,1\}^{F\cap Q}$.
\item[(7)] For every $V:Q\to[0,1]$ with $V|_{F\cap Q}=v_F$,
every $E$ with $|E-E_0|\le e^{-L_5}$, and every solution
$H_Qu=Eu$, one has
\begin{equation}\label{eq:six-robust-support}
 e^{L_4}\|u\|_{\ell^2(Q\setminus \mcD)}
 \le\|u\|_{\ell^2(Q)}
 \le(1+L_0^{-\delta})\|u\|_{\ell^2(G)}.
\end{equation}
\end{enumerate}
Then
\begin{align}
 &\mathbb P\left\{\|(H_Q-E_0)^{-1}\|>e^{L_1}
                         \,\middle|\,V|_{F\cap Q}=v_F\right\}
\le L_0^{-d/2+10d\varepsilon}.
 \label{eq:six-wegner-bound}
\end{align}
\end{theorem}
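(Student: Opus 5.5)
We follow the Bourgain--Kenig / Ding--Smart scheme: combine the PDUC of Theorem~\ref{thm:duc-linear} with a rank-one perturbation argument and the generalized Sperner lemma of \cite{DS20}.

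\textbf{Step 1: localize eigenvectors near $E_0$.} Fix all potentials outside a family of free sites. Hypothesis (7) says that every eigenvector with eigenvalue within $e^{-L_5}$ of $E_0$ lives essentially on $G$, with $|G|<L_0^{\delta}$. We would first show that $H_Q$ has at most $|G|$ (more precisely $\lesssim L_0^{\delta}$) eigenvalues in $[E_0-e^{-L_5},E_0+e^{-L_5}]$. The argument: any eigenvector in that window is almost orthogonal to $\ell^2(Q\setminus G)$, so the span of these eigenvectors injects almost isometrically into $\ell^2(G)$. Consequently, on the event $\|(H_Q-E_0)^{-1}\|>e^{L_1}$, some eigenvalue $E$ of $H_Q$ satisfies $|E-E_0|<e^{-L_1}$.

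\textbf{Step 2: apply PDUC on a clean cube.} Take such a normalized eigenvector $u$. By the first inequality in (\ref{eq:six-robust-support}), the $\ell^\infty$ maximum of $u$ is attained (up to a factor) inside $\mcD$. Apply Lemma~\ref{lem:duc-block-lemma} with $n=L_3$ to find $a$ with $Q_{L_3}(a)\subset Q\setminus\mcD$ and $|u(a)|\ge e^{-C_{\rm esc}L_3}\|u\|_\infty$. Next apply Theorem~\ref{thm:duc-linear} on $Q_{R}(a)$ with $R$ comparable to $L_3$, using the normality of $F$ there. The energy uses a net $\{E_0'\}$ of spacing $e^{-aR}$ over the window, of cardinality $e^{O(L_3)}\cdot e^{-L_5}$-adjusted, together with $\|u\|_\infty\le e^{KR}|u(a)|$. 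The conclusion is that, outside an event of probability $\exp(-c(L_3/\log L_3)^{(d-1)/d})$ per cube and per net point, $u$ has at least $cL_3^d/(\log L_3)^d$ free sites where $|u|\ge e^{-CL_3}\|u\|$. We then union bound over cube positions and net points. This is affordable because the exceptional probability is stretched-exponential in $L_3$, while the number of cases is $L_0^{O(1)}e^{O(L_3\cdot\text{small})}$. I expect this to need care: the net cardinality must be dominated by $(L_3/\log L_3)^{(d-1)/d}$. I would try choosing $E_0'$ only for the window $|E-E_0|\le e^{-L_5}$ with spacing $e^{-aR}$ and $R\approx L_3$. The count is then roughly $e^{aL_3}$, which is too many. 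So we instead take $R=L_4^{1+}$-scale cubes inside $Q_{L_3}(a)$, making $e^{-aR}\ge e^{-L_5}$ impossible to beat. This step is the main obstacle, and I would resolve it by using $\partial_E$-stability from (\ref{eq:bound-reconstruction-coefficients}) to pick $R$ with $aR\le L_4$, so that a single $E_0'=E_0$ suffices.

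\textbf{Step 3: Sperner and eigenvalue variation.} Condition on everything except the Bernoulli variables on the $\gtrsim L_3^{d-o(1)}$ free sites. By first-order perturbation (Hellmann--Feynman plus the gap of size $\gg e^{-L_1}$ furnished by Step 1), flipping a site $x$ from $0$ to $1$ raises the eigenvalue by at least $\lambda|u(x)|^2\ge e^{-2CL_3}\gg e^{-L_1}$, since $L_3\ll L_1$. This makes the eigenvalue functions strictly monotone with quantitative jumps on a set $\Gamma$ of size $\gtrsim L_3^{d-o(1)}$. The generalized Sperner theorem \cite[Thm.~3.2]{DS20} then bounds the probability that one of the $\le L_0^{\delta}$ eigenvalues lies within $e^{-L_1}$ of $E_0$ by $L_0^{\delta}|\Gamma|^{-1/2}\lesssim L_0^{\delta}L_3^{-d/2+o(1)}$. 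With $L_3\ge L_0^{(1-\varepsilon/2)^3}$, this yields $L_0^{-d/2+10d\varepsilon}$. Adding the PDUC failure probability from Step 2, which is negligible, completes the bound (\ref{eq:six-wegner-bound}).
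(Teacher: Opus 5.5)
Your outline (Hilbert--Schmidt count from (7), Lemma~\ref{lem:duc-block-lemma} to find a clean cube $Q_{L_3}(a)\subset Q\setminus\mathcal D$, PDUC there, then Sperner) matches the paper. However, Step~3 has a genuine gap, and it sits exactly where the Bernoulli Wegner estimate is hard.

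\textbf{Step~1 gives no spectral gap.} It only bounds the \emph{number} of eigenvalues in the $e^{-L_5}$-window by $\lesssim L_0^{\delta}$. Nothing prevents two of them from lying within, say, $e^{-L_0^{2}}$ of each other and of $E_0$.

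\textbf{Hellmann--Feynman does not apply.} A flip $V_x:0\to1$ is a rank-one perturbation of size $\lambda=O(1)$, not an infinitesimal one. By interlacing, $\mu_k$ cannot move past its neighbouring eigenvalue. So if that neighbour is also within $e^{-L_1}$ of $E_0$, the event $\{|\mu_k-E_0|<e^{-L_1}\}$ is not a Sperner family, however large $|u_k(x)|$ is.

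\textbf{How the paper repairs this.} Following \cite{DS20}, it introduces nested windows $s_\ell=\exp(-L_1+\ell(L_2-L_4))$, $0\le\ell\le 10|\mathcal K|$. It then defines cluster events $\mathcal E_{j_1,j_2,\ell}$: eigenvalues $j_1,\dots,j_2$ lie within $s_\ell$ of $E_0$, while the neighbours $j_1-1$ and $j_2+1$ lie beyond $s_{\ell+1}$. Since there are at most $|\mathcal K|$ relevant eigenvalues, pigeonholing in $\ell$ produces an eigenvalue-free annulus, which gives the covering \eqref{eq:six-cover}. Each $\mathcal E_{j_1,j_2,\ell,i}$ is then shown to be Sperner via the rank-one lemma \cite[Lemma 5.1]{DS20}. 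That lemma needs two inputs:
\begin{itemize}
\item the PDUC lower bound $|u_{j_1}(x)|^2\ge e^{-L_2/2}$;
\item the \emph{upper} bound $\sum_j|u_j(x)|^2\le |Q|e^{-2L_4}\le e^{-L_4}$ at the flipped site $x\in\Sigma$, taken over the other eigenvectors whose eigenvalues lie in the $e^{-L_5}$-window.
\end{itemize}
The second bound is the purpose of the first inequality in \eqref{eq:six-robust-support}; your proposal never uses it in this role. The scales are tuned so that $e^{-L_2/2}\gg e^{-L_4}\,s_\ell/s_{\ell+1}=e^{-L_2}$.

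There are three further points.

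\textbf{Deterministic index set.} Your union bound over ``one of the $\le L_0^{\delta}$ eigenvalues'' needs a deterministic index set, because Sperner is applied to events defined by fixed indices. Your Step~1 count is per configuration, and a union over all $|Q|$ indices would cost $L_0^{d}$. The paper obtains a fixed set $\mathcal K$ with $|\mathcal K|\le 2L_0^{\delta}$ from Bourgain's bootstrap over continuous $t\in[0,1]^\Sigma$, see \eqref{eq:Bourgain-bootstrap}. This again uses (7), to show that $\mu_j$ varies by at most $\lambda e^{-2L_4}$ across the whole cube.

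\textbf{Form of the Sperner bound.} The bound is $|\Sigma|^{-1/2}\rho^{-1}$ with $\rho\asymp L_3^{d}(\log L_3)^{-d}/|\Sigma|$, i.e.\ $L_0^{d/2}(\log L_3)^{d}L_3^{-d}$, not $|\Gamma|^{-1/2}$. The large-value set $\Gamma$ depends on the configuration, so you cannot condition on its complement. The final exponent still works out.

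\textbf{The Step~2 energy-net obstacle is not real.} PDUC is only needed for the cluster eigenvectors, whose eigenvalues lie within $s_\ell\le e^{-L_1/2}<e^{-L_2}\le e^{-a_{\mathrm{duc}}L_3}$ of $E_0$. So one application of Theorem~\ref{thm:duc-linear} at $E_0$ with $R=L_3$ per cube $Q_{L_3}(a)$ suffices. The remaining eigenvectors in the $e^{-L_5}$-window are controlled by (7), not by PDUC.
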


\begin{proof}

Take $\varepsilon_*$ sufficiently  small and $C_{\rm weg}$ sufficiently large such that Theorem \ref{thm:duc-linear} is applicable. Firstly, we condition on the randomness in $\mcD\setminus F$. Put
\[
 F'=(F\cap Q)\cup \mcD,\qquad \Sigma=Q\setminus F'.
\]
Throughout the proof, $\mathbb P_\Sigma$ denotes the conditional law of the potential on $\Sigma$, with all values on $F'$ fixed as above. Under this law, the variables $(V_x)_{x\in\Sigma}$ remain independent Bernoulli variables with parameter $1/2$.
Assumption \eqref{eq:six-robust-support} remains
valid for every continuous extension $t \in[0,1]^\Sigma$ with $V|_{\Sigma}=t$.

Let $K=C_{\rm esc}$ be the constant of Lemma~\ref{lem:duc-block-lemma}.
Assume that applying Theorem \ref{thm:duc-linear} for such $K=C_{\rm esc}$ and $I=[-1,4d+\lambda+1]$ gives us constants
$a_{\rm duc},C_{\rm duc},c_{\rm duc}>0$.
We denote
\[
 \mathcal C=\{a\in\mathbb Z^d:Q_{L_3}(a)\subset Q\setminus \mcD\}.
\]
By our assumption (6), $F$ is normal in each $Q_{L_3}(a)$.
Moreover, for each $a\in \mcC$, $Q_{L_3}(a)$ are disjoint with $\mcD$, so apart from the frozen set, i.e. $V|_F=v_F$,
there are no randomness be conditioned. 
Let $\mathcal \mcE'_{L_3}(F,v_F,E_0;a)$ be the corresponding QUC event at energy 
$E_0$ on the cube $Q_{L_3}(a),$.
We take the intersection 
\begin{equation}\label{eq:QUC-event-wegner}
  \mathcal E' = \bigcap_{a\in\mathcal C}\mathcal E'_{L_3}(F,v_F,E_0;a). 
\end{equation}
Then \eqref{eq:fsa-probability} gives
\begin{equation}\label{eq:six-quc-event}
 \mathbb P_\Sigma((\mathcal E ')^c)
 \le |\mathcal C|e^{-c_{\rm duc}(L_3/\log L_3)^\gamma}
 \le C L_0^d e^{-c_{\rm duc}(L_3/\log L_3)^\gamma}
\end{equation}
where $\gamma=(d-1)/d$ as above.

Now we fix arbitrary $\omega\in \mcE'$ and let $V_{\Sigma}=\omega$. Let $u$ be a normalized eigenvector of eigenvalue $E$ of $H_Q$ under this configuration, i.e. $H_Qu=Eu$, with
$|E-E_0|\le e^{-L_2} \leq e^{-a_{\rm duc}L_3}$. Lemma \eqref{lem:duc-block-lemma} enables us to find one $a\in\mathcal C$ with
\[
 \|u\|_{\infty,Q_{L_3}(a)}\le \|u\|_{\infty,Q}
       \le e^{C_{\rm esc}L_3}|u(a)|.
\]
Therefore, the restriction of the solution $u $ solves the equation on $Q_{L_3-1}(a)$, and hence
\eqref{eq:fsa-quadratic-count} applies to $u$ in $Q_{L_3}(a)$, giving
\begin{equation*}
 |\{x\in Q_{L_3}(a)\setminus F:|u(x)|\ge e^{-C_{\rm duc}L_3}|u(a)|\}|
 \ge \frac{c_{\rm duc}L_3^d}{(\log L_3)^d}
\end{equation*}
Since
\[
 e^{-C_{\rm duc}L_3}|u(a)|
 \ge e^{-(C_{\rm esc}+C_{\rm duc})L_3}\| u\|_{\infty,Q},
\]
and since the assumption \eqref{eq:six-robust-support} yields that 
\[\| u\|_{\infty,Q}\geq \| u\|_{\infty,G} \geq \| u\|_{2,G}/\sqrt{|G|} \geq 1/(1+L_0^{\delta}),\]
we have (since $Q_{L_3}(a)\setminus F\subset \Sigma$)
\begin{equation}\label{eq:six-high-count}
 |\{x\in \Sigma:|u(x)|\ge e^{-L_2/4}\}|
 \ge \frac{c_{\rm duc}L_3^d}{(\log L_3)^d}
\end{equation}
holds simultaneously for every
normalized eigenvector with eigenvalue $|E-E_0|\le e^{-L_2}$. In the estimate \eqref{eq:six-high-count}, we used 
\begin{equation*}\label{eq:six-quc-window}
 (C_{\rm esc}+C_{\rm duc})L_3
       +\log(L^{\delta} _0+1)\le\frac{L_2}{4}
\end{equation*}
for sufficiently large $C_{\rm weg}$.

Now for a continuous configuration $t\in[0,1]^\Sigma$, we order eigenvalues of $H_Q(t,v_F,\omega|_{\mcD\setminus F})$
with multiplicities as
\[
 \mu_1(t)\ge \cdots\ge \mu_{|Q|}(t)
\]
with corresponding orthonormal eigenbasis $u_i,1\leq i\leq |Q|$. 
Define
\begin{equation}\label{eq:six-rank-list}
 \mathcal K=\left\{j: \min_{ t \in[0,1]^\Sigma}
                     \ |\mu_j(t)-E_0|\le e^{-L_2}\right\}.
\end{equation}
Together with a bootstrap argument of Bourgain (see \cite[(3.57)-(3.61)]{LSZ25} or \cite[Claim 5.11]{DS20} for details), we have 
\begin{equation}\label{eq:Bourgain-bootstrap}
   \max_{t\in [0,1]^\Sigma}|\mu_j(t)-E_0|
 \le e^{-L_2}+\lambda e^{-2L_4}\le e^{-L_4}<e^{-L_5}.
\end{equation}

Therefore, all the eigenfunctions $u_j,j\in \mcK$ satisfy \eqref{eq:six-robust-support}. Therefore by the standard Hilbert-Schmidt argument,
\[
 \frac{|\mcK|}{(1+L_0^{-\delta})^2}
 \le\sum_{j\in\mathcal K}\|\mathbf1_Gu_j\|_2^2
 =\sum_{x\in G}\sum_{j\in\mathcal K}|u_j(x)|^2
 \le |G|.
\]
Consequently
\begin{equation}\label{eq:six-rank-number}
 |\mcK| \le(1+L_0^{-\delta})^2|G|\le2 L_0^{\delta}.
\end{equation}

Now, for $1\leq j_1, j_2 \leq |Q|$ and an integer $0\leq \ell \leq 10|\mcK|  \lesssim L_0^{\delta}$, we denote by $\mcE_{j_1,j_2,\ell}$ the event (concerning $\omega_\Sigma$) such that
\[
|\mu_{j_1}-E_0|\vee|\mu_{j_2}-E_0|<s_{\ell} \quad\text{and}\quad |\mu_{j_1-1}-E_0|\wedge|\mu_{j_2+1}-E_0|\geq s_{\ell+1},
\]
where 
\[
 s_\ell=\exp\{-L_1+\ell(L_2-L_4)\}.
\]
By the argument as in \cite[Claim 5.11]{DS20} or \cite[Claim 4.5]{LSZ26b}, we may prove that 
\begin{equation}\label{eq:six-cover}
            \{\omega_\Sigma: \|(H_Q-E_0)^{-1}\| > e^{L_1} \} =\{\omega_\Sigma: \dist(\spec(H_Q),E_0) < e^{-L_1} \}\subset \bigcup_{\substack{j_1,j_2\in \mcK \\ 0\leq \ell \leq 10|\mcK| }}\mcE_{j_1,j_2,\ell}.
\end{equation}
Therefore we only need to estimate the probability of $\mcE_{j_1,j_2,\ell}$ for $j_1,j_2\in \mcK,0\leq \ell \leq 10|\mcK|$.

Now we study one fixed $\mcE_{j_1,j_2,\ell}$, since $\delta<\varepsilon/4$ and $\ell \lesssim L_0^{\delta}$, we have $s_{\ell}\leq e^{-L_1/2}<e^{-L_2}$. Therefore, for each 
configuration $\omega_\Sigma\in \mcE_{j_1,j_2,\ell}$ we have $|E_0-\mu_{j_1}|,|E_0-\mu_{j_2}|<e^{-L_2}$ and therefore, in the event $\mcE'$ given by \eqref{eq:QUC-event-wegner}, \eqref{eq:six-high-count} is valid for $u_{j_1}$ and $u_{j_2}$. We therefore define the sets of sites where the eigenfunction modulus is at least $e^{-L_2/4}$: 
\[ S_{\rm duc}(j_1,\omega_\Sigma) := \{x\in \Sigma:|u_{j_1}(x)|\ge e^{-L_2/4}\},\ |S_{\rm duc}(j_1,\omega_\Sigma)| \geq \frac{c_{\rm duc}L_3^d}{(\log L_3)^d} ;\]
\[ S_{\rm duc}(j_2,\omega_\Sigma) := \{x\in \Sigma:|u_{j_2}(x)|\ge e^{-L_2/4}\},|S_{\rm duc}(j_2,\omega_\Sigma)| \geq \frac{c_{\rm duc}L_3^d}{(\log L_3)^d} .\]
This enables us to decompose (up to a pigeonhole principle)
\begin{equation}\label{eq:event-inclusion}
  \mcE_{j_1,j_2,\ell}\cap \mcE' = \mcE_{j_1,j_2,\ell,0}\cup  \mcE_{j_1,j_2,\ell,1},
\end{equation}
with  $\mcE_{j_1,j_2,\ell,i}$ denote the event that 
\begin{equation}\label{eq:pigeonhole-event}
    \mcE_{j_1,j_2,\ell} \ {\rm and} \ \left| S_{\rm duc}(j_1,\omega_\Sigma)  \cap \{n\in \Sigma:\omega_{\Sigma}(n) = i\}  \right|\geq \frac{c_{\rm duc}L_3^d}{2(\log L_3)^d} .
\end{equation}
for $i=0,1$.

Now if $\omega_\Sigma \in \mcE_{j_1,j_2,\ell,i}$ and $ x\in S_{\rm duc}(j_1,\omega_\Sigma)  \cap \{n\in \Sigma:\omega_{\Sigma}(n) = i\}\subset \Sigma$ is a site such that
\[ \omega_\Sigma (x)=i \quad {\rm and}\quad |u_{j_1}(\omega_\Sigma; x)|\geq e^{-L_2/4},\]
then we change the value of $V$ at $x$, i.e.,  take 
\begin{equation*}
  \omega_\Sigma^{(x)}(y)=\begin{cases}
    \omega_\Sigma(y), \ {\rm if} \ y\in \Sigma \ {\rm and}\ y \neq x,\\
    1-\omega_\Sigma(x), \ {\rm if} \ y\in \Sigma \ {\rm and}\ y=x.
  \end{cases}
\end{equation*}
We show the following conclusion:
\begin{Claim}\label{Claim:Sperner}
    $\omega_\Sigma^{(x)} \notin \mcE_{j_1,j_2,\ell,i}$.
\end{Claim}
\begin{proof}[Proof of Claim \ref{Claim:Sperner}]

We only consider the case $i=0$, as the proof of the case $i=1$ is analogous. Define the shifted operator
\[\widetilde{H}_{Q}(\omega_\Sigma)=(H_{Q}(\omega_\Sigma)-E_0+s_\ell)/\lambda ,\]
of which all eigenvalues are $\widetilde{\mu}_j=(\mu_j(\omega_\Sigma)-E_0+s_{\ell})/\lambda$  (For $i=1$, the corresponding operator is $-(H_{Q}(\omega_\Sigma)-E_0-s_{\ell})/\lambda$). Here, why we divide the coupling constant $\lambda$, is because the rank-one perturbation lemma \cite[Lemma 5.1]{DS20} has no coupling constant.  Set
\begin{equation}\label{eq:six-rank-one-scales}
 r_1=\frac{2s_\ell}{\lambda},\quad
 r_2=\frac{s_{\ell+1}}{\lambda},\quad
 r_3=e^{-L_2/2},\quad r_4=e^{-L_4},\quad
 r_5=\frac{e^{-L_5}}{2\lambda}.
\end{equation}
Then since $\omega_\Sigma\in \mcE_{j_1,j_2,\ell,i}$,  the following ordering holds true:
\[0<\widetilde{\mu}_{j_1}\leq \widetilde{\mu}_{j_2}<r_1<r_2<\widetilde{\mu}_{j_2+1}.\]
At the site $x$, we also have 
\begin{equation}\label{r_3}
    |u_{j_1}(\omega_\Sigma;x)|^2 \geq e^{-2 \cdot\frac{L_2}{4}}:=r_3.
\end{equation}
Consider now  a $j$ such that $r_2<\widetilde{\mu}_j<r_5$. Then $|\mu_j(\omega_\Sigma)-E_0|\leq \lambda r_5-s_{\ell}<e^{-L_5}$. 
Therefore, \eqref{eq:six-robust-support} applies to $\mu_j(\omega_\Sigma)$ and we thus obtain (since $x\in \Sigma\subset Q\setminus \mcD$)
\begin{equation}\label{r_4}
      \sum_{j:r_2<\widetilde{\mu}_j<r_5}|u_j(x)|^2 \leq |Q| \cdot \exp\{-2L_4 \}\leq  r_4.
\end{equation}
Therefore, one can verify that $r_1,r_2,r_3,r_4,r_5$ satisfy the parameter condition of \cite[Lemma 5.1]{DS20}, and thereby the conclusion follows.
\end{proof}

Claim 
\ref{Claim:Sperner} states that $\mcE_{j_1,j_2,\ell,i}$ is a $\rho$-Sperner family of $ \Sigma$ (for the definition of $\rho$-Sperner, see \cite[Definition 4.1]{DS20}), where by \eqref{eq:pigeonhole-event},
\begin{equation}\label{Sperner rho final}
  \rho= \frac{c_{\rm duc}L_3^d}{2(\log L_3)^d|\Sigma| }.
\end{equation}\label{Sperner estimate for i=0}
Applying \cite[Theorem 4.2]{DS20} yields
\begin{align*}
  \mathbb P_\Sigma(\mcE_{j_1,j_2,\ell,i} ) \leq \left(|\Sigma| \right)^{-\frac{1}{2}} \cdot \rho^{-1}\lesssim \frac{L_0^{d/2}(\log L_3)^d}{L_3^d}.
\end{align*}
Therefore, \eqref{eq:event-inclusion} yields 
\begin{align}\label{eq:single-event-prob}
  \mathbb P_\Sigma(\mcE_{j_1,j_2,\ell} \cap \mcE') \leq \mathbb P_\Sigma(\mcE_{j_1,j_2,\ell,0} )+\mathbb P_\Sigma(\mcE_{j_1,j_2,\ell,1} ) \lesssim \frac{L_0^{d/2}(\log L_3)^d}{L_3^d}.
\end{align}

Combining \eqref{eq:six-quc-event}, \eqref{eq:six-cover} and \eqref{eq:single-event-prob}, we conclude that 
\begin{align*}
    \mathbb P_\Sigma (\{\omega_\Sigma: \|(H_Q-E_0)^{-1}\| > e^{L_1} \}) & \leq \mathbb P_\Sigma \left( \bigcup_{\substack{j_1,j_2\in \mcK \\ 0\leq \ell \leq 10|\mcK| }}\mcE_{j_1,j_2,\ell}\right) \\
                      & \leq\sum_{\substack{j_1,j_2\in \mcK \\ 0\leq \ell \leq 10|\mcK| }} \mathbb P_\Sigma \left( \mcE_{j_1,j_2,\ell}\cap \mcE'\right) +\mathbb P_\Sigma((\mcE')^c)\\
                      & \lesssim (|\mcK|)^3 \frac{L_0^{d/2}(\log L_3)^d}{L_3^d}+  L_0^d e^{-c_{\rm duc}(L_3/\log L_3)^\gamma}\\
            & \lesssim  L_0^{3\delta}\frac{L_0^{d/2}(\log L_3)^d}{L_3^d}+  L_0^d e^{-c_{\rm duc}(L_3/\log L_3)^\gamma}.
\end{align*}
Here the last inequality use \eqref{eq:six-rank-number}. The proof of  \eqref{eq:six-wegner-bound} will be finished by 
\[L_0^{3\delta}\frac{L_0^{d/2}(\log L_3)^d}{L_3^d}+  L_0^d e^{-c_{\rm duc}(L_3/\log L_3)^\gamma} \ll L_0^{-d/2+10d \varepsilon} \]
by taking $L_0\geq C_{\rm weg}$ large enough. The estimate is uniform in the revealed values on $\mcD\setminus F$, so averaging over these values proves \eqref{eq:six-wegner-bound}.
\end{proof}

\subsection{Multi-scale analysis and localization}

We next give the final multiscale analysis statements from the Wegner estimate. 
Since this procedure has by now become standard and has been described in detail in \cite[Section 3]{LZ22}, we state only the final form of the Green's function estimate here and briefly discuss its differences from the preceding proof.

The Green's function estimate for the initial scale is usually 
used as the first step in the iterative proof of the multiscale analysis. 
The initial scale analysis for the ABM was first given by \cite{Bou04} via digging 
some large deviation events (see also \cite[Section 3.2]{LSZ26b}). Another argument to prove the initial scale estimate was later 
been give by \cite[Section 7]{DS20}, which take use of the ``net structure'' of the potential. The latter method is also applied and generalized for all dimensions in \cite{LZ22}. We just state the corresponding result, which is adjusted from \cite[Proposition B.4]{LZ22}.

\begin{theorem}
\label{thm:msa-initial}
For any $0<\delta<\frac1{10}$ and $\varepsilon>0$,
there exists $C_{\delta,\varepsilon}$ such that for any
$n>C_{\delta,\varepsilon}$ and
\[
 0\le E\lesssim \frac{ n^{-d\delta/10}}{2},
\]
with probability (concerning only about randomness on $\lceil\varepsilon^{-1}\rceil\Z^d$) at least $1-n^{-1}$ the following is true: for any potential configuration $\omega$ in this probability event, take any $V':\mathbb Z^d\to[0,1]$ such that
\[
 V'_{Q_n\cap\lceil\varepsilon^{-1}\rceil\mathbb Z^d}
 =
 \omega_{Q_n\cap\lceil\varepsilon^{-1}\rceil\mathbb Z^d}.
\]
Let $H'_{Q_n}$ be the restriction of $-\Delta+V'$ on $Q_n$
with Dirichlet boundary condition. Then we have
\begin{equation}
 \bigl\|(H'_{Q_n}-E)^{-1}\bigr\|
 \le \exp(n^{2\delta})
\label{eq:LiZhang-B11}
\end{equation}
and
\begin{equation}
 \bigl|(H'_{Q_n}-E)^{-1}(a,b)\bigr|
 \le n^{2\delta}\exp\bigl(-n^{-\delta}|a-b|\bigr)
 \qquad\text{for any }a,b\in Q_n.
\label{eq:LiZhang-B12}
\end{equation}
\end{theorem}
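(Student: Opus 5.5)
We would prove the theorem by a deterministic spectral-gap argument run on a high-probability ``net'' event, following \cite[Section 7]{DS20} and \cite[Proposition B.4]{LZ22}. Put $k=\lceil\varepsilon^{-1}\rceil$ and fix a mesoscopic side length $m=n^{\theta}$, with $\theta$ a small multiple of $\delta$ to be tuned below. Let $\mathcal G$ be the event that every cube of side $m$ contained in $Q_n$ has at least one site $y\in k\Z^d$ with $\omega_y=1$. This event depends only on the randomness on $k\Z^d$. Each such cube contains about $(m/k)^d$ net sites, so a union bound over at most $(2n+1)^d$ cubes gives
\[
\P(\mathcal G^c)\le (2n+1)^d\,2^{-c(m/k)^d}\le n^{-1}
\]
once $n>C_{\delta,\varepsilon}$.

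On $\mathcal G$, fix any $V'$ with $0\le V'\le1$ that agrees with $\omega$ on $Q_n\cap k\Z^d$. Since $V'\ge \mathbf 1_A$ with $A=\{y\in Q_n\cap k\Z^d:\omega_y=1\}$, we get the form comparison $H'_{Q_n}\ge -\Delta_{Q_n}+\lambda\mathbf 1_A$ (Dirichlet restriction). We then bound the bottom of the spectrum from below by a local Poincaré-type inequality. Partition $Q_n$ into cubes $Q^{(i)}$ of side $m$, each containing a point $a_i\in A$. For $x\in Q^{(i)}$, write $u(x)-u(a_i)$ as a telescoping sum along lattice paths from $a_i$ to $x$ and average over the paths. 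This gives
\[
\sum_{Q^{(i)}}|u|^2\le 2m^d|u(a_i)|^2+Cm^2\sum_{\text{edges in }Q^{(i)}}|\nabla u|^2.
\]
Summing over $i$ yields
\[
\langle u,(-\Delta+\lambda\mathbf 1_A)u\rangle\ge c\min(\lambda m^{-d},m^{-2})\|u\|^2\ge c_\lambda m^{-d}\|u\|^2 .
\]
Hence $\operatorname{spec}(H'_{Q_n})\subset[c_\lambda m^{-d},\infty)$. We choose $\theta$ so that $c_\lambda m^{-d}\ge 2E$ for every $E$ in the admissible range, which is where the implicit constant in $E\lesssim n^{-d\delta/10}$ enters. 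Then $\dist(E,\operatorname{spec}H'_{Q_n})\ge \eta:=c m^{-d}/2$, and $\|(H'_{Q_n}-E)^{-1}\|\le \eta^{-1}=O(n^{d\theta})\le\exp(n^{2\delta})$, which is \eqref{eq:LiZhang-B11}.

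For \eqref{eq:LiZhang-B12} we use a Combes--Thomas estimate. Conjugate by $e^{\mu\langle\cdot-a\rangle}$ with $\mu$ small relative to $\eta$. The perturbation of the Laplacian has norm $O(\mu)$, and because $E$ lies below the spectrum, positivity of $H'_{Q_n}-E\ge\eta$ survives a perturbation of size up to $\eta/2$. The refined edge version of Combes--Thomas lets us take $\mu\sim\sqrt{\eta}$. This gives
\[
|(H'_{Q_n}-E)^{-1}(a,b)|\le 2\eta^{-1}e^{-\mu|a-b|}.
\]
We then need $2\eta^{-1}\le n^{2\delta}$ and $\mu\ge n^{-\delta}$.

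The main obstacle is matching exponents in this last step. The decay rate is only of order $\sqrt{\eta}\sim m^{-d/2}$, so reaching $n^{-\delta}$ forces $\theta\le 2\delta/d$. On the other hand, the energy window allowed by the gap is $m^{-d}$, which must cover $E\lesssim n^{-d\delta/10}$. We would first check that $\theta=\delta/10$ (up to a constant) reconciles both requirements together with the union bound, using the edge form of Combes--Thomas. If it does not, we would shrink the energy window, as the ``$\lesssim$'' in the statement permits, rather than weaken the decay rate. The remaining points — that the argument is uniform in $V'$ off the net and that only net randomness is used — follow because $\mathcal G$ and the lower bound on $V'$ involve only $\omega|_{k\Z^d}$.
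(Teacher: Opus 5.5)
Your strategy (an event on the sublattice $\lceil\varepsilon^{-1}\rceil\mathbb Z^d$ forcing a site with $V=1$ in every mesoscopic cube, then a deterministic lower bound on the bottom of $\operatorname{spec}(H'_{Q_n})$, then Combes--Thomas below the spectrum) is the standard net-structure argument of \cite[Section 7]{DS20} that underlies \cite[Proposition B.4]{LZ22}. The paper gives no proof of its own and simply imports that result, so your route is essentially the intended one. Two local points need repair; neither is fatal.

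\textbf{The local Poincar\'e inequality.} Your displayed inequality is false. For $d\ge3$, take $u=\mathbf 1_{Q^{(i)}\setminus\{a_i\}}$: the left side is $m^d-1$, while the right side is at most $2dCm^2$. For $d=2$, a logarithmic profile around $a_i$ breaks it similarly. Point values are not controlled at the $m^2$ scale. The correct bound is
\[
\sum_{x\in Q^{(i)}}|u(x)-u(a_i)|^2\le\sum_x R_{\rm eff}(a_i,x)\,\mathcal E_{Q^{(i)}}(u)\lesssim m^d\,\mathcal E_{Q^{(i)}}(u)
\]
for $d\ge3$, with $m^2\log m$ in place of $m^d$ for $d=2$. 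Cruder still, single paths give the bound $d\,m^{d+1}\mathcal E_{Q^{(i)}}(u)$. Either version yields a gap $\eta\gtrsim m^{-d}$ (resp.\ $m^{-d-1}$), and a polynomial gap in $m$ is all you actually use.

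\textbf{The ``main obstacle'' does not exist.} Your three requirements are:
\begin{itemize}
\item $\eta\ge2E$ for $E\lesssim n^{-d\delta/10}$;
\item the prefactor satisfies $2\eta^{-1}\le n^{2\delta}$;
\item the decay rate satisfies $\mu\ge n^{-\delta}$.
\end{itemize}
All three are \emph{upper} bounds on $\theta$. The union bound only needs $(m/k)^d\gg\log n$, which is a lower bound. So choose, e.g., $\theta=\delta/(20(d+1))$; even $m\asymp k(\log n)^{1/d}$ works, giving a polylogarithmic gap. With such a choice the plain Combes--Thomas rate $\mu\sim\eta$ already suffices, and the refined $\sqrt\eta$ version is unnecessary.

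Your tentative choice $\theta=\delta/10$ actually fails for $d>20$, since then $\eta^{-1}\sim n^{d\delta/10}>n^{2\delta}$ and $\sqrt\eta<n^{-\delta}$. Your fallback of shrinking the energy window would not fix this: the distance from $E$ to the spectrum is capped by the gap, which is set by $m$, not by $E$. Shrink $m$ instead.
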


Begin from the Green's function in initial scale, we can prove the corresponding estimate for large scales by the inductive scheme in the multiscale analysis.

\begin{theorem}
\label{thm:msa-short-large}
There are constants $E_*,\kappa_*,\gamma_*,R_*>0$ such that, for every fixed
$E_0\in[0,E_*]$, every dyadic $R\ge R_*$, and every fixed
$a\in\mathbb Z^d$,
\begin{equation}\label{eq:msa-short-large}
 \mathbb P\!\left\{
 \begin{array}{l}
 |(H_{Q_R(a)}-E_0)^{-1}(x,y)|
 \le\exp\bigl(R^{1-\gamma_*}-\gamma_*|x-y|\bigr),
 \quad\forall \ x,y\in Q_R(a)
 \end{array}\right\}
 \ge1-R^{-\kappa_*}.
\end{equation}
\end{theorem}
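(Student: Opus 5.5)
The plan is to run the multiscale induction of \cite{BK05,DS20,LZ22}, essentially as in \cite[Section 3]{LZ22}, with Theorem~\ref{thm:wegner} replacing the Wegner step there. Fix $0<\delta\ll\varepsilon\ll\varepsilon_*$ and a rapidly increasing sequence of dyadic scales $L_{k+1}\approx L_k^{1/(1-\varepsilon/2)}$, so that six consecutive scales satisfy \eqref{eq:six-adjacent-scales}. The inductive hypothesis at scale $L_k$ has three parts:

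\begin{itemize}
\item[(i)] a graded frozen set $F^{(k)}$, a discrete $(N_F,\varepsilon,\boldsymbol\rho^{(k)})$-graded set built from the bad cubes of all previous scales;
\item[(ii)] for every $E_0\in[0,E_*]$ and every cube $Q_{L_k}(a)$, outside an event of probability at most $L_k^{-\kappa}$ that depends only on the frozen randomness, the Green's function is \emph{good}: $\|G\|\le e^{L_k^{1-\gamma}}$ and it has exponential off-diagonal decay;
\item[(iii)] the goodness in (ii) is robust for every continuous extension $V'\in[0,1]$ of the non-frozen sites.
\end{itemize}

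The base case comes from Theorem~\ref{thm:msa-initial}. There the frozen sites are the sublattice $\lceil\varepsilon^{-1}\rceil\Z^d$, which forms $\mcF_0$, the energies satisfy $E\lesssim n^{-d\delta/10}$, and the decay rate $n^{-\delta}$ is then bootstrapped up to a fixed rate. This is where $E_*$ is fixed.

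For the inductive step, we cover $Q_{L_{k+1}}(a)$ by cubes of scale $L_k$. With high probability there are at most $N_1$ pairwise far-apart bad $L_k$-cubes, by independence and the bound $L_k^{-\kappa}$ at distance. These bad cubes, enlarged to scale $L_{k-2}\sim L_3$, form $\mcD$, and the bad regions themselves are added as a new layer $\mcF_{k}$ of the graded set with radius $\rho\sim L_{k-2}$. Inside $\mcD$ we build, using the resolvent identity and the good cubes around it, a set $G$ of size $<L_0^\delta$ on which every approximate eigenfunction concentrates, and we show $\|u\|_{\ell^2(Q\setminus\mcD)}\le e^{-L_4}\|u\|$. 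Together these give hypothesis (7) of Theorem~\ref{thm:wegner}. The Wegner bound $\P(\|(H_Q-E_0)^{-1}\|>e^{L_1})\le L_0^{-d/2+10d\varepsilon}$ then holds conditionally on the frozen values. Combined with the standard geometric resolvent expansion, which turns good small cubes and a bounded resolvent into exponential decay, this yields goodness at scale $L_{k+1}$ with probability at least $1-L_{k+1}^{-\kappa}$. The probability bookkeeping uses $N_1$ large, the decay $L^{-\kappa}$ to the power $N_1$, and $\kappa<d/2-10d\varepsilon$.

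To pass to arbitrary dyadic $R$, we note that $R$ lies between consecutive scales of the sequence, and the scheme is run with a family of starting scales so that every dyadic $R\ge R_*$ appears as some $L_k$. The frozen set is then averaged out, since the final statement concerns the full randomness and the bounds hold conditionally and uniformly. This gives \eqref{eq:msa-short-large}, with $\gamma_*$ the decay rate minus the subexponential loss $R^{1-\gamma_*}$.

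The main obstacle is verifying the structural hypotheses of Theorem~\ref{thm:wegner}. First, $F$ must remain a graded set that is normal in every $L_3$-cube outside $\mcD$; this requires $\rho_i\le(2\sqrt d L_3)^{1-\varepsilon/2}$, so frozen regions from larger scales must sit inside $\mcD$. Second, hypothesis (7) must hold for \emph{all} continuous potentials. I would handle this as in \cite[Section 3]{LZ22} and \cite[Section 6]{DS20}. The bad set is defined by an event measurable in the frozen sites. The robust support estimate is then obtained from the good Green's functions, which are themselves robust under continuous variation of the free sites by induction hypothesis (iii). Constants are tuned so that $N_F$ stays bounded, with at most six active scales meeting each cube.
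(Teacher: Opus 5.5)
Your proposal follows the paper's proof: the paper obtains Theorem~\ref{thm:msa-short-large} by running the multiscale induction of \cite[Theorem 3.10]{LZ22} verbatim from the initial-scale estimate Theorem~\ref{thm:msa-initial}, changing only the Wegner step \cite[Claim 3.14]{LZ22}, which is replaced by Theorem~\ref{thm:wegner} with the tail $L^{-d/2+10d\varepsilon}$ that you use in your bookkeeping. Your more detailed sketch of that induction fits the scheme except for a scale-indexing slip: the bad subcubes grouped into the $L_3$-cubes forming $\mcD$ must live at a scale below $L_3$, not at $L_k\sim L_1$, and this does not change the route.
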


With Theorem \ref{thm:wegner} and Theorem \ref{thm:msa-initial} in hand, the proof of Theorem \ref{thm:msa-short-large} can be found and checked verbatim as the proof of \cite[Theorem 3.10]{LZ22}. The only modification during the proof is that the tail probability of \cite[Claim 3.14]{LZ22} should be replaced by 
$1-L^{-d/2+10d\varepsilon}$ as \eqref{eq:six-wegner-bound} in our Wegner estimate, Theorem \ref{thm:wegner}. We omit the details.
Theorem \ref{thm:localization-near-the-edge-band} then follows from Theorem \ref{thm:msa-short-large}, by \cite[Theorem 1]{RZ}.

\appendix

\section{Uniform discrete unique continuation}
\label{appendix:uniform-duc}

In this appendix, we will strengthen Theorem \ref{thm:duc-linear} to a type that is uniform about the disorder strength $\lambda$ and the energy $E$ in the equation $Hu=Eu$. Define the parameter space by 
\begin{equation}\label{eq:parameter-space-uniform}
  \Theta(\lambda_-,\lambda_+,E_+):=\left\{  (\lambda,E): 0<\lambda_{-}\leq \lambda \leq\lambda_{+},|E|\leq E_+ \right\}.
\end{equation}
Our result is as follows. Here we do not consider the involvement of graded set $F$ (this is because we will use Theorem \ref{thm:deter-duc-d-geq-4}, which is not a graded set type), i.e. one can view that we take $F=\emptyset$.

\begin{theorem}[Uniform PDUC with no graded set]\label{thm:uniform-duc-linear}
Fix $d\ge 3$, $\lambda_+>\lambda_->0$, $E_+>0$, $K\ge0$. There are constants
$C,c,R_0>0$, depending only on
$d,\lambda_+,\lambda_-,E_+,K$, such that the following holds:
Assume $R\ge R_0$. Let the $V_x$, $x\in Q_{R-1}$,
be i.i.d Bernoulli random variables with
\[\P(V_x=0)=\P(V_x=1)=\frac{1}{2},\ \forall \ x\in Q_{R-1} .\]
Then there is
a event $\mathcal E_R \subset \{0,1\}^{Q_{R-1}}$ only concerning about randomness in $Q_{R-1}$, with
\begin{equation}\label{eq:uniform-fsa-probability}
\mathbb P\bigl(\mathcal E_R \bigr)
\ge
1-\exp\left[
-c\left(\frac{R}{\log R}\right)^{(d-1)/d}
\right].
\end{equation}
On this event, simultaneously for every $(\lambda,E)\in \Theta(\lambda_-,\lambda_+,E_+)$ and every (complex)
$u:Q_R\to\mathbb C$ solving $Hu=Eu$ on $Q_{R-1}$ and satisfying
\[
u(0)\ne0,\qquad
\|u\|_{\infty,Q_R}\le e^{KR}|u(0)|.
\]
one has
\begin{equation}\label{eq:uniform-fsa--count}
\#\left\{
x\in Q_R:
|u(x)|\ge e^{-CR}|u(0)|
\right\}
\ge \frac{cR^d}{(\log R)^d}.
\end{equation}
\end{theorem}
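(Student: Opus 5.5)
The plan is to rerun the bootstrap in the proof of Theorem~\ref{thm:duc-linear} with $F=\varnothing$, making it uniform in the parameters $(\lambda,E)\in\Theta(\lambda_-,\lambda_+,E_+)$ by a union bound over a finite net of parameters. The seed of the bootstrap will come from the deterministic DUC (Theorem~\ref{thm:deter-duc-d-geq-4} and \cite{LZ22} in $d=3$) rather than from the cone property. The key observation is that a union over the parameter net costs a factor $e^{O(R)}$ in probability. This cost can only be absorbed in the tail $T_p$ of Lemma~\ref{lem:propagation} if the number of large values at the start already satisfies $m_0^\gamma\gg R$. The cone seed $m_0\gtrsim R/\log R$ does not satisfy this, but a deterministic seed $m_0\ge cR^{\beta}$ with $\beta\gamma>1$ does. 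Examples: $\beta>3/2$ and $\gamma=2/3$ in $d=3$; $\beta=\lceil d/2\rceil$ (or the bound of Theorem~\ref{thm:deter-duc-d-geq-4}) with $\gamma=(d-1)/d$ in $d\ge4$. In each case the seed set carries threshold $e^{-C_{\rm seed}R}$, and $C_{\rm seed}$ is uniform over $\Theta$ because the deterministic DUC only uses boundedness of the potential $\lambda V-E$.

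Concretely, I would keep the scales $r_j,\ell_j,w_j$ and heights $H_j,\delta_j,\tau_j$ from \eqref{eq:fsa-scales}--\eqref{eq:fsa-heights}, with $H_0=e^{-C_{\rm seed}R}$. Here $B$ and $D$ are chosen from Theorem~\ref{thm:reconstruction} uniformly over $\lambda\le\lambda_+$ and $|E|\le E_+$. The reconstruction bounds \eqref{eq:bound-reconstruction-coefficients} hold uniformly on $\Theta$, and the same path expansion \eqref{eq:exact-formula-reconstruct-coefficients} also gives $|\partial_\lambda\mathscr W|\le e^{C_{\rm recon}s}$. Next, fix a grid $\Gamma\subset\Theta$ of mesh $e^{-aR}$, so that $|\Gamma|\le e^{3aR}$. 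For every $\theta_0=(\lambda_0,E_0)\in\Gamma$ and every index $\mathfrak i=(j,k,\nu,\mathcal A)$, Construction~\ref{construction:fc-system} gives a $(\lambda_-,e^{DRw_j})$-linear-martingale system, and Lemma~\ref{lem:propagation} applies with failure probability $e^{-T_p}$. I would now take
\[
T_p=(d+1)p\log R+(d+3)\log R+3aR+\eta\left(\frac{R}{\log R}\right)^\gamma ,
\]
so that the union over $\mathfrak i$ and over $\Gamma$ still has probability at most $\exp(-\eta(R/\log R)^\gamma)$.

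On the complement of this event, take any admissible $(\lambda,E,u)$ and choose $\theta_0\in\Gamma$ within $e^{-aR}$. The approximation step \eqref{eq:diff-solution-2} is repeated using both $\partial_E\mathscr W$ and $\partial_\lambda\mathscr W$, giving $\|u-\tilde u_1\|_{\infty}\le\tau_j/4$ once $a$ is large. From there the argument of Step~3 goes through verbatim, leading to \eqref{eq:baby-recurrence} with an extra loss $\ell_j\,O(R)$ on the right-hand side. This extra loss is absorbed as in \eqref{eq:seed-big-absorb} precisely because $m_j\ge m_0\ge cR^\beta$, so that $m_j^\gamma\ge cR^{\beta\gamma}\gg R$. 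The resulting recurrence $m_{j+1}\gtrsim Rw_j^2m_j^\gamma/\log R$ then iterates exactly as in \eqref{eq:fsa-iteration} to $m_J\gtrsim R^d/(\log R)^d$, with final threshold $e^{-(C_{\rm seed}+Bd)R}$.

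The main obstacle I expect is the seeding step. One must check that the available deterministic DUC really gives an exponent $\beta$ with $\beta\gamma>1$ in every dimension $d\ge3$, with threshold $e^{-CR}$ and not $e^{-CR^2}$, and stated on the box $Q_{r_0}$ under the normalization $\|u\|_{\infty,Q_R}\le e^{KR}|u(0)|$. If the version in Theorem~\ref{thm:deter-duc-d-geq-4} only comes with a worse threshold, I would instead start the annuli at $r_0$ comparable to $R$. I would then pay that threshold once at the innermost scale, choosing $a$ and $B$ afterwards so that all later reconstruction errors remain below $\tau_j/8$. A secondary point is that $d=2$ is excluded because no seed with $\beta\gamma>1$, i.e. $\beta>2$, is available there.
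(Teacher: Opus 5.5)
\textbf{Gap in the $d\ge 4$ seed.} Your overall architecture matches the paper's sketch: a parameter net of mesh $e^{-aR}$ over $\Theta$, $c_0=\lambda_-$, an extra $O(R)$ in $T_p$, absorption of the resulting $\ell_j\,O(R)$ loss via a seed with $m_0^\gamma\gg R$, and the \cite{LZ22} seed in $d=3$. The gap is the seed in $d\ge4$.

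Theorem~\ref{thm:deter-duc-d-geq-4} has threshold $\exp(-C_{d,M}n^2)$, so your claim that the seed carries threshold $e^{-C_{\rm seed}R}$ is false. At $n=r_0\asymp R/\log R$ the threshold is $e^{-cR^2/(\log R)^2}$.

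Your fallback does not repair this:
\begin{itemize}
\item Taking $r_0\asymp R$ and paying $e^{-Cr_0^2}$ once forces every later $H_j$, and the final threshold, to be $e^{-cR^2}$. This contradicts \eqref{eq:uniform-fsa--count}.
\item It also makes $\tau_j=e^{-O(R^2)}$. The parameter net would then need mesh $e^{-O(R^2)}$ and entropy $O(R^2)$. Absorbing that requires $m_0^\gamma\gg R^2$, and the seed $R^{\lceil d/2\rceil}$ fails this in $d=4$, since $2\cdot\tfrac34<2$.
\item Shrinking the seed cube to $n\asymp\sqrt R$, so that $e^{-Cn^2}=e^{-O(R)}$, does not help either. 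It gives only $\beta=\lceil d/2\rceil/2$, and in $d=4$ this means $\beta\gamma=\tfrac34<1$.
\end{itemize}

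\textbf{The missing idea} is the trade-off of Theorem~\ref{thm:deter-duc-d-geq-4-trade-off}:
\begin{enumerate}
\item Use Proposition~\ref{prop:cone-property} to get a cone-type chain of length $\asymp n$ inside $Q_{n/8}$ along which $|u|\ge e^{-C_{\rm cone}n}|u(0)|$.
\item Greedily pick $\gtrsim\sqrt n$ points $y$ on the chain whose first coordinates differ by more than $2\sqrt n$, so the cubes $Q_{\sqrt n}(y)$ are disjoint.
\item In each such cube, apply Theorem~\ref{thm:deter-duc-d-geq-4} at scale $r=\sqrt n$. There its threshold $e^{-Cr^2}=e^{-Cn}$ is affordable.
\end{enumerate}
This gives $\gtrsim n^{(\lceil d/2\rceil+1)/2}/\log n$ sites above $e^{-Cn}|u(0)|$. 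Since $\tfrac12\lceil d/2\rceil+\tfrac12>\tfrac{d}{d-1}$ for every $d\ge4$, taking $n=r_0\asymp R/\log R$ yields $m_0\gg R^{d/(d-1)}$ at threshold $e^{-R}$. That is exactly what \eqref{eq:seed-big-absorb-uniform} needs. With this seed inserted, the rest of your argument goes through as in the paper.
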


The proof of Theorem \ref{thm:uniform-duc-linear} is essentially the same as that of Theorem \ref{thm:duc-linear}, except for a few differences in certain places. The most fundamental difference is that the a priori estimate used to initiate the bootstrap must be stronger than before. This is because it must hold uniformly over a larger set of parameters, which requires us to incur a greater probability loss to compensate for the entropy associated with the parameter net. Consequently, the initial a priori estimate must be strong enough to absorb the larger combinatorial entropy. (This can be seen by comparing \eqref{eq:seed-big-absorb} and \eqref{eq:seed-big-absorb-uniform}.) We discuss only these differences below.

\subsection{Sketch Proof of Theorem \ref{thm:uniform-duc-linear}}
We only point out what is different compared with the original proof of Theorem \ref{thm:duc-linear}.
  Since now the parameter space becomes $\Theta=\Theta(\lambda_-,\lambda_+,E_+)$, we must use the net argument to discretize this region.
  Choose the reconstruction constant $C_{\rm recon}$ and $D$ uniformly for $\lambda\in[\lambda_-,\lambda_+]$ and $E\in[-E_+,E_+]$, using $\lambda_+$ in the coefficient bounds. Take $c_0=\lambda_-$ and $B=2D+8$ throughout. With these uniform choices, use the same exponent $a=D+K+2+2(Bd+1)$ as in the proof of Theorem~\ref{thm:duc-linear}.
  Recall that Theorem \ref{thm:reconstruction}. Similar argument to prove \eqref{eq:bound-reconstruction-coefficients} will also yield
   \begin{equation}\label{eq:bound-on-partial-lambda}
    \sup_{(\lambda,E)\in \Theta,k} \| \partial_\lambda \mathscr{W}(v,w) \|_{\ell^{\infty}(\mathfrak{D}_k\times \mathfrak{B}_k)} \leq e^{C_{\rm recon} s}.
   \end{equation}
   Therefore, if we decompose $\Theta$ into a $\epsilon_R$-net $\Theta_{\rm net}$, that is to say  
   \[\Theta\subset \bigcup_{(\lambda_0,E_0)\in \Theta_{\rm net}}\left\{(\lambda,E):|\lambda-\lambda_0|<\epsilon_R,|E-E_0|<\epsilon_R \right\},\]
   we need to extended the multi-index $\mathfrak{i}$ in \eqref{eq:index-set-of-all-system} with 
   \[\mathfrak{i}=((\lambda_0,E_0),j,k,\nu,\mcA) ,\quad (\lambda_0,E_0)\in \Theta_{\rm net}.\]
   Respectively, \eqref{eq:index-set-of-all-system} becomes
\begin{equation}\label{eq:index-set-of-all-system-uniform}
  {\rm SysIndex}=\left\{\mathfrak{i}: \mathfrak{i} = ((\lambda_0,E_0),j,k,\nu,\mcA)\ {\rm determined \ as\ above}\right\}.
\end{equation}
In turn, for each such fixed $\mathfrak{i}$, the $\tilde{u}_1$ in \ref{eq:diff-solution-2} should be the solution $(-\Delta+\lambda_0 V)u=E_0 u$ with initial data \eqref{eq:u_1-boundary-data}; $\tilde{u}_2$ should be the solution of $(-\Delta+\lambda V)u=Eu$ with the same initial data and 
\[|\lambda-\lambda_0|<\epsilon_R,|E-E_0|<\epsilon_R.\]
Therefore, \eqref{eq:diff-solution-2} becomes 
\begin{align}\label{eq:diff-solution-2-uniform}
\notag  \|\tilde{u}_1-\tilde{u}_2 \|_{\infty,Q_{r_j+\ell_j+1}} 
     &\lesssim_d \epsilon_R R^{d/2} e^{DRw_j} R_{\rm in} \leq
e^{(D+K+2)R}\epsilon_R
=
e^{-2(Bd+1)R} \leq  \tau_j/8 .
\end{align}
Accordingly, take
\[\epsilon_R=e^{-aR},\]
and therefore the cardinality of the net is 
\begin{equation}\label{eq:cardinality-of-net}
  |\Theta_{\rm net}| \lesssim _{\lambda_-,\lambda_+,E_+} \epsilon_{R}^{-2}\leq e^{2(a+1)R}.
\end{equation}
The choice $c_0=\lambda_-$ makes the propagation estimate uniform over the net. Including the net in the union bound, \eqref{eq:prob-absorb-Tp} becomes
\begin{equation}\label{eq:prob-absorb-Tp-uniform}
   \begin{aligned}
 \mathbb P(\mathcal U_R\mid V_F=v_F) &\leq  e^{2(a+1)R} R^{2}\sum_{1\leq p\lesssim_d R^d}
       \exp\bigl((d+1)p\log R-T_p\bigr)\\
 &\le e^{2(a+1)R} R^{d+3}\sup_{1\le p\lesssim_d R^d}
       \exp\bigl((d+1)p\log R-T_p\bigr).
 \end{aligned}
\end{equation}
To absorb the additional entropy in the probability estimate above, correspondingly we need to replace $T_p$ in  \eqref{eq:Tp} by 
\begin{equation}\label{eq:Tp-uniform}
T_p
=
(d+1)p\log R+(d+3)\log R + 2(a+1)R
+\eta\left(\frac{R}{\log R}\right)^\gamma,
\end{equation}
and in turn, most importantly, the recurrence \eqref{eq:baby-recurrence} becomes 
\begin{equation}\label{eq:baby-recurrence-uniform}
   2m_{j+1}\ge c_* \ell_j m_j^\gamma
 -\ell_j\cdot C_* \left(\frac p{w_j}\log R+\log R+R\right)
 -\eta\ell_j\left(\frac{R}{\log R}\right)^\gamma.
\end{equation}
Therefore, just like \eqref{eq:seed-big-absorb}, we must need to ensure that 
\begin{equation}\label{eq:seed-big-absorb-uniform}
   C_*(\log R+R)+\eta \left(\frac{R}{\log R}\right)^\gamma
 \le \frac{c_*}{2}m_j^\gamma.
\end{equation}
If this is ensured, then \eqref{eq:fsa-growth} will again be available, and the bootstrap is valid to prove Theorem \ref{thm:uniform-duc-linear}. By choosing $\eta$ sufficiently small, to provide \eqref{eq:seed-big-absorb-uniform} we only need to refine the beginning of the bootstrap (i.e. \eqref{eq:fsa-seed}) into 
\begin{equation}\label{eq:fsa-seed-uniform}
m_0= \#\{x\in Q_{r_0}:|u(x)|\ge e^{-R}\} \gg R^{1/\gamma}=R^{d/(d-1)},\quad r_0=\left\lfloor\frac{Rw_0}{64d}\right\rfloor\sim \frac{R}{\log R}.
\end{equation}
We will show that \eqref{eq:fsa-seed-uniform} can definitely be fulfilled in next part.
\qed

\subsection{The refinement of the beginning of the bootstrap} 
This part is to show that the beginning of the bootstrap, i.e., \eqref{eq:fsa-seed-uniform}, can be fulfilled for dimension $d\geq 3$. We discuss this as follows.

When $d=3$, then $d/(d-1)=3/2$. We recall the deterministic DUC given by \cite[Theorem 1.3]{LZ22}.

\begin{theorem}
\label{thm:3-d-deter-duc}
There exists a constant $p>\frac{3}{2}$ such that the following
holds. For each $M>0$, there is $C_{M}>0$ such that, for any
sufficiently large $n\in\mathbb{Z}_{+}$ and functions
$u,V:\mathbb{Z}^{3}\to\mathbb{C}$ satisfying
\begin{equation}
    \Delta u=Vu
\end{equation}
in $Q_n$ and $\|V\|_{\infty}\le M$, we have
\begin{equation}
    \#
        \left\{
            a\in Q_n:
            |u(a)|\ge \exp(-C_M n)\,|u(0)|
        \right\}
    \ge n^{p}.
\end{equation}
\end{theorem}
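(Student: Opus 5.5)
Theorem~\ref{thm:3-d-deter-duc} is quoted from \cite[Theorem~1.3]{LZ22} and is only used here as the seed \eqref{eq:fsa-seed-uniform} for $d=3$. The plan below reconstructs the argument of \cite{LZ22} in outline rather than supplying a new proof. Normalize $u(0)=1$ and write the equation as $u(x+e_3)=(6+V_x)u(x)-u(x-e_3)-\sum_{i=1,2,\ \sigma=\pm1}u(x+\sigma e_i)$. The guiding idea is that along suitably tilted layers the equation $\Delta u=Vu$ acts like a Pascal-type recursion. Restricted to a triangular slice of a tetrahedron, $u$ is then an approximate polynomial, and a Remez-type inequality forces many large values. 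Since $p>3/2=d/(d-1)$, a count of order $n^{p}$ is exactly what absorbs the extra net entropy in \eqref{eq:seed-big-absorb-uniform}.

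The steps, in order, are as follows.

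\emph{Step 1 (seed).} Apply the cone property, Proposition~\ref{prop:cone-property}, to obtain a chain of $\gtrsim n$ sites with $|u|\ge e^{-Cn}$. Then pass to the maximum of $|u|$ on a subcube, in the spirit of the maximum trick of Lemma~\ref{lem:duc-block-lemma}. This costs only a factor $e^{-Cn}$.

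\emph{Step 2 (pyramid and tetrahedron reduction).} Rotate to the coordinates in which the lattice splits into layers $x_1+x_2+x_3=t$. Solve the equation in the oriented direction so that $u$ at the apex of a pyramid, or tetrahedron, of height $k$ is an explicit linear combination of the values on its base. Using $\|V\|_\infty\le M$, the coefficients are of binomial type and bounded by $e^{Ck}$. On a two-dimensional triangular slice, the function $x\mapsto u(x)$ then satisfies the oriented-simplex relation $u(a)+u(a+\mathbf e)+u(a+\mathbf e')\approx$ (lower-order terms), up to controlled errors. Iterating this relation shows that $u$ restricted to a triangle of side $k$ is $e^{-ck}$-close (relative to $\max|u|$) to a polynomial of degree $\le k$ in two variables.

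\emph{Step 3 (Remez).} Apply a discrete Remez inequality to this approximate polynomial: if $|u|$ is $\ge e^{-Ck}$ somewhere on a triangle of side $k$, then it is $\ge e^{-C'k}$ on a positive fraction of the points of a sub-triangle, and in particular on at least $k^{1+\theta}$ points for some $\theta>0$.

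\emph{Step 4 (count).} Stack roughly $n^{1/2-\theta'}$ disjoint triangular slices of side $\sim n$ along the chain from Step 1, each seeded by a large value supplied by the cone property. Summing the Step 3 counts gives $\#\{|u|\ge e^{-C_Mn}\}\ge n^{p}$ with $p>3/2$; the exponent is obtained by optimizing the slice size against the number of slices.

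The main obstacle is Step 3 combined with the error propagation of Step 2. The approximation error must be kept below the polynomial's size with only a linear exponential loss $e^{-Cn}$. Otherwise one is pushed to $\alpha>1$ in \eqref{eq:DUCneed}, which is too weak for the seed \eqref{eq:fsa-seed-uniform}. My first attempt would be to use the tetrahedral geometry of \cite{LZ22}, which restricts to slices where the recursion is exactly Pascal's triangle plus a $V$-weighted lower-order term. I would control that term by induction on the slice size, applying the Remez inequality at dyadic sub-scales so that the constants compound only geometrically.
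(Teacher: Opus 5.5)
The paper gives no proof of this statement. It is imported from \cite[Theorem~1.3]{LZ22} and used only to seed the uniform bootstrap of the appendix. Read as a reconstruction of that proof, your outline has a genuine gap in Steps~2--3.

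The claim that $u$ on a triangle of side $k$ is close to a bivariate polynomial of degree $\le k$ carries no information. The grid $\{(a,b)\in\Z_{\ge0}^2:a+b\le k\}$ has $\binom{k+2}{2}$ nodes, which is exactly the dimension of that polynomial space, and interpolation on it is unisolvent. So every function on the grid is such a polynomial, and a Lagrange basis element (equal to $1$ at one node and $0$ at the others) already defeats the Remez conclusion you want.

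Nor is the right side of $\sum_i u(x+e_i)=(6+V_x)u(x)-\sum_i u(x-e_i)$ a lower-order error. It is the data on the two preceding tilted layers, and that data can be prescribed arbitrarily and extended to a solution, because ordering the equations by $x_1$ makes the layer-to-layer systems triangular. So a single slice carries no constraint.

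In fact Step~3 fails for genuine solutions. Take
\[
u(x)=\cos\bigl(2\pi(x_1-x_2)/3\bigr)\mathbf{1}_{\{x_1+x_2=x_3\}},\qquad V=-6 \text{ on } \{x_1+x_2=x_3\},\quad V=0 \text{ elsewhere}.
\]
This solves $\Delta u=Vu$ on $\Z^3$ with $u(0)=1$: a site with $x_1+x_2-x_3=\pm1$ has three neighbours in the plane, they carry the three residues of $x_1-x_2$ mod $3$, and $\sum_{c=0}^{2}\cos(2\pi c/3)=0$. Yet $u$ meets each slice $\{x_1+x_2+x_3=t\}$ in at most one lattice line. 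So a triangle of side $k$ through the origin contains at most $k+1$ large values, never $k^{1+\theta}$. Since this $u$ has only $O(n^2)$ large values in $Q_n$, the $n^{3/2+}$ count cannot come from a superlinear per-slice bound. Your choice of $n^{1/2-\theta'}$ slices in Step~4 is tuned to reach $3/2$ rather than derived.

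What is missing is a mechanism that actually produces low degree, and with it the correct direction of the argument. A Remez inequality helps only when the degree is much smaller than the number of nodes on the line where it is applied. In the oriented-simplex (Pascal) structure this separation appears only once $u$ is known to be small on long stretches of the preceding data. For the exact relation $u(a,b)+u(a+1,b)+u(a,b+1)=0$ one has
\[
(-1)^b u(m-b,b)=\sum_{j=0}^{m}\binom{b}{j}u(m-j,0).
\]
This is a polynomial in $b$ of degree at most $J$ when $u(\cdot,0)$ vanishes on $[0,m-J)$, and of degree up to $m$ otherwise.

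So an argument of the type the paper attributes to \cite{LZ22} has to propagate smallness and run by contradiction. Assume $\#\{x\in Q_n:|u(x)|\ge e^{-C_Mn}|u(0)|\}<n^{p}$. Then locate pyramidal or tetrahedral regions where $u$ is $e^{-Cn}$-small, off a sparse set, on the relevant layers. There the recursion is an approximate Pascal rule with small, sparse forcing, and a one-dimensional Remez inequality on lines where the degree is small spreads the smallness through the region. This is to be played off against the cone chain from $u(0)$ to contradict $|u(0)|=1$. Your outline needs such a smallness-propagation statement in place of Steps~2--4.
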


We take $\sup_{(\lambda,E)\in \Theta}|2d+\lambda V-E|\leq 2d+\lambda_+ +E_+ :=M$. By applying Theorem \ref{thm:3-d-deter-duc}, since $C_M r_0\lesssim R/\log R\leq R$ when $R$ is large, and $p>3/2$, we have 
\[m_0= \#\{x\in Q_{r_0}:|u(x)|\ge e^{-R}\}  \geq \#\{x\in Q_{r_0}:|u(x)|\ge e^{-C_M r_0}\} \geq r_0^p\gg R^{3/2}.
\]
This ensures \eqref{eq:fsa-seed-uniform} for $d=3$.

\medskip

When $d\geq 4$, we need to use the following deterministic DUC given by \cite[Theorem 2.4]{LSZ26quc}.

\begin{theorem}\label{thm:deter-duc-d-geq-4}
Let $d \ge4$ and $M\ge0$.  There exist $c_d, C_{d,M}>0$ such that, for any sufficiently large $n\in \Z_+$,
and functions
$u,V:\mathbb{Z}^{d}\to\mathbb{C}$ satisfying
\begin{equation}
    \Delta u=Vu
\end{equation}
in $Q_n$ and $\|V\|_{\infty}\le M$, we have
\begin{equation}\label{eq:deter-cardinality}
 \#\left\{x\in Q_n:
 |u(x)|\ge
 \exp\!\left(-C_{d,M}n^2 \right)|u(0)|
 \right\}
 \ge c_d\frac{n^{\lceil d/2\rceil}}{\log(2+n)}.
\end{equation}
\end{theorem}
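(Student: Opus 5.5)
The plan is to follow the route of \cite{Li26,LSZ26quc}: reduce the equation $\Delta u=Vu$ to approximate oriented-simplex (``Pascal'') relations on suitable sublattices, and then prove a quantitative uncertainty principle for such relations by induction on the number of coordinate pairs.

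\textbf{Step 1: normalization and the first large value.} By linearity set $u(0)=1$. Write the equation at $x$ as
\[
u(x+e_1)=(2d+V_x)u(x)-u(x-e_1)-\sum_{i\ge2}\bigl(u(x+e_i)+u(x-e_i)\bigr).
\]
Proposition~\ref{prop:cone-property} supplies cone-type chains along which $|u|$ decays at most like $e^{-C_{\rm cone}j}$. Hence every sub-cube of radius comparable to $n$ near the origin contains a point where $|u|\ge e^{-Cn}$. These points serve as seeds for the induction.

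\textbf{Step 2: Pascal relations on coordinate pairs.} Group the coordinates into $k=\lceil d/2\rceil$ pairs $(e_1,e_2),(e_3,e_4),\dots$, with one singleton when $d$ is odd. For each pair, pass to the tilted coordinates $e_{2i-1}\pm e_{2i}$. On a $2$-dimensional slice the equation becomes a triangular (Pascal) recursion, as in the pyramid and tetrahedron reductions of \cite{LZ22}. The other coordinates and the potential enter only as forcing terms.

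The key estimate on one slice is a quantitative one-dimensional uncertainty principle. Along a tilted line of length $\sim n$ carrying one seed of size $\ge e^{-Cn}$, the approximate polynomial structure (Newton differences of the Pascal recursion) together with a Remez-type inequality should give $\gtrsim n/\log n$ points on a parallel line where $|u|\ge e^{-C'n}|u(\text{seed})|$. This holds unless the line is dominated by the forcing terms coming from the neighbouring slices, in which case those neighbours themselves carry large values. The $\log$ loss comes from the pigeonholing inside the Remez step.

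\textbf{Step 3: induction over the pairs.} Apply Step 2 successively to the $k$ pairs. Each large point found at stage $i$ is used as a seed for stage $i+1$, in a direction orthogonal to all previous ones. This produces a product-like set of cardinality $\gtrsim n^{k}/\log n$; arranging the $\log$ losses to enter only once requires a careful dyadic choice of line lengths. Each stage costs a factor $e^{-Cn}$. The forcing-dominated alternative, however, can force $\sim n$ restarts, each costing another $e^{-Cn}$, and this is why the threshold degrades to $e^{-C_{d,M}n^2}$. Keeping all stages inside $Q_n$ and the bookkeeping of constants are routine.

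\textbf{Main obstacle.} The hardest step will be the stability of the Pascal uncertainty principle of \cite{Li26} under the error terms generated by the potential and the transverse neighbours. The exact support bound uses exact vanishing, whereas here one only has smallness relative to $e^{-Cn}$. My first attempt would be to express the relevant finite differences as explicit combinations with coefficients bounded by $e^{O(n)}$, and to show that if all but $o(n/\log n)$ values on a line are below threshold, then the $n$-th order difference is both tiny and bounded below by the seed. Errors from one level may amplify by $e^{Cn}$ at the next, and over $\sim n$ such levels this compounding is exactly what yields the quadratic exponent $n^2$.
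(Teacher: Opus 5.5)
The paper does not prove this statement. It is quoted from \cite[Theorem 2.4]{LSZ26quc}, which makes quantitative the oriented-simplex (Pascal) uncertainty principle of \cite{Li26}. Your first sentence names that route, but Steps 2--3 carry out a different scheme: a slice-by-slice product over a fixed pairing of coordinates. That scheme has a genuine gap at the product step.

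Step 2 only gives a dichotomy. Either the slice $y+\operatorname{span}(e_{2i-1},e_{2i})$ through a seed carries many large values, or some transverse neighbour carries a large value. Step 3 multiplies counts as if the first branch always held, and handles the second by ``restarts''. But a restart yields only the $O(1)$ new seeds the dichotomy guarantees, and nothing makes the large sets grown from different seeds disjoint.

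This branch is not exceptional. In $d=4$ take $u(x)=f(x_1,x_3)\,g(x_2,x_4)$, where $f=g$ is the Jitomirskaya-type solution on $\Z^2$ with $f(a,-a)=(-1)^a$ on the diagonal $\{a+b=0\}$, zero elsewhere, and potential $-4$ on the diagonal. Then $\Delta u=(V_1+V_2)u$ with $|V_1+V_2|\le 8$, and $u$ has $\sim n^2$ nonzero values in $Q_n$. Yet every slice $y+\operatorname{span}(e_1,e_2)$ or $y+\operatorname{span}(e_3,e_4)$ meets the support in at most one point. So the good branch of Step 2 never fires, each restart gains one point, and all $n^2$ points must come from the branch your argument does not control.

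More generally, Steps 2--3 use only ``a slice through a seed contributes $\gtrsim n$ points unless forcing-dominated''. The same reasoning with $d$ one-dimensional slices would give $n^{d}$: absent forcing, the three-term transfer-matrix recursion gives $\gtrsim n$ values above $e^{-Cn}|u(y)|$. That contradicts the sparse solutions of \cite{Jit07,Li26support}. So the exponent $\lceil d/2\rceil$ cannot come out of such a product scheme without an additional idea specific to the pairing.

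The slice estimate in Step 2 is also not available as stated. On a two-dimensional slice, the transverse terms $\sum_{i\ge3}(u(x+e_i)+u(x-e_i))$ and the potential term $V_xu(x)$ are as large as the slice values. The slice recursion is therefore not an approximate Pascal recursion, and Newton-difference or Remez bounds do not apply to it. In a contrapositive argument, the equation at $x$ becomes a potential-free approximate relation $\sum_{y\sim x}u(y)\approx0$ only where $u(x)$ is already small. One must find relations that close up on a suitable region without free transverse data; this is the role of the oriented-simplex relations in \cite{Li26,LSZ26quc}. One then proves the uncertainty principle with the sharp exponent $\lceil d/2\rceil$ directly for that closed system, together with its stability under small errors. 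That is the whole content of the theorem, and your proposal defers it to the ``main obstacle''. Likewise, attributing the threshold $e^{-Cn^2}$ to ``$\sim n$ restarts'' is not derived from anything in your argument.
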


Theorem \ref{thm:deter-duc-d-geq-4} can not be directly applied to deduce \eqref{eq:fsa-seed-uniform}, since the lower bound in \eqref{eq:deter-cardinality} is $\exp \left(-C_{d,M}n^2 \right)$, not the $e^{-O(n)}$ order.
Therefore we need to do a trade-off about Theorem \ref{thm:deter-duc-d-geq-4}. 
The trade-off argument will again use the cone property, Proposition \ref{prop:cone-property} to construct a sufficiently long cone-type chain and apply the original DUC for plenty of cubes centered on this cone-type chain. Such idea is first used in 
\cite[Section 6]{LZ22}.

\begin{theorem}[Deterministic DUC for $d\geq 4$, trade-off]\label{thm:deter-duc-d-geq-4-trade-off}
Let $d \ge4$ and $M\ge0$.  There exist $c_d, C_{d,M}>0$ such that, for any sufficiently large $n\in \Z_+$,
and functions
$u,V:\mathbb{Z}^{d}\to\mathbb{C}$ satisfying
\begin{equation}
    \Delta u=Vu
\end{equation}
in $Q_n$ and $\|V\|_{\infty}\le M$, we have
\begin{equation}\label{eq:deter-cardinality-trade-off}
 \#\left\{x\in Q_n:
 |u(x)|\ge
 \exp\!\left(-C_{d,M}n \right)|u(0)|
 \right\}
 \ge c_d\frac{n^{\lceil d/2\rceil/2+1/2}}{\log(2+n)}.
\end{equation}
\end{theorem}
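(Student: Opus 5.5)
We plan to trade the quadratic-exponent estimate of Theorem~\ref{thm:deter-duc-d-geq-4} at a smaller scale against the linear-exponent cone property at the full scale. Normalize $u(0)=1$ and fix a small side length $k=\lfloor c_1\sqrt n\rfloor$, with $c_1$ to be chosen. If we apply Theorem~\ref{thm:deter-duc-d-geq-4} on a cube $Q_k(a)$ centred at a point $a$, it yields at least $c_d k^{\lceil d/2\rceil}/\log(2+k)$ sites $x\in Q_k(a)$ with $|u(x)|\ge e^{-C_{d,M}k^2}|u(a)|$. Since $k^2\lesssim c_1^2 n$, this loss is $e^{-O(n)}$. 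The question is therefore how to produce many disjoint such cubes whose centres already carry values $|u(a)|\ge e^{-O(n)}$.

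For the centres we use Proposition~\ref{prop:cone-property}. Starting at the origin in direction $e_1$, it gives a cone-type chain $P=(x_0,\dots,x_m)\subset Q_{n/2}$ with $m=\lfloor n/8\rfloor$ and $|u(x_j)|\ge e^{-C_{\rm cone}j}\ge e^{-C_{\rm cone}n}$. By \eqref{eq:dirft-of-cone-chain}, the first coordinate increases by at least $1$ and at most $2$ at each step. Take the subsequence $a_i=x_{j_i}$ with $j_i=3ik$, for $0\le i\le m/(3k)$. Consecutive centres then differ by at least $3k>2k+1$ in the first coordinate, so the cubes $Q_k(a_i)$ are pairwise disjoint. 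Moreover, $|a_i|_\infty\le n/2+2m\le 3n/4$, so $Q_k(a_i)\subset Q_{n-1}$ for large $n$, and the equation holds on each of these cubes. The number of cubes is $\gtrsim m/k\gtrsim n/\sqrt n=\sqrt n$.

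Apply Theorem~\ref{thm:deter-duc-d-geq-4}, translated to centre $a_i$, on each $Q_k(a_i)$; the hypothesis $\|V\|_\infty\le M$ is unchanged by translation. The sites obtained satisfy $|u(x)|\ge e^{-C_{d,M}c_1^2 n - C_{\rm cone}n}=e^{-C'n}$. Because the cubes are disjoint, we may sum the counts:
\[
\#\{x\in Q_n:|u(x)|\ge e^{-C'n}\}\gtrsim \sqrt n\cdot\frac{(c_1\sqrt n)^{\lceil d/2\rceil}}{\log(2+n)}\gtrsim \frac{n^{\lceil d/2\rceil/2+1/2}}{\log(2+n)},
\]
which is \eqref{eq:deter-cardinality-trade-off} after renaming constants.

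The main obstacle is bookkeeping rather than ideas. Theorem~\ref{thm:deter-duc-d-geq-4} needs $k$ "sufficiently large"; this holds since $k\sim\sqrt n$. The statement uses $\Delta u=Vu$, which in \eqref{eq:pointwise-equation} form corresponds to a bounded coefficient, and the constant $C_{\rm cone}$ of Proposition~\ref{prop:cone-property} must be taken for bounded potentials of size $M$, not just for $I,\lambda$. Its proof, via the pointwise equation, gives such a constant depending only on $d,M$, so we will invoke that version. The choice $k\sim\sqrt n$ is forced: it is the largest side length for which the quadratic loss $k^2$ stays linear in $n$, and it balances the number of cubes against the count per cube.
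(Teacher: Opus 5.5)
Your proposal is correct and is essentially the paper's proof. A cone-type chain from the origin supplies $\gtrsim\sqrt n$ points with $|u|\ge e^{-O(n)}$ whose first coordinates are separated by more than $2k$, with $k\sim\sqrt n$; Theorem~\ref{thm:deter-duc-d-geq-4} applied on the disjoint cubes of radius $k$ around them has only a quadratic loss $e^{-O(k^2)}=e^{-O(n)}$, and summing the counts gives the bound. Your differences are cosmetic (centres at fixed index spacing $3k$ rather than chosen greedily), and your remarks that $C_{\rm cone}$ must depend on $d,M$ and that the cubes lie in $Q_{n-1}$ make explicit what the paper leaves implicit.
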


\begin{proof}
We first normalize $u(0)=1$ and let $r=\lfloor\sqrt n \rfloor$.
Proposition \ref{prop:cone-property} produces a directed path $P=(x_0,\ldots,x_m)$ in direction $e_1$ with length
$m=\lfloor n/32\rfloor$ inside $Q_{n/8}$ and
\begin{equation}\label{eq:trade-off-path}
 |u(x_j)|\ge e^{-C_{\rm cone}n}\qquad {\rm for\ every }\ 0\le j\le m.
\end{equation}

Since the cone-type chain $P$ has direction $e_1$, we can greedily choose a series of points in 
$P$
whose first
coordinates are separated by more than $2r$. Those points will be at least
$m/(2r)>r/200$ many. We denote the union of them by the set $P_{\rm separated}$.
It's important to see that 
\[Q_{r}(x) \cap Q_r(x')=\emptyset,\ \forall \ x\neq x'\in P_{\rm separated}.\]

Now we applied Theorem \ref{thm:deter-duc-d-geq-4} for each $Q_r(y)$ centered at point $y\in P_{\rm separated}$. This gives
\begin{equation}\label{eq:transversal-in-each-small-cube}
 \#\{x\in Q_r(y) :| u(x)|\ge e^{-C'r^2}| u(y)|\}
 \ge\frac{c' r^{\lceil d/2\rceil}}{\log (r+2)},\quad \forall \ y\in P_{\rm separated}.
\end{equation}
Here $C',c'$ are corresponding constants in Theorem \ref{thm:deter-duc-d-geq-4}.
Since for each $y\in P_{\rm separated}$, since we have $|u(y)|>e^{-C_{\rm cone}r^2}$ by \eqref{eq:trade-off-path}. Thus 
\begin{equation}\label{eq:transversal-in-each-small-cube-for-exactly-solu}
 \#\{x\in Q_r(y):|u(x)|\ge e^{-C' r^2}| u(y)| \geq  e^{-(C'+C_{\rm cone} )r^2}\}
 \ge\frac{c' r^{\lceil d/2\rceil}}{\log (r+2)}
\end{equation}
for every $y\in P_{\rm separated}$. We take $C_{d,M}=C'+C_{\rm cone}$. Finally, use $\# P_{\rm separated}>r/200$, \eqref{eq:transversal-in-each-small-cube-for-exactly-solu} and the disjointness of $Q_r(y),y\in P_{\rm separated}$, we have 
\begin{align*}
  \#\{x\in Q_n:|u(x)| \geq  e^{-C_{d,M} r^2}\geq  e^{-C_{d,M}n} \} & \geq \sum_{y\in P_{\rm separated}} \#\{x\in Q_r(y):|u(x)| \geq  e^{-Cn}\} \\
                 & \geq \frac{r}{200}\cdot \frac{c' r^{\lceil d/2\rceil}}{\log (r+2)} \gtrsim \frac{n^{\lceil d/2\rceil/2+1/2}}{\log(2+n)}.
\end{align*}
This proves \eqref{eq:deter-cardinality-trade-off}.
\end{proof}
Finally, by applying Theorem \ref{thm:deter-duc-d-geq-4-trade-off} with $n=r_0$, \eqref{eq:fsa-seed-uniform} will be ensured by \eqref{eq:deter-cardinality-trade-off}, since 
\[\frac{1}{2}\left\lceil \frac{d}{2}\right\rceil +\frac{1}{2}>\frac{d}{d-1}\]
when $d\geq 4$.

\newcommand{\etalchar}[1]{$^{#1}$}


\begin{thebibliography}{DYYY25b}

\bibitem[AGKW09]{AGKW09}
Michael Aizenman, Fran{\c{c}}ois Germinet, Abel Klein, and Simone Warzel.
\newblock On {Bernoulli} decompositions for random variables, concentration bounds, and spectral localization.
\newblock {\em Probab. Theory Related Fields}, 143(1-2):219--238, 2009.

\bibitem[AL25]{AL25}
Amol Aggarwal and Patrick Lopatto.
\newblock Mobility edge for the {Anderson} model on the {Bethe} lattice.
\newblock {\em arXiv preprint arXiv:2503.08949}, 2025.

\bibitem[AM93]{AM93}
Michael Aizenman and Stanislav Molchanov.
\newblock Localization at large disorder and at extreme energies: an elementary derivation.
\newblock {\em Comm. Math. Phys.}, 157(2):245--278, 1993.

\bibitem[And58]{And58}
Philip Warren Anderson.
\newblock Absence of diffusion in certain random lattices.
\newblock {\em Phys. Rev. (2)}, 109(5):1492--1505, 1958.

\bibitem[BDF{\etalchar{+}}19]{bucaj2019localization}
Valmir Bucaj, David Damanik, Jake Fillman, Vitaly Gerbuz, Tom VandenBoom, Fengpeng Wang, and Zhenghe Zhang.
\newblock Localization for the one-dimensional {Anderson} model via positivity and large deviations for the {Lyapunov} exponent.
\newblock {\em Trans. Amer. Math. Soc.}, 2019.

\bibitem[BGV15]{BGV15}
Ilia Binder, Michael Goldstein, and Mircea Voda.
\newblock On fluctuations and localization length for the {Anderson} model on a strip.
\newblock {\em J. Spectr. Theory}, 5(1):193--225, 2015.

\bibitem[BK05]{BK05}
Jean Bourgain and Carlos E.~Kenig.
\newblock On localization in the continuous {A}nderson-{B}ernoulli model in higher dimension.
\newblock {\em Invent. Math.}, 161(2):389--426, 2005.

\bibitem[BK13]{BK13}
Jean Bourgain and Abel Klein.
\newblock Bounds on the density of states for {S}chr\"{o}dinger operators.
\newblock {\em Invent. Math.}, 194(1):41--72, 2013.

\bibitem[BLMS22]{BLMS}
Lev Buhovsky, Alexander Logunov, Eugenia Malinnikova, and Mikhail Sodin.
\newblock A discrete harmonic function bounded on a large portion of {$\Bbb Z^2$} is constant.
\newblock {\em Duke Math. J.}, 171(6):1349--1378, 2022.

\bibitem[Bou04]{Bou04}
Jean Bourgain.
\newblock On localization for lattice {S}chr\"{o}dinger operators involving {B}ernoulli variables.
\newblock In {\em Geometric aspects of functional analysis}, volume 1850 of {\em Lecture Notes in Math.}, pages 77--99. Springer, Berlin, 2004.

\bibitem[Bou05]{bourgain2005anderson}
Jean Bourgain.
\newblock {Anderson--Bernoulli} models.
\newblock {\em Mosc. Math. J.}, 5(3):523--536, 2005.

\bibitem[Bou13]{Bou13Strip}
Jean Bourgain.
\newblock A lower bound for the {Lyapounov} exponents of the random {Schr\"odinger} operator on a strip.
\newblock {\em J. Stat. Phys.}, 153(1):1--9, 2013.

\bibitem[BRCG25]{BAW25}
Ahmed Bou-Rabee, William Cooperman, and Shirshendu Ganguly.
\newblock Unique continuation on planar graphs.
\newblock {\em Discrete Anal.}, pages Paper No. 16, 12, 2025.

\bibitem[BYY20]{bourgade2020random}
Paul Bourgade, Horng-Tzer Yau, and Jun Yin.
\newblock {Random Band Matrices in the Delocalized Phase I: Quantum Unique Ergodicity and Universality}.
\newblock {\em Comm. Pure Appl. Math.}, 73(7):1526--1596, 2020.

\bibitem[CKM87]{CKM87}
Ren\'{e} Carmona, Abel Klein, and Fabio Martinelli.
\newblock Anderson localization for {B}ernoulli and other singular potentials.
\newblock {\em Comm. Math. Phys.}, 108(1):41--66, 1987.

\bibitem[CPSS24]{CPSS24}
Giorgio Cipolloni, Ron Peled, Jeffrey Schenker, and Jacob Shapiro.
\newblock Dynamical localization for random band matrices up to {$W\ll N^{1/4}$}.
\newblock {\em Comm. Math. Phys.}, 405(3):Paper No. 82, 2024.

\bibitem[CS22]{CS22}
Nixia Chen and Charles K.~Smart.
\newblock Random band matrix localization by scalar fluctuations.
\newblock {\em arXiv preprint arXiv:2206.06439}, 2022.

\bibitem[Dro25]{Dro25}
Reuben Drogin.
\newblock Localization of one-dimensional random band matrices.
\newblock {\em arXiv preprint arXiv:2508.05802}, 2025.

\bibitem[DS20]{DS20}
Jian Ding and Charles K.~Smart.
\newblock Localization near the edge for the {A}nderson {B}ernoulli model on the two dimensional lattice.
\newblock {\em Invent. Math.}, 219(2):467--506, 2020.

\bibitem[DSS02]{damanik2002localization}
David Damanik, Robert Sims, and G\"{u}nter Stolz.
\newblock Localization for one-dimensional, continuum, {Bernoulli-Anderson} models.
\newblock {\em Duke Math. J.}, 114(1):59--100, 2002.

\bibitem[DYYY25a]{DFYYY25}
Sofiia Dubova, Fan Yang, Horng-Tzer Yau, and Jun Yin.
\newblock Delocalization of non-mean-field random matrices in dimensions {$d\ge 3$}.
\newblock {\em arXiv preprint arXiv:2507.20274}, 2025.

\bibitem[DYYY25b]{DKYYY25}
Sofiia Dubova, Kevin Yang, Horng-Tzer Yau, and Jun Yin.
\newblock Delocalization of two-dimensional random band matrices.
\newblock {\em arXiv preprint arXiv:2503.07606}, 2025.

\bibitem[EGH25]{EGH25}
Dor Elboim, Antoine Gloria, and Felipe Hern\'{a}ndez.
\newblock Diffusivity of the {L}orentz mirror walk in high dimensions.
\newblock {\em arXiv preprint arXiv:2505.01341}, 2025.

\bibitem[ER25]{ER25}
L\'{a}szl\'{o} Erd\H{o}s and Volodymyr Riabov.
\newblock The zigzag strategy for random band matrices.
\newblock {\em arXiv preprint arXiv:2506.06441}, 2025.

\bibitem[FMSS85]{frohlich1985constructive}
J\"{u}rg Fr{\"o}hlich, Fabio Martinelli, Elisabetta Scoppola, and Thomas Spencer.
\newblock Constructive proof of localization in the {Anderson} tight binding model.
\newblock {\em Comm. Math. Phys.}, 101(1):21--46, 1985.

\bibitem[FS83]{FS83}
J\"{u}rg Fr\"{o}hlich and Thomas Spencer.
\newblock Absence of diffusion in the {A}nderson tight binding model for large disorder or low energy.
\newblock {\em Comm. Math. Phys.}, 88(2):151--184, 1983.

\bibitem[GK12]{germinet2012comprehensive}
Fran{\c{c}}ois Germinet and Abel Klein.
\newblock A comprehensive proof of localization for continuous {Anderson} models with singular random potentials.
\newblock {\em J. Eur. Math. Soc. (JEMS)}, 15(1):53--143, 2012.

\bibitem[GMP77]{goldsheid1977pure}
Ilya Ya.~Goldsheid, Stanislav A.~Molchanov, and Leonid A.~Pastur.
\newblock A pure point spectrum of the stochastic one-dimensional {Schr\"{o}dinger} operator.
\newblock {\em Funct. Anal. Appl.}, 11(1):1--8, 1977.

\bibitem[Gol22]{Gol22}
Michael Goldstein.
\newblock Fluctuations and localization length for random band {GOE} matrix.
\newblock {\em arXiv preprint arXiv:2210.04346}, 2022.

\bibitem[HMY24]{HMY24}
Jiaoyang Huang, Theo McKenzie, and Horng-Tzer Yau.
\newblock {Ramanujan} property and edge universality of random regular graphs.
\newblock {\em arXiv preprint arXiv:2412.20263}, 2024.

\bibitem[Hur26a]{Hur26a}
Omar Hurtado.
\newblock Localization and unique continuation for non-stationary {Schr\"odinger} operators on the 2d lattice.
\newblock {\em Comm. Math. Phys.}, 407(4):64, 2026.

\bibitem[Hur26b]{Hur26b}
Omar Hurtado.
\newblock Localization for non-stationary {A}nderson models in three dimensions.
\newblock {\em arXiv preprint arXiv:2603.17810}, 2026.

\bibitem[Imb21]{Imb21}
John Z.~Imbrie.
\newblock Localization and eigenvalue statistics for the lattice {A}nderson model with discrete disorder.
\newblock {\em Rev. Math. Phys.}, 33(8):Paper No. 2150024, 50, 2021.

\bibitem[Jit07]{Jit07}
Svetlana Jitomirskaya.
\newblock Ergodic {S}chr\"{o}dinger operators (on one foot).
\newblock In {\em Spectral theory and mathematical physics: a {F}estschrift in honor of {B}arry {S}imon's 60th birthday}, volume~76 of {\em Proc. Sympos. Pure Math.}, pages 613--647. Amer. Math. Soc., Providence, RI, 2007.

\bibitem[JZ19]{JZ19}
Svetlana Jitomirskaya and Xiaowen Zhu.
\newblock Large deviations of the {L}yapunov exponent and localization for the 1{D} {A}nderson model.
\newblock {\em Comm. Math. Phys.}, 370(1):311--324, 2019.

\bibitem[Kir08]{Kir08}
Werner Kirsch.
\newblock An invitation to random {S}chr\"{o}dinger operators.
\newblock In {\em Random {S}chr\"{o}dinger operators}, volume~25 of {\em Panor. Synth\`eses}, pages 1--119. Soc. Math. France, Paris, 2008.
\newblock With an appendix by Fr\'{e}d\'{e}ric Klopp.

\bibitem[Kle98]{Kle98}
Abel Klein.
\newblock Extended states in the {A}nderson model on the {B}ethe lattice.
\newblock {\em Adv. Math.}, 133(1):163--184, 1998.

\bibitem[Kry24]{Kry24}
Stanislav Krymskii.
\newblock On the lowest possible dimension of supports of solutions to the discrete {Schr\"odinger} equation.
\newblock {\em arXiv preprint arXiv:2401.02800}, 2024.

\bibitem[KS80]{kunz1980spectre}
Herv\'{e} Kunz and Bernard Souillard.
\newblock Sur le spectre des op{\'e}rateurs aux diff{\'e}rences finies al{\'e}atoires.
\newblock {\em Comm. Math. Phys.}, 78(2):201--246, 1980.

\bibitem[Li21]{Li21}
Linjun Li.
\newblock On the {M}anhattan pinball problem.
\newblock {\em Electron. Commun. Probab.}, 26:Paper No. 25, 11, 2021.

\bibitem[Li22]{Li22}
Linjun Li.
\newblock Anderson-{B}ernoulli localization at large disorder on the 2{D} lattice.
\newblock {\em Comm. Math. Phys.}, 393(1):151--214, 2022.

\bibitem[Li26a]{Li26}
Linjun Li.
\newblock Discrete unique continuation on simplex.
\newblock {\em arXiv preprint arXiv:2608.02707}, 2026.

\bibitem[Li26b]{Li26support}
Linjun Li.
\newblock On support cardinality for the discrete {Schr\"odinger} equation.
\newblock {\em Lett. Math. Phys.}, 116(4):90, 2026.

\bibitem[Li26c]{Li26Mirror}
Linjun Li.
\newblock Polynomial bound for the localization length of the {L}orentz mirror model on the {1D} cylinder.
\newblock {\em Electron. Commun. Probab.}, 31:1--14, 2026.

\bibitem[LL26]{LL26}
Suhan Liu and Patrick Lopatto.
\newblock Mobility edge for the {Anderson} model on random regular graphs.
\newblock {\em arXiv preprint arXiv:2603.14230}, 2026.

\bibitem[LSZ25]{LSZ25}
Shihe Liu, Yunfeng Shi, and Zhifei Zhang.
\newblock On localization for the alloy-type {A}nderson-{B}ernoulli model with long-range hopping.
\newblock {\em arXiv preprint arXiv:2508.12714}, 2025.

\bibitem[LSZ26a]{LSZ26}
Shihe Liu, Yunfeng Shi, and Zhifei Zhang.
\newblock Anderson localization for the hierarchical {A}nderson-{B}ernoulli model on $\mathbb{Z}^d$.
\newblock {\em arXiv preprint arXiv:2604.18989}, 2026.

\bibitem[LSZ26b]{LSZ26quc}
Shihe Liu, Yunfeng Shi, and Zhifei Zhang.
\newblock Quantitative unique continuation on simplex and $\mathbb{Z}^n$.
\newblock {\em arXiv preprint arXiv:2608.11602}, 2026.

\bibitem[LSZ26c]{LSZ26b}
Shihe Liu, Yunfeng Shi, and Zhifei Zhang.
\newblock Localization and unique continuation for the {Anderson-Bernoulli} model with long-range hopping on $\mathbb{Z}$.
\newblock {\em arXiv preprint arXiv:2607.03472}, 2026.

\bibitem[LZ22]{LZ22}
Linjun Li and Lingfu Zhang.
\newblock Anderson-{B}ernoulli localization on the three-dimensional lattice and discrete unique continuation principle.
\newblock {\em Duke Math. J.}, 171(2):327--415, 2022.

\bibitem[RZ25]{RZ}
Nishant Rangamani and Xiaowen Zhu.
\newblock Dynamical localization for the singular {Anderson} model in {$\mathbb{Z}^d$}.
\newblock {\em J. Math. Phys.}, 66(2):023505, 2025.

\bibitem[Sch09]{Sch09}
Jeffrey Schenker.
\newblock Eigenvector localization for random band matrices with power law band width.
\newblock {\em Comm. Math. Phys.}, 290(3):1065--1097, 2009.

\bibitem[Shc20]{Shc20}
Tatyana Shcherbina.
\newblock Characteristic polynomials for random band matrices near the threshold.
\newblock {\em J. Stat. Phys.}, 179:920--944, 2020.

\bibitem[SS17]{shcherbina2017characteristic}
Mariya Shcherbina and Tatyana Shcherbina.
\newblock Characteristic polynomials for {1D} random band matrices from the localization side.
\newblock {\em Comm. Math. Phys.}, 351(3):1009--1044, 2017.

\bibitem[SS21]{shcherbina2019universality}
Mariya Shcherbina and Tatyana Shcherbina.
\newblock Universality for {1d} random band matrices.
\newblock {\em Comm. Math. Phys.}, 385(2):667--716, 2021.

\bibitem[XYYY24]{XYYY24}
Changji Xu, Fan Yang, Horng-Tzer Yau, and Jun Yin.
\newblock Bulk universality and quantum unique ergodicity for random band matrices in high dimensions.
\newblock {\em Ann. Probab.}, 52(3):765--837, 2024.

\bibitem[YY25]{YY25}
Horng-Tzer Yau and Jun Yin.
\newblock Delocalization of one-dimensional random band matrices.
\newblock {\em arXiv preprint arXiv:2501.01718}, 2025.

\end{thebibliography}
\end{document}